\def\paragraph{\@startsection{paragraph}{4}%
  \z@\z@{-\fontdimen2\font}%
  {\normalfont\bfseries}}
\title[A homotopy-theoretic model of functional extensionality in $\Eff$]{A homotopy-theoretic model of function extensionality in the effective topos}
\date{\today}
\author{Daniil Frumin$^1$}
\address{${}^1$ ICIS, Radboud University Nijmegen, the Netherlands. E-mail: dfrumin@cs.ru.nl}
\author{Benno van den Berg$^2$}
\address{${}^2$ Institute for Logic, Language and Computation, Universiteit van Amsterdam, P.O. Box 94242, NL, 1090 GE Amsterdam, the Netherlands. E-mail: bennovdberg@gmail.com.}
\newcommand{\leibniz}{\hat{\otimes}}
\newcommand{\sym}{\mathsf{s}}
\newcommand{\tr}{\mathsf{tr}}
\newcommand{\st}{\mathsf{st}}
\newcommand{\rel}{\mathsf{rel}}
\newcommand{\sv}{\mathsf{sv}}
\newcommand{\tot}{\mathsf{tot}}
\newcommand{\N}{\mathbf{N}}
\begin{document}

\maketitle

\begin{abstract}
  We present a way of constructing a Quillen model structure on a full subcategory of an elementary topos, starting with an interval object with connections and a certain dominance.
  The advantage of this method is that it does not require the underlying topos to be cocomplete.
  The resulting model category structure gives rise to a model of homotopy type theory with identity types, $\Sigma$- and $\Pi$-types, and functional extensionality.

  We apply the method to the effective topos with the interval object $\nabla 2$.
  In the resulting model structure we identify uniform inhabited objects as contractible objects, and show that discrete objects are fibrant.
  Moreover, we show that the unit of the discrete reflection is a homotopy equivalence and the homotopy category of fibrant assemblies is equivalent to the category of modest sets.
  We compare our work with the path object category construction on the effective topos by Jaap van Oosten.
\end{abstract}

\section{Introduction}

Any constructive proof implicitly contains an algorithm; realizability makes this algorithm explicit. For instance, realizability shows how from a constructive proof of a statement of the form
\[ \forall x \in \mathbb{N} \, \exists y \in \mathbb{N} \, \varphi(x, y) \]
one can extract an algorithm computing a suitable $y$ given $x$ as input. For this reason, realizability, as invented by Kleene in 1945 \cite{KleeneSC:intint}, has become an important tool in the study of formal systems for constructive mathematics. More recently, it has been used to provide semantics for various type theories, in particular polymorphic type theories for which no set-theoretic models exist. For these more advanced applications a category-theoretic understanding of realizability is essential; indeed, around 1980 Martin Hyland discovered the effective topos, a topos whose internal logic is given Kleene's realizability. In fact, various realizability interpretations exist besides Kleene's original variant, and many of these interpretations can be given a topos-theoretic formulation (for more on this, see \cite{vanOosten:realiz}). Here it has to be understood that these realizability toposes are elementary toposes in the sense of Lawvere and Tierney: they are not
Grothendieck toposes. In particular, realizability toposes are not cocomplete (for instance,
in the effective topos the countable coproduct of $1$ does not exist), a point which will be important for us later.

The purpose of this paper is to make some first steps in applying ideas from realizability to homotopy type theory. Homotopy type theory refers to a recent influx of ideas from abstract homotopy theory and higher category category to type theory. The starting point for these developments is the discovery by Hofmann and Streicher \cite{hofmannstreicher98} that Martin-L\"of's identity type gives every type in type theory the structure of a groupoid; in fact, they give every type the structure of an $\infty$-groupoid as shown in \cite{typesWeakOmGpd,lumsdaine10}. Conversely, types in type theory can be interpreted as $\infty$-groupoids: this is what underlies Voevodsky's interpretation of type theory in simplicial sets, with the types  interpreted as Kan complexes \cite{kapulkinetal12}. Such a Kan complex is both understood as a combinatorial model for the homotopy type of a space (hence "homotopy type theory") and a notion of $\infty$-groupoid. In his proof Voevodsky relies heavily on the fact that the category of simplicial sets carries a Quillen model structure in which the Kan complexes are precisely the fibrant objects. This model structure is also essential for the interpretation of the identity types, as in \cite{HtpyModelsId}.

Homotopy theory not only provides an unexpected interpretation of type theory, but it also gives one a new perspective on some old problems in type theory, such as the mysterious identity types and the problem of extensional constructs \cite{hofmann1995extensional}. In addition, it suggests many new extensions of type theory, such as higher-inductive types and the univalence axiom. In this paper we will focus on a particular consequence of univalence: function extensionality. For more on these exciting new developments we refer to the \cite{hottbook}.

So far realizability has only played a minor role in these developments (some exceptions are \cite{htpyEff,cubicalrealizability}). However, given its prominent place in the study of constructive formal systems it seems quite likely that realizability will be fruitful here as well. Also, the most important questions in homotopy type theory (such as Voevodsky's Main Computational Conjecture) concern its computational behaviour. Since realizability
aims to make the computational content of constructive formal systems explicit, a realizability interpretation of homotopy type theory would help us understand its computational content.

In this paper we make a step in that direction by endowing a subcategory of the effective topos with a Quillen model structure. In fact, in the first half of this paper we show that in any elementary topos equipped with a suitable class of monomorphisms and an interval object one can define three classes of maps  (cofibrations, fibrations and weak equivalences, respectively) such that on the full subcategory of fibrant objects these induce a model structure. This theorem subsumes the classical model structure on simplicial sets in which the fibrant objects are the Kan complexes and it also includes the work on cubical sets by Coquand and others \cite{CCHM}.

This result is inspired by earlier work by Orton and Pitts \cite{pittsAxioms} and Gambino and Sattler \cite{UniformFibrations}, which in turn is based on the work of Cohen, Coquand, Huber, and M\"ortberg \cite{CCHM} and earlier work by Cisinski \cite{Cisinski200243}; indeed, with a few exceptions most steps in our proof of the model structure can be found in these earlier sources. The main innovation  is that we do not assume cocompleteness of the underlying topos, so that our result can be applied to realizability toposes as well. This means that, unlike Cisinski, Gambino and Sattler, we do not rely on the small object argument to build our factorisations.

As we already mentioned, any model structure gives rise to a model of Martin-L\"of's identity types. But they also provide a model for strong $\Sigma$-types and products. Again following Gambino and Sattler, we also show that in our setting one can interpret $\Pi$-types, which moreover satisfy function extensionality. Function extensionality says that  two functions are equal if they give equal outputs on identical inputs, and this is one of those desirable principles  which are valid on the homotopy-theoretic interpretation of type theory, but are unprovable in type theory proper. So in our setting we are able to interpret basic type theory with function extensionality. (Here we, like many authors,
ignore coherence issues related to substitution: for a possible solution, see \cite{lumsdainewarren15}.)

In some more detail, the precise contents of the first few sections are as follows. In Section 2 we recall some important categorical notions (like that of a model structure, a dominance and the Leibniz adjunction) that will be used throughout this paper. In Section 3 we present our axiomatic set-up for building model structures. We define cofibrations, fibrations and (strong) homotopy equivalences in this setting and we establish some basic properties of these classes of maps. This is then used in Section 4 to construct a model structure on the full subcategory of fibrant objects. We also show that the resulting model of type theory interprets extensional $\Pi$-types.

In the second part of this paper we apply this general recipe for constructing model structures to Hyland's effective topos. For our class of monomorphisms we take the class of all monos and for our interval object we take $\nabla 2$. The latter choice is inspired by earlier work by Jaap van Oosten \cite{htpyEff}. It also seems natural, because $\nabla 2$ contains two points, which are, however, computationally indistinguishable
(because they have identical realizers). The analogy with the usual interval $[0,1]$ is that its endpoints are distinct, but homotopy-theoretically indistinguishable.

So in Section 5 we recall some basic facts about the effective topos and check that it fits into our axiomatic framework. In Section 6 we make some progress in characterizing contractible objects and maps in the effective topos and we show that these are closely related to the uniform objects and maps. This also leads to a concrete criterion for characterizing the fibrant assemblies. In Section 7 we prove that discrete objects like the natural numbers object are fibrant and we show that the homotopy category of the full subcategory of fibrant assemblies is the category of modest sets.

In the same section we also compare our work with earlier work by Jaap van Oosten \cite{htpyEff}. In his paper Van Oosten constructs a path object category structure (in the sense of \cite{pathObjCat}) on the effective topos, resulting in a type-theoretic fibration category in the sense of Shulman \cite{shulman13a}. This falls short of a full model structure, but it does provide an interesting interpretation of the identity types. The main difference is that in Van Oosten's structure every object is fibrant and function extensionality does not hold (private communication). So our paper constructs
the first homotopy-theoretic model of function extensionality in the effective topos.

The present paper is written purely in the language of category theory and does not assume familiarity with homotopy type theory. We use {\bf ZFC} as our metatheory and are aware that some of our results in the section on the effective topos make use of the axiom of choice. We leave it for future work to determine what can said in a constructive metatheory (but see \cite{MscThesis}).

The contents of this paper are based on the Master thesis of the first author written under supervision of the second author (see \cite{MscThesis}). We thank Jaap van Oosten for useful comments on the thesis.


\section{Categorical definitions}

In this section we recall, for the convenience of the reader, the definitions of a model structure, a dominance and the Leibniz adjunction.

\begin{defi}{liftingproperties} Let $f$ and $g$ be two morphisms in some category \ct{C}.
We will say that $f$ has the \emph{left lifting property (LLP)} with respect to $g$ and $g$ has the \emph{right lifting property (RLP)} with respect to $f$, and write $f \pitchfork g$, if for any commuting square in \ct{C}
\diag{ D \ar[d]_f \ar[r] & B \ar[d]^g \\
C \ar[r] & A }
there exists a map $h: C \to B$ (a \emph{diagonal filler}) making the two resulting triangles commute. If $\mathcal A$ is some class of morphisms in \ct{C}, we will write $\mathcal A^\pitchfork$ for the class of morphisms in \ct{C} having the RLP with respect to every morphism in $\mathcal A$, and ${}^\pitchfork \mathcal A$ for the class of morphisms in \ct{C} having the LLP with respect to every morphism in $\mathcal A$.
\end{defi}

\begin{defi}{WFS} A \emph{weak factorisation system (WFS)} on a category \ct{C} is a pair $({\mathcal L}, {\mathcal R})$ consisting of two classes of maps in \ct{C} such that
\begin{enumerate}
\item every map $h$ in \ct{C} can be factored as $h = gf$ with $f \in \mathcal{L}$ and $g \in \mathcal{R}$.
\item $\mathcal{L}^\pitchfork = \mathcal{R}$ and ${}^\pitchfork \mathcal{R} = \mathcal{L}$.
\end{enumerate}
\end{defi}

\begin{lemm}{retractargument} {\rm (Retract argument)} A pair $(\mathcal{L}, \mathcal{R})$ consisting of two classes of maps in a category \ct{C} is a weak factorisation system if and only if the following condtions hold:
\begin{enumerate}
\item every map $h$ in \ct{C} can be factored as $h = gf$ with $f \in \mathcal{L}$ and $g \in \mathcal{R}$.
\item for any $l \in \mathcal{L}$ and $r \in \mathcal{R}$ one has $l \pitchfork r$.
\item both $\mathcal{L}$ and $\mathcal{R}$ are closed under retracts.
\end{enumerate}
\end{lemm}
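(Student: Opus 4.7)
The plan is to establish the two directions separately, with the backward direction being the substantive one and leveraging a standard retract trick.

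For the forward direction, assume $(\mathcal{L},\mathcal{R})$ is a WFS. Condition (1) is immediate from the factorisation axiom. Condition (2) follows directly from $\mathcal{L}^\pitchfork = \mathcal{R}$ unfolded: every $l \in \mathcal{L}$ lifts against every $r \in \mathcal{R}$. For (3), I would observe that any class of the form $\mathcal{A}^\pitchfork$ or ${}^\pitchfork\mathcal{A}$ is automatically closed under retracts, because a lifting problem against a retract can be transferred to the ambient object, solved there, and composed back. Since $\mathcal{L} = {}^\pitchfork\mathcal{R}$ and $\mathcal{R} = \mathcal{L}^\pitchfork$, both classes inherit this closure.

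For the backward direction, assume (1), (2), (3). Condition (2) gives the inclusions $\mathcal{L} \subseteq {}^\pitchfork\mathcal{R}$ and $\mathcal{R} \subseteq \mathcal{L}^\pitchfork$ for free, so the work lies in the reverse inclusions. I will prove $\mathcal{L}^\pitchfork \subseteq \mathcal{R}$; the dual argument handles ${}^\pitchfork\mathcal{R} \subseteq \mathcal{L}$. Suppose $g : X \to Y$ lies in $\mathcal{L}^\pitchfork$. Using (1), factor $g = r \circ f$ where $f : X \to Z$ is in $\mathcal{L}$ and $r : Z \to Y$ is in $\mathcal{R}$. Consider the commutative square with top edge $f$, bottom edge $\mathrm{id}_Y$, left edge $g$, and right edge $r$; since $g \in \mathcal{L}^\pitchfork$ and $f \in \mathcal{L}$, there is a diagonal filler $s : Y \to Z$ with $s g = f$ and $r s = \mathrm{id}_Y$.

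The main (and essentially only) obstacle is then the usual \emph{retract trick}: exhibit $g$ as a retract of $r$ in the arrow category. The required retract diagram has identities on $X$ along the top row and $s, r$ down the middle (so $r \circ s = \mathrm{id}_Y$), with left vertical $g$ and middle vertical $r$; commutativity of the remaining square uses $sg = f$ and hence $rsg = rf = g$. With this retract in hand, closure condition (3) forces $g \in \mathcal{R}$, as required. Dualising the argument (factor an $\mathcal{L}$-candidate and lift against its $\mathcal{R}$-factor to realise it as a retract of its $\mathcal{L}$-factor) yields ${}^\pitchfork\mathcal{R} \subseteq \mathcal{L}$, completing the proof.
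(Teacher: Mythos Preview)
The paper gives no proof of its own here, merely citing Riehl, so there is nothing to compare against in detail. Your overall strategy is the standard retract argument and the forward direction is fine, but the backward direction as written contains a genuine mix-up.

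You take $g \in \mathcal{L}^\pitchfork$, factor $g = rf$, and then form the square with $g$ on the \emph{left} and $r$ on the right. But $g \in \mathcal{L}^\pitchfork$ means $g$ enjoys the \emph{right} lifting property against $\mathcal{L}$; it tells you nothing about $g$ lifting on the left against anything. Your justification ``since $g \in \mathcal{L}^\pitchfork$ and $f \in \mathcal{L}$'' cannot apply to the square you describe, because $f$ is a horizontal edge there, not the left vertical. Worse, the retract diagram you then build (identities on $X$ along the top, $s$ then $r$ along the bottom) has middle vertical $sg = f$, not $r$; it therefore exhibits $g$ as a retract of $f \in \mathcal{L}$, which under condition~(3) gives only $g \in \mathcal{L}$, not $g \in \mathcal{R}$. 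What you have actually written is the correct argument for the \emph{dual} inclusion ${}^\pitchfork\mathcal{R} \subseteq \mathcal{L}$.

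For $\mathcal{L}^\pitchfork \subseteq \mathcal{R}$ the square must have $g$ on the right: lift in
\[
\xymatrix{ X \ar[r]^{1_X} \ar[d]_f & X \ar[d]^g \\ Z \ar[r]_r & Y }
\]
(using $f \in \mathcal{L}$ and $g \in \mathcal{L}^\pitchfork$) to obtain $t : Z \to X$ with $tf = 1_X$ and $gt = r$; then
\[
\xymatrix{ X \ar[r]^f \ar[d]_g & Z \ar[r]^t \ar[d]^r & X \ar[d]^g \\ Y \ar[r]_{1} & Y \ar[r]_{1} & Y }
\]
exhibits $g$ as a retract of $r \in \mathcal{R}$. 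Swapping your two cases fixes the proof.
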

\begin{proof}
See, for instance, Lemma 11.2.3 in \cite{Riehl:cathtpy}.
\end{proof}

\begin{defi}{modelstructure} A \emph{(Quillen) model structure} on a category \ct{C} consists of three classes of maps $\mathcal{C}, \mathcal{F}$ and $\mathcal{W}$, referred to as the cofibrations, the fibrations and the weak equivalences, respectively, such that the following hold:
\begin{enumerate}
\item both $(\mathcal{C} \cap \mathcal{W}, \mathcal{F})$ and $(\mathcal{C}, \mathcal{F} \cap \mathcal{W})$ are weak factorisation systems.
\item in any commuting triangle
\diag{ ^C \ar[rr]^f \ar[dr]_h & & B \ar[dl]^g \\
& A }
if two of $f,g,h$ are weak equivalences, then so is the third. (This is called 2-out-of-3 for weak equivalences.)
\end{enumerate}
\end{defi}

\begin{defi}{dominance} Let \ct{E} be an elementary topos and $\Sigma$ be a class of monomorphisms in \ct{C}. Then $\Sigma$ is called a \emph{dominance} if
\begin{enumerate}
\item every isomorphism belongs to $\Sigma$ and $\Sigma$ is closed under composition.
\item every pullback of a map in $\Sigma$ again belongs to $\Sigma$.
\item the category $\Sigma_{cart}$ of morphisms in $\Sigma$ and pullback squares between them has a terminal object.
\end{enumerate}
\end{defi}

One can show, using standard arguments, that for the terminal object $m: A \to B$ in $\Sigma_{cart}$ we must have $A = 1$ and $B \subseteq \Omega$. We will also write $\Sigma$ for the codomain of this classifying map, so that the classifying map is written $\top: 1 \to \Sigma$. This map $\top: 1 \to \Sigma$ is a pullback of the map $\top: 1 \to \Omega$ classifying all monomorphisms and determines the entire class. Indeed, a dominance can equivalently be defined as a subobject $\Sigma \subseteq \Omega$ satisfying the following principles in the internal logic of \ct{E}:
\begin{enumerate}
\item $\top \in \Sigma$.
\item $(\forall p, q \in \Omega) \, ( \, (p \in \Sigma \land (p \Rightarrow (q \in \Sigma))) \Rightarrow p \land q \in \Sigma \, )$.
\end{enumerate}

\begin{prop}{WFSfromdominance}
If $\Sigma$ is a dominance on an elementary topos \ct{E}, then $(\Sigma, \Sigma^\pitchfork)$ is a weak factorisation system.
\end{prop}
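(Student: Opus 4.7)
The plan is to apply the retract argument, \Cref{retractargument}, which reduces the claim to three ingredients: factorisation, the lifting property $\Sigma \pitchfork \Sigma^\pitchfork$, and closure of both classes under retracts. The lifting property holds by the definition of $\Sigma^\pitchfork$, and closure of $\Sigma^\pitchfork$ under retracts is the standard formal property of any class defined by a right lifting property.

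For closure of $\Sigma$ under retracts, suppose $m' : A' \to B'$ is a retract in the arrow category of some $m : A \to B$ with $m \in \Sigma$, witnessed by morphisms $(i_A, i_B) : m' \to m$ and $(r_A, r_B) : m \to m'$. The plan is to observe first that $m'$ is a mono (monos being retract-closed in any category), and then that the commuting square with sides $i_A, i_B, m', m$ is automatically a pullback whenever $m$ is a mono: given $(a, b') \in A \times_B B'$, the element $r_A(a) \in A'$ provides the required preimage using monicity of $m$ and the retract identities. Hence $m'$ is a pullback of the $\Sigma$-mono $m$, and so $m' \in \Sigma$ by pullback-stability of $\Sigma$.

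The main content is the factorisation, and the main obstacle is that the small object argument is unavailable because the underlying topos is not assumed cocomplete. The plan is to exploit the partial map classifier $\tilde{X}$ associated to the dominance $\Sigma$, which any dominance on an elementary topos provides. Given $h : X \to Y$, define internally
\[
P = \{ (\xi, y) \in \tilde{X} \times Y \mid \forall x \in X.\; \eta_X(x) = \xi \to h(x) = y \}
\]
and factor $h$ as $X \xrightarrow{i} P \xrightarrow{p} Y$ by $i(x) = (\eta_X(x), h(x))$ and $p(\xi, y) = y$. The subobject $X \hookrightarrow P$ is cut out by the $\Sigma$-predicate ``$\xi$ is defined'', classified by the definedness map $\tilde{X} \to \Sigma$, so $i \in \Sigma$. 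To see $p \in \Sigma^\pitchfork$, a lifting problem against a $\Sigma$-mono $A \hookrightarrow B$ presents, on top, a $\Sigma$-partial map $A \rightharpoonup X$ together with a compatible $A \to Y$; one lifts by retaining the same partial map (whose domain remains $\Sigma$ as a subobject of $B$, by composition of $\Sigma$-monos) paired with the given total map $B \to Y$, and the remaining verifications are routine diagram chases using only the topos structure and the axioms of a dominance.
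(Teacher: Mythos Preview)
Your proof is correct and follows essentially the same strategy as the paper: invoke the retract argument, then build the factorisation via the $\Sigma$-partial map classifier. The paper's middle object is the fibrewise partial map classifier $\Sigma_{a \in A}\Sigma_{\sigma \in \Sigma} B_a^\sigma$ (working in the slice over the codomain and citing Rosolini), whereas your $P \subseteq \tilde{X} \times Y$ is a slightly larger object carved out in the ambient topos; both choices yield a valid $(\Sigma, \Sigma^\pitchfork)$ factorisation, and your explicit argument for retract-closure of $\Sigma$ (retract square as pullback of the $\Sigma$-mono) supplies a detail the paper leaves implicit.
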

\begin{proof} This seems to be well-known (e.g. \cite[Section 4.4]{AWFS1}), but since we have not been able to locate this precise theorem in the literature, we include some details here.
Both $\Sigma$ and $\Sigma^\pitchfork$ are closed under retracts, so by the retract argument (\reflemm{retractargument}) it suffices to prove that any map $h: B \to A$ can be factored as a map in $\Sigma$ followed by a map in $\Sigma^\pitchfork$. This can be done as follows:
\diag{ B \ar[rr]_(.35)f & & \Sigma_{a \in A} \Sigma_{\sigma \in \Sigma} B_a^\sigma \ar[rr]_(.65)g & & A, }
with $f(b) = (h(b), \top, \lambda x.b)$ and $g(a, \sigma, \tau) = a$.
Here $B^{\sigma}$ denotes an object of maps $\{\ast \mid \sigma \} \to B$.
Note that in the case of $A = 1$, the object $\Sigma_{\sigma \in \Sigma} B^{\sigma}$ is isomorphic to the object $\widehat{B}$ of \cite[Section 3.1]{rosolini} representing $\Sigma$-partial maps.
By \cite[Proposition 3.1.3]{rosolini}, the inclusion $B \hookrightarrow \Sigma_{\sigma \in \Sigma} B^\sigma$ is a $\Sigma$-map; and the map $\Sigma_{\sigma \in \Sigma} B^\sigma \to 1$ has the right lifting property against $\Sigma$-maps by \cite[Proposition 3.2.4]{rosolini}. For a general case, a similar argument is performed in the slice over $A$.

Details are left to the reader.
\end{proof}

\begin{defi}{Leibnizadjunction} Suppose $f: A \to B$ and $g: C \to D$ are two maps in an elementary topos \ct{E}. Then the \emph{Leibniz product} (or \emph{pushout product}) of $f$ and $g$ is the unique map $f \hat\otimes g$ making
\diag{ A \times C \ar[r]^{f \times 1} \ar[d]_{1 \times g} & B \times C \ar@/^1pc/[rdd]^{g \times 1} \ar[d] \\
A \times D \ar[r] \ar@/_1pc/[rrd]_{f \times 1} & \bullet \ar@{.>}[dr]|-{f \hat\otimes g} \\
& & B \times D}
commute with the square being a pushout.

The \emph{Leibniz exponential} of $f:A \to B$ and $g: C \to D$ is the unique map $\hat\exp(f,g)$ making
\diag{ C^B \ar@/^1pc/[rrd]^{C^f} \ar@{.>}[dr]|-{\hat\exp(f,g)} \ar@/_1pc/[ddr]_{g^B} \\
& D^B \times_{D^A} C^A \ar[r] \ar[d] & C^A \ar[d]^{g^A} \\
& D^B \ar[r]_{D^f} & D^A }
commute with the square being a pullback.
\end{defi}

\begin{prop}{Leibnizadjunction} {\rm (Leibniz adjunction)} The operations $\hat\otimes$ and $\hat\exp$ define bifunctors on $\ct{E}^{\to}$, and give rise to a two-variable adjunction
\[ \ct{E}^\to(f \hat\otimes g,h) \cong \ct{E}^\to(f,\hat\exp(g,h)). \] 
Also, for any triple of maps $f, g, h$ we have $h \pitchfork \hat\exp(f, g)$ if and only if $f \hat\otimes h \pitchfork g$.
\end{prop}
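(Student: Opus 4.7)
The plan is to lift the standard two-variable adjunction $(- \times -, (-)^{(-)})$ on $\ct{E}$ to $\ct{E}^{\to}$ by exploiting the universal properties of the pushout and pullback defining $\hat\otimes$ and $\hat\exp$. Bifunctoriality comes essentially for free: a pair of morphisms in $\ct{E}^\to$ induces compatible maps on the four corners of the pushout square, so the universal property yields a unique map between the pushouts; the dual argument handles $\hat\exp$ via the pullback.

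For the hom-set isomorphism, I would fix $f: A \to B$, $g: C \to D$, $h: X \to Y$ and unpack both sides directly. A morphism $f \hat\otimes g \to h$ in $\ct{E}^\to$ consists of a pair $(\alpha, \beta)$, with $\alpha$ from the pushout $P$ into $X$ and $\beta: B \times D \to Y$ making the square commute. By the universal property of $P$, giving $\alpha$ is the same as giving $\alpha_1: A \times D \to X$ and $\alpha_2: B \times C \to X$ agreeing on $A \times C$, and square-commutativity unpacks into the equations $h\alpha_1 = \beta(f \times 1_D)$ and $h\alpha_2 = \beta(1_B \times g)$. Transposing $\alpha_1, \alpha_2, \beta$ via the ordinary product-exponential adjunction yields maps $A \to X^D$, $B \to X^C$, $B \to Y^D$, and the three conditions above transpose to exactly the data of a morphism $f \to \hat\exp(g, h)$ factoring through the defining pullback $Y^D \times_{Y^C} X^C$. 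Naturality in each variable is immediate from the construction.

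The lifting equivalence then falls out formally. A diagonal filler for a square $f \hat\otimes h \to g$ is a map $B' \times D \to X$ where $B' = \mathrm{cod}(h)$ and $X = \mathrm{dom}(g)$; a filler for the transposed square $h \to \hat\exp(f, g)$ is a map $B' \to X^D$. These are matched by the ordinary exponential adjunction, and the two triangular commutativities on one side transpose to the two on the other. Hence the lifting property holds on one side iff it holds on the other.

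The main obstacle is the middle step: tracking which commutativity condition transposes to which, in particular verifying that ``compatibility of $\alpha_1$ and $\alpha_2$ on $A \times C$'' transposes to ``factoring through $Y^D \times_{Y^C} X^C$''. This is routine but notation-heavy; once it is done, naturality and the lifting equivalence reduce to short diagram chases.
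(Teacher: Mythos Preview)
Your sketch is correct: the data of a morphism $f\hat\otimes g \to h$ in $\ct{E}^\to$ unpacks via the pushout into maps $\alpha_1,\alpha_2,\beta$ subject to three equations, and these transpose under the ordinary product--exponential adjunction to exactly the three equations defining a morphism $f \to \hat\exp(g,h)$ through the pullback; the lifting statement then follows by the same transposition applied to the filler. This is the standard argument.

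The paper, however, does not give a proof at all: it simply cites Exercise~11.1.9 and Lemma~11.1.10 of Riehl's \emph{Categorical Homotopy Theory}. So there is nothing to compare against beyond noting that your outline is precisely the expected unpacking one would find (or supply) for that reference.
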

\begin{proof}
See Exercise 11.1.9 and Lemma 11.1.10 in \cite{Riehl:cathtpy}.
\end{proof}

\section{An axiomatic set-up}
\label{sec:setup}

In this section we will introduce our axiomatic set-up for building a model structure, which is inspired by earlier work by Orton and Pitts \cite{pittsAxioms} and Gambino and Sattler \cite{UniformFibrations}. Following Gambino and Sattler, we define four classes of maps (cofibrations, fibrations and (strong) homotopy equivalences) and establish their basic properties. The results and proofs contain few surprises for the homotopy theorist, but we include them here, because we need to make sure that they can be established without using cocompleteness of the underlying category.

The setting in which we will be working will be the following:
\begin{enumerate}
\item We are given an elementary topos \ct{E}.
\item Within this topos \ct{E} we are given an interval object $\mathbb{I}$, which here will mean that it comes equipped a monomorphism $[\partial_0, \partial_1]: 1 + 1 \to \mathbb{I}$ and \emph{connections} $\land, \lor: \mathbb{I} \times \mathbb{I} \to \mathbb{I}$ satisfying: \[ i \land 0 = 0 \land i = 0, \quad i \land 1 = 1 \land i = i, \quad i \lor 0 = 0 \lor i = i, \quad i \lor 1 = 1 \lor i = 1. \]
\item A class $\mathcal C$ of monomorphisms in \ct{E} satisfying the following axioms:
\begin{enumerate}
\item $\mathcal{C}$ is a dominance.
\item Elements in $\mathcal C$ are closed under finite unions (in other words, $\bot \in \Sigma$ and $p, q \in \Sigma \Rightarrow p \lor q \in \Sigma$).
\item The map $[\partial_0, \partial_1]: 1 + 1 \to \mathbb{I}$ belongs to $\mathcal C$.
\end{enumerate}
The elements of $\mathcal C$ will be referred to as the \emph{cofibrations}.
\end{enumerate}
It follows from these axioms that both $\partial_i$ are cofibrations and that the cofibrations are closed under Leibniz products.

\begin{example}
\begin{enumerate}
\item We could take for \ct{E} the category of simplicial sets, in which we have an internal given by $\Delta[1]$ and the class of all monomorphisms is a class of cofibrations. In this case the model structure we will construct is the classical Quillen model structure on the Kan complexes (see \cite{UniformFibrations}).
\item It would also be possible to take the category of cubical sets as in \cite{CCHM,bezem2014model}. As discussed in Orton and Pitts \cite{pittsAxioms}, this work fits into the present setting by taking for $\mathbb{I}$ the obvious representable and for $\mathcal C$ a special class of monos generated by the face lattice of \cite{CCHM}.
\end{enumerate}
\end{example}

Within this setting we make the following definitions.
\begin{defi}{trivfibrations} A morphism in ${\mathcal C}^\pitchfork$ will be referred to as a \emph{trivial fibration}.
\end{defi}

By \refprop{WFSfromdominance} we know that the cofibrations and trivial fibrations form a weak factorisation system on \ct{E}.

\begin{defi}{fibrations} A morphism $f$ in \ct{E} is a \emph{fibration} if it has the right lifting property with respect to maps of form $\partial_i \hat\otimes u$ with $u \in \mathcal C$ and $i \in \{ 0,1 \}$ (note that $\partial_i \hat\otimes u \in \mathcal C$, so that trivial fibrations are indeed fibrations). An object $X$ will be called \emph{fibrant} if the unique map $X \to 1$ is a fibration. We will write $\ct{E}_f$ for the full subcategory of \ct{E} consisting of the fibrant objects.
\end{defi}

Note that we are assuming that every map $0 \to X$ is a cofibration, so that every object in \ct{E} is cofibrant in that sense.

\begin{prop}{onusingleibnizadjunction}
\begin{enumerate}
\item If $u$ is a cofibration and $f$ is a (trivial) fibration, then $\hat\exp(u, f)$ is a (trivial) fibration as well.
\item A morphism $f$ is a fibration if and only if both $\hat\exp(\partial_i, f)$ are trivial fibrations.
\end{enumerate}
\end{prop}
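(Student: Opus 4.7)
The plan is to reduce everything to the Leibniz adjunction (\refprop{Leibnizadjunction}), using only that cofibrations are closed under Leibniz products (as noted in the setup) and that $\hat\otimes$ is associative and symmetric up to canonical isomorphism.

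For the trivial-fibration half of (1), suppose $u \in \mathcal{C}$ and $f$ is a trivial fibration. To show $\hat\exp(u,f)\in \mathcal{C}^\pitchfork$, I take an arbitrary cofibration $v$ and check that $v \pitchfork \hat\exp(u,f)$. By the Leibniz adjunction this is equivalent to $u \hat\otimes v \pitchfork f$. Since cofibrations are closed under $\hat\otimes$, the map $u \hat\otimes v$ is a cofibration, and $f \in \mathcal{C}^\pitchfork$ delivers the required filler. For the fibration half of (1), assume instead that $f$ is a fibration. To show $\hat\exp(u,f)$ is a fibration, I need $\partial_i \hat\otimes v \pitchfork \hat\exp(u,f)$ for every cofibration $v$ and every $i \in \{0,1\}$. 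The Leibniz adjunction turns this into $u \hat\otimes (\partial_i \hat\otimes v) \pitchfork f$, and by the associativity and symmetry of $\hat\otimes$ this rearranges to $\partial_i \hat\otimes (u \hat\otimes v) \pitchfork f$. Since $u \hat\otimes v$ is again a cofibration, this lifting problem is solved by the assumed fibrancy of $f$.

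Part (2) is then a direct unfolding of the two definitions through the same adjunction, with no further algebraic manipulation needed. For the forward direction, if $f$ is a fibration and $v$ is any cofibration, then $\partial_i \hat\otimes v \pitchfork f$ by definition; by the Leibniz adjunction this is $v \pitchfork \hat\exp(\partial_i,f)$, so $\hat\exp(\partial_i,f) \in \mathcal{C}^\pitchfork$ is a trivial fibration. Conversely, if both $\hat\exp(\partial_i,f)$ are trivial fibrations, then for any cofibration $v$ the Leibniz adjunction converts $v \pitchfork \hat\exp(\partial_i,f)$ back into $\partial_i \hat\otimes v \pitchfork f$, which is exactly the defining lifting property for $f$ to be a fibration.

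There is no real obstacle here: both parts are formal consequences of \refprop{Leibnizadjunction} together with the stability of $\mathcal C$ under Leibniz products. The only thing worth flagging is the associativity/symmetry step $u \hat\otimes (\partial_i \hat\otimes v) \cong \partial_i \hat\otimes (u \hat\otimes v)$, which is standard for a pushout-product on a symmetric monoidal category; it is used only to re-expose a map of the form $\partial_i \hat\otimes (\text{cofibration})$ so that the fibrancy hypothesis on $f$ applies verbatim.
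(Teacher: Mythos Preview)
Your proof is correct and follows exactly the approach the paper takes: the paper's proof is a one-line remark that the result is immediate from the Leibniz adjunction, the associativity and commutativity of the Leibniz product, and the closure of cofibrations under Leibniz products. You have simply unpacked that sentence into explicit lifting-problem manipulations, which is precisely what is intended.
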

\begin{proof} This is immediate from the Leibniz adjunction, the fact that the Leibniz product is associative and commutative, and the fact that cofibrations are closed under Leibniz products.
\end{proof}

\subsection{The homotopy relation}

\begin{defi}{homotopyrelation}
Two parallel arrows $f, g: B \to A$ will be called \emph{homotopic} if there is a morphism $H: \mathbb{I} \times B \to A$, a \emph{homotopy}, such that $f = H(\partial_0 \times B)$ and $g = H(\partial_1 \times B)$; in this case we will write $f \simeq g$, or $H: f \simeq g$, if we wish to stress the homotopy.
\end{defi}

\begin{prop}{homcongronfibrobj}
The homotopy relation defines a congruence on $\ct{E}_f$.
\end{prop}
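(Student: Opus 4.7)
I verify that $\simeq$ is an equivalence relation on each $\ct{E}_f(B,A)$ and respects composition on both sides. Reflexivity is immediate from the constant homotopy $H = f \circ \pi_B$, where $\pi_B : \mathbb{I} \times B \to B$ is the projection. Compatibility with composition on each side is equally easy: if $H : f \simeq g$ and $k$ is an arbitrary morphism in the appropriate position, then $k \circ H$ (resp.\ $H \circ (\mathbb{I} \times k)$) is a homotopy witnessing $k \circ f \simeq k \circ g$ (resp.\ $f \circ k \simeq g \circ k$). Combined with transitivity, these yield the full congruence property $f' \circ f \simeq g' \circ g$ whenever $f \simeq g$ and $f' \simeq g'$ are composable. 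Hence only symmetry and transitivity genuinely need an argument.

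Both are handled by the same two-dimensional box-filling, driven by fibrancy of the codomain $A$. For symmetry, given $H : f \simeq g$, I glue partial data on $\mathbb{I} \times \mathbb{I} \times B \to A$ by setting $K(0,t,b) = f(b)$, $K(s,0,b) = H(s,b)$ and $K(s,1,b) = f(b)$. The three pieces agree on their pairwise intersections (all corners equal $f(b)$, using $H(0,b) = f(b)$), so they assemble into a single map from $((\{0\}\times\mathbb{I}) \cup (\mathbb{I}\times\{0,1\})) \times B$ to $A$. The underlying inclusion in $\mathbb{I}^2$ is $\partial_0 \hat\otimes [\partial_0,\partial_1]$, and hence the inclusion into $\mathbb{I}^2 \times B$ is, by associativity of the Leibniz product, $\partial_0 \hat\otimes \bigl([\partial_0,\partial_1] \hat\otimes (0 \to B)\bigr)$, a cofibration of the form $\partial_0 \hat\otimes u$ with $u \in \mathcal C$ (cofibrations are closed under Leibniz products). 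Fibrancy of $A$ therefore supplies a filler $K : \mathbb{I}^2 \times B \to A$, and the reversed homotopy is $H'(t,b) := K(1,t,b)$: indeed $H'(0,b) = H(1,b) = g(b)$ and $H'(1,b) = f(b)$.

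Transitivity is entirely analogous: given $H_1 : f \simeq g$ and $H_2 : g \simeq h$, specify $K(0,t,b) = f(b)$, $K(s,0,b) = H_1(s,b)$ and $K(1,t,b) = H_2(t,b)$, consistent at the corners because $H_1(1,b) = g(b) = H_2(0,b)$. The underlying inclusion is again of the form $\partial_0 \hat\otimes u$ with $u$ a cofibration (the two axes merely swap roles), so fibrancy of $A$ produces a filler whose restriction to $t=1$ is the required composite homotopy $f \simeq h$. The only real subtlety, and the main obstacle, is to recognise these three-sided boundaries as Leibniz products $\partial_i \hat\otimes u$ so that the definition of a fibration applies verbatim; once this bookkeeping is done the arguments are the standard Kan-complex style fillings, adapted to the axiomatic setting of this section.
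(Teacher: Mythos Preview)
Your proof is correct and follows essentially the same box-filling strategy as the paper: both arguments establish symmetry and transitivity by lifting against the open-box inclusion $\partial_0 \hat\otimes [\partial_0,\partial_1]$ (equivalently $[\partial_0,\partial_1] \hat\otimes \partial_0$), using fibrancy of the codomain. The only organisational difference is that the paper first proves the equivalence-relation claim for ${\rm Hom}(1,X)$ and then transfers it to ${\rm Hom}(Y,X)$ by observing that $X^Y = \hat\exp(0 \to Y,\, X \to 1)$ is again fibrant, whereas you carry the domain $B$ along throughout via the extra Leibniz factor $\hat\otimes\,(0 \to B)$; these are equivalent manoeuvres under the Leibniz adjunction.
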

\begin{proof} Since the homotopy relation is clearly preserved by pre- and postcomposition, it suffices to prove that the homotopy relation defines an equivalence relation on each homset between fibrant objects. In fact, to prove that the homotopy relation defines an equivalence relation on ${\rm Hom}_\ct{E}(Y, X)$ we will only need to assume that $X$ is fibrant ($Y$ can be arbitrary).

Let us first prove that the homotopy relation defines an equivalence on \[ {\rm Hom}_\ct{E}(1, X) \] for every fibrant object $X$. For any $p: 1 \to X$ we have the map
\diag{ r_p: \mathbb{I} \ar[r] & 1 \ar[r]^p & X, }
showing that the homotopy relation is reflexive.

To show that it is symmetric and transitive, we use that $X \to 1$ has the right lifting property with respect to the map
\[ [\partial_0,\partial_1] \hat\otimes \partial_0 = \mathbb{I} \times \{ 0 \} \cup \{ 0 \} \times \mathbb{I} \cup \{ 1 \} \times \mathbb{I} \subseteq \mathbb{I} \times \mathbb{I}. \]
So if $\alpha: \mathbb{I} \to X$ is a path in $X$, then there must be a map $H$ making
\diag{ \bullet \ar[d]_{[\partial_0,\partial_1] \hat\otimes \partial_0} \ar[rr]^{[r_{\alpha\partial_0},\alpha,r_{\alpha\partial_0}]} & & X \\
\mathbb{I} \times \mathbb{I} \ar[urr]_H }
commute. Then for $\beta = H(\mathbb{I} \times \partial_1)$, we have $\beta\partial_0 = \alpha\partial_1$ and $\beta\partial_1 = \alpha\partial_0$, showing that the homotopy relation is symmetric.

Similarly, if $\alpha, \beta: \mathbb{I} \to X$ are two paths with $\alpha\partial_1 = \beta\partial_0$, then we must have a map $K$ making
\diag{ \bullet \ar[d]_{[\partial_0,\partial_1] \hat\otimes \partial_0} \ar[rr]^{[\alpha,r_{\alpha\partial_0},\beta]} & & X \\
\mathbb{I} \times \mathbb{I} \ar[urr]_K }
commute. Then for $\gamma = K(\mathbb{I} \times \partial_1)$ we have $\gamma\partial_0 = \alpha\partial_0$ and $\gamma\partial_1 = \beta\partial_1$, showing that the homotopy relation is transitive.

Since for every fibrant object $X$ the map $\hat\exp(0 \to Y, X \to 1) = X^Y \to 1$ is a fibration by \refprop{onusingleibnizadjunction}.1, this shows that the homotopy relation is an equivalence relation on ${\rm Hom}_\ct{E}(1, X^Y)$ and hence on ${\rm Hom}_\ct{E}(Y, X)$ as well.
\end{proof}

\begin{defi}{homequivalence}
A morphism $f: B \to A$ is a \emph{homotopy equivalence} if there is a morphism $g: A \to B$, a \emph{homotopy inverse}, such that $gf \simeq 1_B$ and $fg \simeq 1_A$.
\end{defi}

\begin{coro}{homeqsatisfies2out3}
On $\ct{E}_f$ the homotopy equivalences satisfy 2-out-of-3 (indeed, they satisfy 2-out-of-6).
\end{coro}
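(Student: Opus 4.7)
The plan is to reduce the statement to the elementary fact that isomorphisms in any category satisfy 2-out-of-6, by passing to the quotient category by the homotopy relation.

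By \refprop{homcongronfibrobj}, the homotopy relation is a congruence on $\ct{E}_f$, so we may form the quotient category $\ct{E}_f/{\simeq}$ whose objects are the fibrant objects and whose hom-sets are the quotients ${\rm Hom}_\ct{E}(Y,X)/{\simeq}$. Let $Q\colon \ct{E}_f \to \ct{E}_f/{\simeq}$ denote the canonical functor, which is the identity on objects and sends each map to its homotopy class. Unfolding \refdefi{homequivalence}, a morphism $f\colon B \to A$ in $\ct{E}_f$ is a homotopy equivalence if and only if there exists $g\colon A \to B$ with $Q(g)Q(f) = 1_B$ and $Q(f)Q(g) = 1_A$ in $\ct{E}_f/{\simeq}$, i.e.\ if and only if $Q(f)$ is an isomorphism in the quotient category.

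It therefore suffices to recall that isomorphisms in any category satisfy 2-out-of-6. Explicitly, given composable maps $f, g, h$ with $gf$ and $hg$ isomorphisms, the map $f(gf)^{-1}$ is a right inverse of $g$ and $(hg)^{-1}h$ is a left inverse of $g$, whence $g$ is an isomorphism, and then $f = g^{-1}(gf)$, $h = (hg)g^{-1}$, and $hgf$ are compositions of isomorphisms and hence isomorphisms themselves. Applying this to the images under $Q$ of a composable triple in $\ct{E}_f$ and pulling the conclusion back along the definition of homotopy equivalence gives 2-out-of-6, which in particular implies 2-out-of-3.

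There is no real obstacle here: the congruence property established in \refprop{homcongronfibrobj} is exactly what is needed to justify the formation of $\ct{E}_f/{\simeq}$, and once this quotient is in hand the statement is purely formal.
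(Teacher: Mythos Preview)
Your argument is correct and is exactly the intended one: the paper gives no proof for this corollary, but places it immediately after \refprop{homcongronfibrobj} precisely because passing to the quotient category $\ct{E}_f/{\simeq}$ identifies the homotopy equivalences with the isomorphisms, whence 2-out-of-6 (and hence 2-out-of-3) follows formally. Nothing needs to be added.
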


\begin{prop}{htpyeq_closed_under_retracts}
Homotopy equivalences are preserved under retracts.
\end{prop}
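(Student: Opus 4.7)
The plan is to construct a homotopy inverse of the retract explicitly from a homotopy inverse of the ambient map, using only that homotopies are stable under pre- and post-composition (which is immediate from the definition and requires no fibrancy).

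Suppose $f: B \to A$ is a retract of a homotopy equivalence $g: D \to C$, witnessed by a commutative diagram
\diag{ B \ar[r]^s \ar[d]_f & D \ar[d]^g \ar[r]^r & B \ar[d]^f \\ A \ar[r]_{s'} & C \ar[r]_{r'} & A }
with $rs = 1_B$ and $r's' = 1_A$. Let $g': C \to D$ be a homotopy inverse of $g$ with homotopies $H_1: \mathbb{I} \times D \to D$ from $g'g$ to $1_D$ and $H_2: \mathbb{I} \times C \to C$ from $gg'$ to $1_C$.

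I would then propose $f' := r g' s': A \to B$ as a homotopy inverse of $f$. For the composite $f'f$, the commutativity of the left square gives $f'f = rg's'f = rg'gs$; and then $r \circ H_1 \circ (\mathbb{I} \times s): \mathbb{I} \times B \to B$ is a homotopy from $rg'gs = f'f$ to $rs = 1_B$. Symmetrically, using $fr = r'g$, we get $ff' = frg's' = r'gg's'$, and $r' \circ H_2 \circ (\mathbb{I} \times s')$ is a homotopy from $ff'$ to $r's' = 1_A$.

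There is essentially no obstacle: the argument is purely formal, and the only subtle point to flag is that preservation of the homotopy relation under pre- and post-composition is automatic from \refdefi{homotopyrelation} and does not invoke \refprop{homcongronfibrobj}; in particular the statement holds without any fibrancy hypothesis on the objects involved.
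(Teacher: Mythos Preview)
Your proof is correct and is essentially identical to the paper's: both construct the homotopy inverse of $f$ as the composite of the retraction maps with the given homotopy inverse of $g$, and verify the two required homotopies by pre- and post-composing the given homotopies with the section and retraction maps. Your version is slightly more explicit in writing out the actual homotopies $r \circ H_i \circ (\mathbb{I} \times s)$ and in noting that no fibrancy is needed, but the argument is the same.
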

\begin{proof}
    Let $g$ be a homotopy equivalence with a homotopy inverse $u$, and
  consider the following retract diagram:
\begin{displaymath}
\xymatrix{
    A \ar[r]^h \ar[d]_f & C \ar[d]^g \ar[r]^k & A \ar[d]^f \\
    B \ar[r]_l & D \ar@{.>}@/^/[u]^u \ar[r]_m & B
    }
\end{displaymath}
Then $(k \circ u \circ l)$ is a homotopy inverse of $f$, as witnessed by:
\[
k \circ u \circ l \circ f = k \circ u \circ g \circ h \simeq k \circ h = 1_A
\]
\[
f \circ k \circ u \circ l = m \circ g \circ u \circ l \simeq m \circ l = 1_B
\]
\end{proof}

\subsection{Strong homotopy equivalence}

\begin{defi}{strongheq}
A homotopy equivalence $f: B \to A$ together with homotopy inverse $g$ and homotopies $H: gf \simeq 1_B$ and $K: fg \simeq 1_A$ is called \emph{strong} if
\diag{ \mathbb{I} \times B \ar[r]^H \ar[d]_{\mathbb{I} \times f} & B \ar[d]^f \\
\mathbb{I} \times A \ar[r]_K & A }
commutes.
\end{defi}

In what follows it will be convenient to use an alternative characterisation of the strong homotopy equivalences. For this we should observe that for any $f: B \to A$ there are maps $\theta_f: f \to \partial_0 \hat\otimes f$ and $\sigma_f: \hat\exp(\partial_0,f) \to f$ in $\ct{E}^\to$:
\begin{displaymath}
\begin{array}{cc}
\xymatrix{ B \ar[d]_f \ar[rr]^{i_1(\partial_1 \times B)} & & A \cup_B \mathbb{I} \times B \ar[d]_{\partial_0 \hat\otimes f}  \\
A \ar[rr]_{\partial_1 \times A} & & \mathbb{I} \times A  }
&
\xymatrix{  B^\mathbb{I} \ar[d]_{\hat\exp(\partial_0,f)} \ar[rr]^{B^{\partial_1}} & & B \ar[d]^f \\
 B \times_A A^\mathbb{I} \ar[rr]_{A^{\partial_1}\pi_2} & & A }
\end{array}
\end{displaymath}

\begin{prop}{propofstrongheq}
The following are equivalent for a morphism $f: B \to A$:
\begin{enumerate}
\item[(i)] $f$ is a strong homotopy equivalence.
\item[(ii)] The map $\theta_f: f \to \partial_0 \hat\otimes f$ has a retraction in $\ct{E}^\to$.
\item[(iii)] The map $\sigma_f: \hat\exp(\partial_0,f) \to f$ has a section in $\ct{E}^\to$.
\end{enumerate}
\end{prop}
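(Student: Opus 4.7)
The plan is to show that (i), (ii), and (iii) all encode the same concrete datum: a triple $(g, H, K)$ with $g \colon A \to B$, $H \colon \mathbb{I} \times B \to B$, and $K \colon \mathbb{I} \times A \to A$ satisfying $H(\partial_0 \times B) = g f$, $H(\partial_1 \times B) = \mathrm{id}_B$, $K(\partial_0 \times A) = f g$, $K(\partial_1 \times A) = \mathrm{id}_A$, together with the compatibility $f \circ H = K \circ (\mathbb{I} \times f)$. By \refdefi{strongheq} this is precisely the data of a strong homotopy equivalence, so the content of the proposition is that (ii) and (iii) repackage the same information.

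For (ii), a retraction $r = (r_1, r_2)$ of $\theta_f$ in $\ct{E}^\to$ consists of $r_1 \colon A \cup_B \mathbb{I} \times B \to B$ and $r_2 \colon \mathbb{I} \times A \to A$; by the universal property of the pushout, $r_1$ is encoded bijectively by a pair $(g, H)$ satisfying $gf = H(\partial_0 \times B)$, while the retraction identities $r \circ \theta_f = \mathrm{id}_f$ impose $H(\partial_1 \times B) = \mathrm{id}_B$ and, writing $K := r_2$, also $K(\partial_1 \times A) = \mathrm{id}_A$. The $\ct{E}^\to$-morphism condition $f \circ r_1 = K \circ (\partial_0 \hat\otimes f)$ then splits, again via the pushout, into $fg = K(\partial_0 \times A)$ and $f \circ H = K \circ (\mathbb{I} \times f)$. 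Dually, for (iii) a section $s = (s_1, s_2)$ of $\sigma_f$ consists of $s_1 \colon B \to B^{\mathbb{I}}$ and $s_2 \colon A \to B \times_A A^{\mathbb{I}}$; currying turns $s_1$ into a map $H \colon \mathbb{I} \times B \to B$, while the pullback expresses $s_2$ as $(g, \hat K)$ with $A^{\partial_0} \circ \hat K = f \circ g$, and uncurrying $\hat K$ yields $K$ with $K(\partial_0 \times A) = fg$. The section identity $\sigma_f \circ s = \mathrm{id}_f$ contributes $H(\partial_1 \times B) = \mathrm{id}_B$ and $K(\partial_1 \times A) = \mathrm{id}_A$, and the $\ct{E}^\to$-morphism condition $\hat\exp(\partial_0, f) \circ s_1 = s_2 \circ f$ decomposes along the two projections of the pullback into $H(\partial_0 \times B) = gf$ and $f \circ H = K \circ (\mathbb{I} \times f)$. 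In both cases the encoding is manifestly bijective, so all three conditions are equivalent.

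The only real work is the bookkeeping: keeping the two components of each $\ct{E}^\to$-morphism straight and tracking which endpoint of $\mathbb{I}$ plays which role after decomposition via the universal properties of the relevant pushout and pullback. Conceptually, (ii) $\Leftrightarrow$ (iii) amounts to the statement that $\theta_f$ and $\sigma_f$ are transpose to each other under the Leibniz adjunction of \refprop{Leibnizadjunction}, so one could alternatively derive it more abstractly from that adjunction's naturality; the concrete unpacking above is slightly more informative since it directly exhibits the strong-homotopy-equivalence structure.
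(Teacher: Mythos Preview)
Your proof is correct and follows essentially the same approach as the paper's own proof: both show directly that a retraction of $\theta_f$ (resp.\ a section of $\sigma_f$) unpacks, via the universal properties of the pushout (resp.\ pullback) and currying, into precisely the data $(g,H,K)$ of a strong homotopy equivalence. You simply spell out the bookkeeping that the paper leaves implicit.
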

\begin{proof}
If $f$ is a strong homotopy equivalence, then $([g,H], K)$ is a retraction of $\theta_f$ as in
\diag{ B \ar[d]_f \ar[rr]^{i_1(\partial_1 \times B)} & & A \cup_B \mathbb{I} \times B \ar[d]_{\partial_0 \hat\otimes f} \ar[rr]^{[g, H]} & & B \ar[d]^f \\
A \ar[rr]_{\partial_1 \times A} & & \mathbb{I} \times A \ar[rr]_K & & A, }
and any retraction of $\theta_f$ must be of the form $([g,H],K)$ with $g, H$ and $K$ showing that $f$ is a strong homotopy equivalence.

Similarly, if $f$ is a strong homotopy equivalence, then $(\overline{H},(g,\overline{K})$ is a section of $\sigma_f$ as in
\diag{ B \ar[d]_f \ar[rr]^{\overline{H}} & & B^\mathbb{I} \ar[d]_{\hat\exp(\partial_0,f)} \ar[rr]^{B^{\partial_1}} & & B \ar[d]^f \\
A \ar[rr]_{(g, \overline{K})} & & B \times_A A^\mathbb{I} \ar[rr]_{A^{\partial_1}\pi_2} & & A, }
and any section of $\sigma_f$ must be of the form $(\overline{H},(g,\overline{K}))$ with $g, H$ and $K$ showing that $f$ is a strong homotopy equivalence.
\end{proof}

The equivalence of (i) and (ii) in the previous proposition is Lemma 3.3 in \cite{UniformFibrations}, while the next is Lemma 3.4 in the same source.

\begin{prop}{somestrhomeq}
The map $\partial_0 \hat{\otimes} f$ is a strong homotopy equivalence for any morphism $f: B \to A$.
\end{prop}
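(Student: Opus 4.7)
The plan is to exhibit the strong homotopy equivalence structure on $\partial_0 \hat\otimes f: P \to \mathbb{I}\times A$ (where $P = A \cup_B \mathbb{I}\times B$ denotes the pushout) directly, using the connection $\land: \mathbb{I}\times\mathbb{I}\to\mathbb{I}$ as a contraction of $\mathbb{I}$ onto $\partial_0$. Alternatively, one could verify condition (ii) of \refprop{propofstrongheq} by writing down a retraction of $\theta_{\partial_0\hat\otimes f}$ in $\ct{E}^\to$, but spelling out the data is equally informative and more symmetric.

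For the homotopy inverse, I would take $r := \iota_A \circ \pi_A : \mathbb{I}\times A \to A \hookrightarrow P$, where $\pi_A$ is the second projection and $\iota_A$ is the pushout coprojection of $A$ into $P$. Then $(\partial_0\hat\otimes f)\circ r$ sends $(i,a)$ to $(0,a)$, and the required homotopy $K: (\partial_0\hat\otimes f)\circ r \simeq 1_{\mathbb{I}\times A}$ can be defined as $K(j,(i,a)) = (j\land i, a)$; the identities $0\land i = 0$ and $1\land i = i$ give precisely the right endpoints. In the other direction, $r\circ(\partial_0\hat\otimes f)$ acts as the identity on the $A$-summand of $P$ and collapses $(i,b) \in \mathbb{I}\times B$ to $f(b)\in A$, which equals $\iota_{\mathbb{I}\times B}(0,b)$ in $P$. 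The natural homotopy $H: r\circ(\partial_0\hat\otimes f) \simeq 1_P$ is the identity on the $A$-part and $H(j,(i,b)) = (j\land i, b)$ on the $\mathbb{I}\times B$-part; one must check that these agree on the shared image of $B$, which is automatic since $j\land 0 = 0$.

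The main thing to verify is the strength condition, i.e.\ the commutativity of the square
\diag{ \mathbb{I} \times P \ar[r]^H \ar[d]_{\mathbb{I} \times (\partial_0\hat\otimes f)} & P \ar[d]^{\partial_0\hat\otimes f} \\
\mathbb{I} \times \mathbb{I} \times A \ar[r]_K & \mathbb{I} \times A. }
By the universal property of the pushout $\mathbb{I}\times P = \mathbb{I}\times(A\cup_B \mathbb{I}\times B)$, it suffices to check commutativity separately on the two summands $\mathbb{I}\times A$ and $\mathbb{I}\times\mathbb{I}\times B$. On the first summand both composites yield $(j,a)\mapsto(0,a)$, using $j\land 0 = 0$; on the second, both composites yield $(j,i,b)\mapsto(j\land i, f(b))$ by naturality of $\land$ in its arguments and the definition of $\partial_0\hat\otimes f$. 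The only subtle point — and the one bookkeeping step to be careful with — is to ensure that $H$ is well defined on the pushout $P$, which is where the connection equation $j\land 0 = 0$ is needed at the identifications along $B \rightrightarrows \mathbb{I}\times B,\, A$. Once this is set up, the construction goes through without using any colimits beyond the pushout defining $P$ itself, so the proof remains valid in our cocompleteness-free setting.
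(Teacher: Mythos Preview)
Your proof is correct and essentially identical to the paper's: the paper also takes $i_0\pi_A$ as the homotopy inverse and defines $H = [i_0\pi_A, i_1(\land \times B)]$ and $K = \land \times A$, which is exactly your pointwise description. The only implicit step you use (as does the paper) is that $\mathbb{I}\times(-)$ preserves the pushout defining $P$, which holds since it has a right adjoint in any topos.
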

\begin{proof}
If $f: B \to A$ is any morphism, then  $\partial_0 \hat{\otimes} f$ is of the form $C \to \mathbb{I} \times A$ with $C = A \cup_B \mathbb{I} \times B$. Then $i_0\pi_A: \mathbb{I} \times A \to C$ is a homotopy inverse, as witnessed by
\[ H = [i_0\pi_A, i_1(\land \times B)]: \mathbb{I} \times A \cup_{\mathbb{I} \times B} (\mathbb{I} \times \mathbb{I} \times B) \cong \mathbb{I} \times C \to C \]
and
\[ K = \land \times A: \mathbb{I} \times \mathbb{I} \times A \to \mathbb{I} \times A. \]
In addition, the square
\diag{ \mathbb{I} \times C \ar[r]^H \ar[d]_{\mathbb{I} \times (\partial_0 \hat{\otimes} f)} & C \ar[d]^{\partial_0 \hat{\otimes} f} \\
\mathbb{I} \times \mathbb{I} \times A \ar[r]_K & \mathbb{I} \times A }
commutes, so the homotopy equivalence $f$ is strong.
\end{proof}

\begin{prop}{strhomeqstable}
Strong homotopy equivalences are stable under pullback along fibrations.
\end{prop}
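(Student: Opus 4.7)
The plan is to unpack strong homotopy equivalence as concrete data and transport it across the pullback using the path lifting property of the fibration. Let $f: B \to A$ be a strong homotopy equivalence with data $(g, H, K)$ satisfying $f \circ H = K \circ (\mathbb{I} \times f)$, let $p: E \to A$ be a fibration, and form the pullback $f': B' \to E$ with $B' = B \times_A E$ and first projection $\pi_B: B' \to B$.

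First I would lift $K$ through $p$. Since every object is cofibrant, the map $\partial_1 \times E: E \to \mathbb{I} \times E$ equals $\partial_1 \hat\otimes (0 \to E)$, so the fibration $p$ has the right lifting property against it. The square with top side $1_E$ and bottom side $K \circ (\mathbb{I} \times p)$ commutes because $K \circ (\partial_1 \times A) = 1_A$, and a diagonal filler yields $\tilde K: \mathbb{I} \times E \to E$ satisfying $\tilde K \circ (\partial_1 \times E) = 1_E$ and $p \circ \tilde K = K \circ (\mathbb{I} \times p)$.

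Next I would define the candidate inverse $g': E \to B'$ by pairing $g \circ p$ with $\tilde K \circ (\partial_0 \times E)$; this lands in the pullback because $f \circ g \circ p = K \circ (\partial_0 \times A) \circ p = p \circ \tilde K \circ (\partial_0 \times E)$. Then $f' \circ g' = \tilde K \circ (\partial_0 \times E)$, so $\tilde K$ itself witnesses $f' g' \simeq 1_E$. For the other side, define $H': \mathbb{I} \times B' \to B'$ by the pairing $\langle H \circ (\mathbb{I} \times \pi_B),\; \tilde K \circ (\mathbb{I} \times f') \rangle$. Its endpoints are $g' \circ f'$ at $\partial_0$ and $1_{B'}$ at $\partial_1$, and the strong compatibility condition $f' \circ H' = \tilde K \circ (\mathbb{I} \times f')$ reads off directly from the second component of the pairing.

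The only subtle point, and the main obstacle, is that $H'$ be well-defined as a map into the pullback $B'$. This requires the identity $f \circ H \circ (\mathbb{I} \times \pi_B) = p \circ \tilde K \circ (\mathbb{I} \times f')$; using $f \circ \pi_B = p \circ f'$ from the pullback square, it reduces to the strong compatibility $f \circ H = K \circ (\mathbb{I} \times f)$ of the original equivalence together with the lifting identity $p \circ \tilde K = K \circ (\mathbb{I} \times p)$. Without strongness one could still lift $K$, but the two homotopies $H$ and $\tilde K$ would not glue into a single homotopy in $B'$; this is precisely where the adjective \emph{strong} earns its keep.
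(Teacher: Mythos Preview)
Your argument is correct. The lift $\tilde K$ exists because $\partial_1 \hat\otimes (0 \to E)$ is a generating trivial cofibration and $p$ is a fibration; the pairings defining $g'$ and $H'$ land in the pullback precisely by the strong compatibility $fH = K(\mathbb{I}\times f)$ together with $p\tilde K = K(\mathbb{I}\times p)$; and the endpoint and compatibility checks for $H'$ and $\tilde K$ go through exactly as you wrote.

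As for comparison: the paper does not actually prove this proposition but cites Lemma~3.7 of Gambino--Sattler. Their argument runs through the abstract characterisation recorded here as \refprop{propofstrongheq}: $f$ is a strong homotopy equivalence iff $\theta_f : f \to \partial_0 \hat\otimes f$ has a retraction in $\ct{E}^{\to}$, and one shows that pullback along the fibration $p$ carries such a retraction to a retraction of $\theta_{f'}$. Unwinding that abstract argument yields exactly the data $(g',H',\tilde K)$ you construct by hand, so the two routes are really the same proof in different clothing. Your version has the advantage of being self-contained and of making explicit where the fibrancy of $p$ and the strongness of $f$ are each used; the abstract version has the advantage of making the closure under pullback look like a formal consequence of a retract characterisation.
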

\begin{proof}
This is Lemma 3.7 in \cite{UniformFibrations}.
\end{proof}

\begin{prop}{vogtslemmaforfibrations}
If $f: B \to A$ is a fibration and a homotopy equivalence between fibrant objects, then $f$ is a strong homotopy equivalence.
\end{prop}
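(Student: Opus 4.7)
The strategy is to apply criterion~(iii) of \refprop{propofstrongheq}: it suffices to construct a section in $\ct{E}^\to$ of the morphism $\sigma_f: \hat\exp(\partial_0,f) \to f$. Since $f$ is a fibration, \refprop{onusingleibnizadjunction}(2) tells us that $\hat\exp(\partial_0,f)$ is in fact a trivial fibration, so every lifting problem against a cofibration in $\ct{E}$ can be solved.

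Given homotopy-equivalence data $(g,H,K)$ for $f$, the first step is to normalize it using the fibration property. The inclusion $\partial_0 \times A: A \hookrightarrow \mathbb{I} \times A$ equals $\partial_0 \hat\otimes (0 \to A)$, so $f$ has the right lifting property against it. Applying this to the square with top edge $g$ and bottom edge $K$ yields a homotopy $\widetilde K: \mathbb{I} \times A \to B$ lifting $K$ with $\widetilde K(\partial_0 \times A) = g$. Setting $g' := \widetilde K(\partial_1 \times A)$ we obtain $fg' = 1_A$; moreover $g' \simeq g$ via $\widetilde K$, so $g'$ remains a homotopy inverse, and by transitivity of homotopy on the fibrant object $B$ there is a new homotopy $\widetilde H: g'f \simeq 1_B$. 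We may thus assume $fg = 1_A$ and take the $A$-side homotopy to be the constant $K' := f\pi_A$.

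The main obstacle is the second step: modifying $\widetilde H$ into a \emph{fiberwise} homotopy $H': g'f \simeq 1_B$ with $fH' = f\pi_B$, for then its transpose $s_0: B \to B^\mathbb{I}$ paired with $s_1 := (g', \mathrm{const}_{(-)}): A \to B \times_A A^\mathbb{I}$ gives exactly the desired section of $\sigma_f$. The difficulty is that the boundary inclusion $\{0,1\} \times B \hookrightarrow \mathbb{I} \times B$ needed to fill this in is not itself of the form $\partial_i \hat\otimes u$, so $f$ does not directly have the required RLP. Following the approach of \cite{UniformFibrations}, one uses the connection $\land: \mathbb{I}^2 \to \mathbb{I}$ to assemble from $\widetilde H$ an auxiliary homotopy $\mathbb{I}^2 \times B \to A$ whose boundary can be filled through $f$ in stages, each stage being a lift against a permitted cofibration of the form $\partial_i \hat\otimes u$; the fibrancy of $A$ and $B$ ensures the intermediate partial lifts assemble coherently into the required $H'$, and hence into the section of $\sigma_f$.
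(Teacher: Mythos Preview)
Your overall strategy and the first normalization step match the paper's proof: lift $K$ along $\partial_0 \times A = \partial_0 \hat\otimes \bot_A$ to replace the homotopy inverse by a genuine section $g$ of $f$, so that the $A$-side homotopy becomes the projection $\pi_A$ (your ``$K' := f\pi_A$'' should read $\pi_A$), and then adjust the $B$-side homotopy to lie fibrewise over it. Framing this via criterion~(iii) of \refprop{propofstrongheq} rather than directly via~(i) is immaterial; note also that your observation that $\hat\exp(\partial_0,f)$ is a trivial fibration is never actually used in what follows.

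The gap is in the decisive second step. You describe filling a square ``in stages'' with the fibrancy of $A$ and $B$ ensuring coherence, but this is not how the argument goes: no staging is required, and fibrancy of $A$ and $B$ plays no role at this point (it was needed only earlier, for transitivity of homotopy). The paper performs a \emph{single} lift. Transposing, one works with $f^B = \hat\exp(\bot_B,f)\colon B^B \to A^B$, which is a fibration by \refprop{onusingleibnizadjunction}(1) since $\bot_B$ is a cofibration. The connection $\lor$ supplies a map $\mathbb{I}\times\mathbb{I} \to A^B$, and the data $[\overline{gfM},\overline{\pi_B},\overline{M}]$ sit on the open box $\{0\}\times\mathbb{I}\cup\{1\}\times\mathbb{I}\cup\mathbb{I}\times\{0\}\subset\mathbb{I}\times\mathbb{I}$. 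The point you are missing is that this open-box inclusion is exactly $[\partial_0,\partial_1]\hat\otimes\partial_0 = \partial_0\hat\otimes[\partial_0,\partial_1]$, hence already of the form $\partial_0\hat\otimes u$ with $u$ a cofibration; so $f^B$ has the right lifting property against it and the corrected homotopy $N$ is obtained in one step. Without this observation your sketch does not close.
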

\begin{proof}
Suppose $f: B \to A$ is a fibration and a homotopy equivalence between fibrant objects. This means that there is a homotopy inverse $g': A \to B$ and there are homotopies $H: g'f \simeq 1_B$ and $K: fg' \simeq 1_A$. Therefore
\diag{ \{ 0 \} \times A   \ar[r]^{g'} \ar[d]_{\partial_0 \hat\otimes \bot_A} & B \ar[d]^f \\
\mathbb{I} \times A \ar[r]_K \ar@{.>}[ur]^L & A }
commutes and because $f$ is a fibration, there will be a diagonal filler $L$. Writing $g = L(\partial_1 \times A)$, we see that $g$ is a section of $f$ with $g \simeq g'$. Hence $\pi_A: \mathbb{I} \times A \to A$ is a homotopy $fg \simeq 1$ and because $gf \simeq g'f \simeq 1_B$, there is a homotopy $M: gf \simeq 1_B$ as well. Our aim is to modify this homotopy $M$ to a homotopy $N$ making
\diag{ \mathbb{I} \times B \ar[d]_{\mathbb{I} \times f} \ar[r]^N & B \ar[d]^f \\
\mathbb{I} \times A \ar[r]_{\pi_A} & A }
commute.

For this we use the connections and the fact that
\diag{ \{ 0 \} \times \mathbb{I} \cup \{ 1 \} \times \mathbb{I} \cup \mathbb{I} \times \{ 0 \} \ar[rr]^(.7){[\overline{gfM}, \overline{\pi_B} , \overline{M}]} \ar[d]_{[\partial_0,\partial_1] \hat\otimes \partial_0} & & B^B \ar[d]^{f^B} \\
\mathbb{I} \times \mathbb{I} \ar[r]_\lor \ar@{.}[rru]^F & \mathbb{I} \ar[r]_{\overline{fM}} & A^B }
commutes. Since $f^B = \hat\exp(\bot_B: 0 \to B, f: B \to A)$ is a fibration by \refprop{onusingleibnizadjunction}.1, we obtain a diagonal filler $F: \mathbb{I} \times \mathbb{I} \to B^B$. Then $N = \overline{F(\mathbb{I} \times \partial_1)}$ is the desired homotopy.
\end{proof}

\begin{prop}{vogtslemmaforcofibrations}
If $f: B \to A$ is a cofibration and a homotopy equivalence between fibrant objects, then $f$ is a strong homotopy equivalence.
\end{prop}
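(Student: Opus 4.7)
I would dualise the proof of \refprop{vogtslemmaforfibrations}. Given a cofibration $f\colon B \to A$ that is a homotopy equivalence between fibrant objects, let $g'\colon A\to B$ be its homotopy inverse and let $H\colon g'f \simeq 1_B$ and $K\colon fg' \simeq 1_A$ be witnessing homotopies.

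The first step is to rectify $g'$ into a strict retraction. Since $f$ is a cofibration, $\partial_0 \hat\otimes f$ has the LLP against every fibration by definition, and in particular against $B \to 1$, which is a fibration because $B$ is fibrant. Lifting the square with top edge $[g', H]\colon A \cup_B (\mathbb{I}\times B) \to B$ yields $L\colon \mathbb{I}\times A \to B$ with $L(\partial_0\times A) = g'$ and $L(\mathbb{I}\times f) = H$. Setting $g := L(\partial_1\times A)$ gives $gf = H(\partial_1\times B) = 1_B$; the projection $\pi_B\colon \mathbb{I}\times B \to B$ then serves as a strict homotopy $gf \simeq 1_B$.

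The second step is to rectify the other homotopy. Transitivity of $\simeq$ on $\mathrm{Hom}(A, A)$ (\refprop{homcongronfibrobj}, using fibrancy of $A$) combines $fL$ and $K$ into some $M\colon fg \simeq 1_A$, but $M\circ (\mathbb{I}\times f)$ is only a loop at $f$, not the constant path. To straighten it I would dualise the exponential lifting trick of \refprop{vogtslemmaforfibrations}: the precomposition map $A^f = \hat\exp(f, A \to 1)\colon A^A \to A^B$ is a fibration by \refprop{onusingleibnizadjunction}(1), while the cofibration $[\partial_0, \partial_1]\hat\otimes\partial_0 = \partial_0\hat\otimes[\partial_0, \partial_1]$ has LLP against every fibration. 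I would prescribe boundary data into $A^A$ by $\overline{M}$ on $\mathbb{I}\times\{0\}$, the loop $j \mapsto M(j, fg(-))$ at $fg$ on $\{0\}\times\mathbb{I}$, and the constant $1_A$ on $\{1\}\times\mathbb{I}$, against the downstairs map $(i,j) \mapsto M(i\lor j, f(-))$ into $A^B$. Commutativity of the square follows from $gf = 1_B$ together with the connection identities $i\lor 0 = i$, $0\lor j = j$, and $i\lor 1 = 1$. The resulting filler $F\colon \mathbb{I}\times\mathbb{I} \to A^A$, restricted to $\mathbb{I}\times\{1\}$ and transposed, yields a homotopy $\widetilde{K}\colon fg \simeq 1_A$ with $\widetilde{K}(\mathbb{I}\times f) = f\pi_B$; hence $(g, \pi_B, \widetilde{K})$ witnesses that $f$ is a strong homotopy equivalence.

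The main obstacle is spotting the correct dual diagram: one must replace the postcomposition fibration $f^B$ used in \refprop{vogtslemmaforfibrations} by the precomposition fibration $A^f$, rearrange the boundary data on the three-sided corner of $\mathbb{I}\times\mathbb{I}$, and verify that the absorption laws for the connection $\lor$ together with the strict retraction identity $gf = 1_B$ make the lifting square commute.
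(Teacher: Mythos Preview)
Your argument is correct and follows essentially the same route as the paper: first rectify $g'$ to a retraction $g$ via a lift of $\partial_0\hat\otimes f$ against the fibrant object $B$, then straighten the homotopy $M\colon fg\simeq 1_A$ using a lift of $[\partial_0,\partial_1]\hat\otimes\partial_0$ against the precomposition fibration $A^f=\hat\exp(f,!_A)$ with exactly the boundary data $[\overline{Mfg},\overline{\pi_A},\overline{M}]$ over $\overline{Mf}\circ\lor$. One small slip: $[\partial_0,\partial_1]\hat\otimes\partial_0$ and $\partial_0\hat\otimes[\partial_0,\partial_1]$ are isomorphic as arrows but not literally the same subobject of $\mathbb{I}\times\mathbb{I}$; your boundary assignment is in fact the one for $[\partial_0,\partial_1]\hat\otimes\partial_0$, so nothing is harmed.
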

\begin{proof}
The proof of this proposition is very similar to the previous one. Suppose $f: B \to A$ is a cofibration and a homotopy equivalence between fibrant objects. This means that there is a homotopy inverse $g': A \to B$ and there are homotopies $H: g'f \simeq 1_B$ and $K: fg' \simeq 1_A$. Therefore
\diag{ \{ 0 \} \times A \cup_B \mathbb{I} \times B \ar[rr]^(.6){[g', H]} \ar[d]_{\partial_0 \hat\otimes f} & & B\ar[d] \\
\mathbb{I} \times A \ar[rr] \ar@{.>}[urr]^L & & 1 }
commutes and because $f$ is a fibration, there will be a diagonal filler $L$. Writing $g = L(\partial_1 \times A)$, we see that $gf = 1_B$ and $g \simeq g'$. Hence $\pi_B: \mathbb{I} \times B \to B$ is a homotopy $gf \simeq 1$ and because $fg \simeq fg' \simeq 1_A$, there is a homotopy $M: fg \simeq 1_A$ as well. Our aim is to modify this homotopy $M$ to a homotopy $N$ making
\diag{ \mathbb{I} \times B \ar[d]_{\mathbb{I} \times f} \ar[r]^{\pi_B} & B \ar[d]^f \\
\mathbb{I} \times A \ar[r]_N & A }
commute.

For this we use the connections and the fact that
\diag{ \{ 0 \} \times \mathbb{I} \cup \{ 1 \} \times \mathbb{I} \cup \mathbb{I} \times \{ 0 \} \ar[rr]^(.7){[\overline{Mfg}, \overline{\pi_A}, \overline{M}]} \ar[d]_{[\partial_0,\partial_1] \hat\otimes \partial_0} & & A^A \ar[d]^{A^f} \\
\mathbb{I} \times \mathbb{I} \ar[r]_\lor \ar@{.}[rru]^F & \mathbb{I} \ar[r]_{\overline{Mf}} & A^B }
commutes. Since $A^f = \hat\exp(f: B \to A, !_A: A \to 1)$ is a fibration by \refprop{onusingleibnizadjunction}.1, we obtain a diagonal filler $F: \mathbb{I} \times \mathbb{I} \to B^B$. Then $N = \overline{F(\mathbb{I} \times \partial_1)}$ is the desired homotopy.
\end{proof}

\section{A model structure}
\label{sec:qms}

We continue working in the setting of the previous section and we establish the existence of a model structure on the full subcategory of fibrant objects. In addition, we establish that the resulting model structure gives a model of type theory with $\Pi$-types satisfying function extensionality.

\subsection{A WFS with cofibrations and trivial fibrations}

\begin{prop}{ontrivfibrations}
A morphism $f: B \to A$ is a trivial fibration if and only if it is a fibration and a strong homotopy equivalence.
\end{prop}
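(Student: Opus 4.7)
The plan is to prove the two implications separately. For $(\Leftarrow)$ I would use the retract argument, while for $(\Rightarrow)$ I would do a direct lifting construction.

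For the backward direction, suppose $f$ is a fibration and a strong homotopy equivalence. The crucial observation is that \refprop{onusingleibnizadjunction}.2 tells us $\hat\exp(\partial_0, f)$ is a trivial fibration (not just a fibration) whenever $f$ is a fibration. On the other hand, \refprop{propofstrongheq} gives, from the strong homotopy equivalence structure on $f$, a section of $\sigma_f \colon \hat\exp(\partial_0, f) \to f$ in $\ct{E}^\to$. This section exhibits $f$ as a retract of $\hat\exp(\partial_0, f)$ in the arrow category. Since $\mathcal{C}^\pitchfork$ is closed under retracts, $f$ inherits the right lifting property against all cofibrations and is therefore a trivial fibration.

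For the forward direction, suppose $f \in \mathcal{C}^\pitchfork$. That $f$ is a fibration is immediate from the assumption that cofibrations are closed under Leibniz products: for any cof $u$ the map $\partial_i \hat\otimes u$ is a cof, so $f$ lifts against it. To produce strong homotopy equivalence data $(g, H, K)$, I would first lift $f$ against the cofibration $0 \to A$ to obtain a section $g \colon A \to B$ of $f$; set $K = \pi_A \colon \mathbb{I} \times A \to A$, which trivially witnesses $fg \simeq 1_A$. Then construct $H$ as a diagonal filler in
\[
\xymatrix{
(1+1) \times B \ar[r]^-{[gf,\, 1_B]} \ar[d]_{[\partial_0,\partial_1] \times B} & B \ar[d]^f \\
\mathbb{I} \times B \ar[r]_-{f\pi_B} \ar@{.>}[ur]^H & A,
}
\]
using that $[\partial_0,\partial_1] \times B$ is a cofibration (so $f$ lifts against it) and that the square commutes because $fg = 1_A$. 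The filler satisfies $H(\partial_0 \times B) = gf$, $H(\partial_1 \times B) = 1_B$, and $fH = f\pi_B = \pi_A \circ (\mathbb{I} \times f) = K \circ (\mathbb{I} \times f)$, so $(g, H, K)$ is the required strong homotopy equivalence structure.

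The main conceptual hurdle is the $(\Leftarrow)$ direction: the insight that, together with \refprop{propofstrongheq}, the Leibniz-adjunction characterization of fibrations via $\hat\exp(\partial_0,-)$ turns the strong homotopy equivalence data into an honest retract of a trivial fibration. The remaining technical steps are routine diagram chases that the earlier propositions have already packaged for us.
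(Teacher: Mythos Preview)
Your proposal is correct and follows essentially the same argument as the paper. The forward direction is identical (lift against $0 \to A$ for a section, take $K = \pi_A$, then lift against $[\partial_0,\partial_1]\times B$ for $H$); for the backward direction both you and the paper use \refprop{propofstrongheq} to exhibit $f$ as a retract of $\hat\exp(\partial_0,f)$, with the only cosmetic difference being that you invoke \refprop{onusingleibnizadjunction}.2 directly to see that $\hat\exp(\partial_0,f)$ is a trivial fibration, whereas the paper phrases this last step via the Leibniz adjunction.
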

\begin{proof}
Suppose $f: B \to A$ is a trivial fibration. Because $0 \to A$ is a cofibration, the square
\diag{ 0 \ar[r] \ar[d] & B \ar[d]^f \\
A \ar@{.>}[ur]^g \ar[r]_1 & A }
has a diagonal filler $g$. Therefore $f$ has a section $g$ and $\pi_A: \mathbb{I} \times A \to A$ is a homotopy showing $fg \simeq 1$. Moreover, $[\partial_0,\partial_1] \times B = [\partial_0,\partial_1] \hat\otimes (0 \to B)$ is a cofibration as well, so also
\diag{ \{ 0 \} \times B + \{ 1 \} \times B \ar[rr]^{[gf,1_B]} \ar[d]_{[\partial_0,\partial_1] \times B} & & B \ar[d]^f \\
\mathbb{I} \times B \ar[r]_{\mathbb{I} \times f} \ar@{.>}[rru]^H & \mathbb{I} \times A \ar[r]_{\pi_A} & A }
has a diagonal filler, showing that $f$ is a strong homotopy equivalence.

Conversely, suppose $f$ is both a fibration and a strong homotopy equivalence. To show that $f$ is a trivial fibration, let $u$ be an arbitrary cofibration. Our goal is to show that $u \pitchfork f$. But since $f$ is a strong homotopy equivalence, it is a retract of $\hat\exp(\partial_0,f)$ by \refprop{propofstrongheq} and therefore it suffices to show that $u \pitchfork \hat\exp(\partial_0, f)$. But this follows from \refprop{Leibnizadjunction} and the fact that $f$ is a fibration.
\end{proof}

\subsection{A WFS with trivial cofibrations and fibrations}
\label{sec:wfs_triv_cof_fib}
\begin{prop}{ontrivcofibr}
If $u$ is a cofibration and a strong homotopy equivalence and $f$ is a fibration, then $u \pitchfork f$.
\end{prop}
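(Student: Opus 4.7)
The proof plan is very short: we exploit the retract characterisation of strong homotopy equivalences given in \refprop{propofstrongheq}, dualising the argument used in \refprop{ontrivfibrations} for trivial fibrations.

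First, since $u$ is a strong homotopy equivalence, \refprop{propofstrongheq} (the equivalence of (i) and (ii)) tells us that the map $\theta_u: u \to \partial_0 \hat\otimes u$ has a retraction in the arrow category $\ct{E}^\to$. In other words, $u$ is a retract of $\partial_0 \hat\otimes u$ in $\ct{E}^\to$. This is exactly the reformulation we need, because the map $\partial_0 \hat\otimes u$ is built from data to which the fibration $f$ is known to lift.

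Second, because $u$ is a cofibration and $\partial_0$ is a cofibration (it is one of the boundary inclusions of $\mathbb{I}$, which lies in $\mathcal{C}$ by the third axiom on cofibrations), the map $\partial_0 \hat\otimes u$ is precisely of the form against which a fibration lifts by the definition in \refdefi{fibrations}. Hence $\partial_0 \hat\otimes u \pitchfork f$.

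Finally, the left lifting property against a fixed morphism is stable under retracts in $\ct{E}^\to$ (a standard fact which is the left-hand half of the retract argument, \reflemm{retractargument}). Combining this with the previous two steps yields $u \pitchfork f$, completing the proof.

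There is no real obstacle here: the whole argument is a straight dualisation of the second half of the proof of \refprop{ontrivfibrations}, with $\partial_0 \hat\otimes u$ and $\theta_u$ playing the role that $\hat\exp(\partial_0, f)$ and $\sigma_f$ played there. The only subtlety worth checking carefully is that the retraction provided by \refprop{propofstrongheq} really is a retraction in $\ct{E}^\to$ (so that it commutes with the structure maps to $A$), but this is exactly what the proposition asserts.
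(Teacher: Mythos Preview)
Your proof is correct and follows essentially the same approach as the paper: both use \refprop{propofstrongheq} to exhibit $u$ as a retract of $\partial_0 \hat\otimes u$, observe that the latter lifts against $f$ by the definition of fibration, and conclude by closure of the left lifting property under retracts. The paper's proof is terser but the argument is identical.
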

\begin{proof} (This is Lemma 3.5.(ii) in \cite{UniformFibrations}.)
If $u$ is a strong homotopy equivalence, then $u$ is a retract of $u \hat\otimes \partial_0$ by \refprop{propofstrongheq}. So in order to show that $u \pitchfork f$ it suffices to show that $u \hat\otimes \partial_0 \pitchfork f$. But that is immediate from the fact that $u$ is a cofibration and $f$ is a fibration.
\end{proof}

\begin{prop}{factorisationtrivcoffibr}
Every morphism $f: B \to A$ between fibrant objects factors as a map which is both a cofibration and a homotopy equivalence followed by a fibration.
\end{prop}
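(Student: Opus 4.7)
The plan is to construct the factorisation via the mapping path space (or cocylinder) associated with $f$, and then further factor the resulting triangle using the cofibration/trivial-fibration weak factorisation system of \refprop{WFSfromdominance} in order to replace the first map by a genuine cofibration.

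Concretely, form the pullback
\[ P_f \;=\; B \times_A A^{\mathbb{I}} \]
along $f \colon B \to A$ and the endpoint evaluation $A^{\partial_0} \colon A^{\mathbb{I}} \to A$, and define $i \colon B \to P_f$ by $b \mapsto (b, r_{f(b)})$, where $r_{f(b)}$ is the constant path at $f(b)$, and $p \colon P_f \to A$ by $(b,\alpha) \mapsto \alpha(\partial_1)$; then $p \circ i = f$ by construction. The map $p$ is a fibration: the induced map $P_f \to B \times A$, $(b,\alpha) \mapsto (b, \alpha(\partial_1))$, is the pullback of $A^{[\partial_0, \partial_1]} = \hat\exp([\partial_0,\partial_1], A \to 1) \colon A^{\mathbb{I}} \to A \times A$ along $f \times 1_A$, so it is a fibration by \refprop{onusingleibnizadjunction}.1 together with stability of fibrations under pullback; and the projection $B \times A \to A$ is a fibration because $B$ is fibrant, so $p$ is a composition of two fibrations. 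Moreover $i$ is a homotopy equivalence: the first projection $s \colon P_f \to B$ satisfies $s \circ i = 1_B$ strictly, and the connection $\land$ yields a homotopy
\[ H \colon \mathbb{I} \times P_f \to P_f, \qquad (t, (b, \alpha)) \;\longmapsto\; (b,\; \lambda u.\, \alpha(t \land u)), \]
for which $H(\partial_0, -) = i \circ s$ and $H(\partial_1, -) = 1_{P_f}$ by the connection identities; the assignment lands in $P_f$ because $\alpha(t \land \partial_0) = \alpha(\partial_0) = f(b)$.

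Now apply \refprop{WFSfromdominance} to factor $i$ as $i = q \circ j$ with $j \colon B \to Q$ a cofibration and $q \colon Q \to P_f$ a trivial fibration. Since $A$ is fibrant and $p$ is a fibration, $P_f$ is fibrant; and since $q$ is a fibration into the fibrant object $P_f$, so is $Q$. By \refprop{ontrivfibrations} the trivial fibration $q$ is a (strong) homotopy equivalence, and $i$ is one by the previous paragraph, so \refcoro{homeqsatisfies2out3} applied on $\ct{E}_f$ gives that $j$ is a homotopy equivalence as well. Hence
\[ B \xrightarrow{\;j\;} Q \xrightarrow{\;p \circ q\;} A \]
is the required factorisation: $j$ is simultaneously a cofibration and a homotopy equivalence, while $p \circ q$ is a composition of two fibrations, hence a fibration.

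The main technical point requiring care is exhibiting $p$ as a fibration: this rests on identifying the path object map $A^{\mathbb{I}} \to A \times A$ as the Leibniz exponential $\hat\exp([\partial_0,\partial_1], A \to 1)$ so that \refprop{onusingleibnizadjunction}.1 applies, combined with pullback stability of fibrations and fibrancy of $B$. The remainder of the argument is bookkeeping of fibrancy along the construction so that \refcoro{homeqsatisfies2out3}, which is only proved on $\ct{E}_f$, can be applied to the triangle $i = q \circ j$.
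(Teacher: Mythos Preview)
Your proof is correct and follows essentially the same strategy as the paper: form the mapping cocylinder $P_f$, show the map $B \to P_f$ is a homotopy equivalence and the map $P_f \to A$ is a fibration (via the same identification of $A^{\mathbb{I}} \to A \times A$ with $\hat\exp([\partial_0,\partial_1],A\to1)$ and fibrancy of $B$), then refine using the (cofibration, trivial fibration) factorisation. The only difference is cosmetic: you exhibit an explicit homotopy inverse to $i$ using the connection $\land$, whereas the paper instead observes that the projection $p_1\colon P_f \to B$ is a trivial fibration (as a pullback of $A^{\partial_0} = \hat\exp(\partial_0, A\to 1)$, which is a trivial fibration by \refprop{onusingleibnizadjunction}.2) and then applies 2-out-of-3 to the triangle $p_1 \circ w = 1_B$.
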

\begin{proof}
Construct the following diagram, in which the square is a pullback:
\diag{ B \ar[dr]_{w} \ar@/^1pc/[rrd]^{\overline{\pi_A}f} \ar@/_1pc/[rdd]_{1_B} \\
& P_f \ar[d]_{p_1} \ar[r]^{p_2} & A^\mathbb{I} \ar[d]^{A^{\partial_0}} \\
& B \ar[r]_f & A. }
Since $A^{\partial_0} = \hat\exp(\partial_0,A \to 1)$ it follows from \refprop{onusingleibnizadjunction}.2 that this map is a trivial fibration. Since trivial fibrations are stable under pullback, $p_1$ is a trivial fibration and hence a homotopy equivalence. Since $1_B$ is a homotopy equivalence as well, it follows that $w$ is a homotopy equivalence.

Next, consider the map $p:= A^{\partial_1}p_2$. The square
\diag{ P_f \ar[d]_{(p_1,p)} \ar[r]^{p_2} & A^\mathbb{I} \ar[d]^{(A^{\partial_0}, A^{\partial_1})} \\
B \times A \ar[r]_{(f, 1)} & A \times A }
is a pullback and because $(A^{\partial_0}, A^{\partial_1}) = \hat\exp([\partial_0,\partial_1], A \to 1)$ is a fibration by \refprop{onusingleibnizadjunction}.1 and fibrations are stable under pullback, it follows that $(p_1, p)$ is a fibration as well. In addition, $B$ is fibrant, so $\pi_A: B \times A \to A$ is fibration and therefore $p = \pi_A(p_1,p)$ is
as well.

So $f = pw$ factors $f$ as a homotopy equivalence $w$ followed by a fibration $p$. Using the factorisation system that we have already established, we can write $w = w_1w_0$ where $w_1$ is a trivial fibration and $w_0$ is a cofibration. So $pw_1$ is a fibration, while $w_0$ is a homotopy equivalence, since both $w$ and $w_1$ are. Therefore $f = (pw_1)w_0$ factors $f$ as a cofibration which is also a homotopy equivalence followed by a fibration, as desired.
\end{proof}

Putting all the pieces together we can show:
\begin{theo}{modelstructure}
Let \ct{E} be an elementary topos with an interval object $\mathbb{I}$ and a class of cofibrations $\mathcal C$ satisfying the conditions at the start of Section 2. Then the full subcategory of \ct{E} on the fibrant objects carries a Quillen model structure in which the morphisms in $\mathcal C$ are the cofibrations, the fibrations as defined in \refdefi{fibrations} are the fibrations and the homotopy equivalences are the weak equivalences.
\end{theo}
\begin{proof}
  Weak equivalences satisfy the 2-out-of-3 condition by
  \refcoro{homeqsatisfies2out3}.

  By \refprop{WFSfromdominance}, (cofibrations, trivial fibrations) form a weak
  factorisation system. By \refprop{vogtslemmaforfibrations} and
  \refprop{ontrivfibrations} trivial fibrations are exactly fibrations that are weak
  equivalences. Hence, (cofibrations, acyclic fibrations) is a weak
  factorisation system.

  To show that (acyclic cofibrations, fibrations) form a weak
  factorisation system we use the retract argument
  (\reflemm{retractargument}). The factorisation is given by
  \refprop{factorisationtrivcoffibr}. If $u$ is a cofibration and a
  homotopy equivalence, and $f$ is a fibration, then $u \pitchfork f$
  by \refprop{vogtslemmaforcofibrations} and \refprop{ontrivcofibr}.
  The fibrations are closed under retracts because they are defined in
  terms of a lifting property, the cofibrations are closed under
  retracts because ${\mathcal C}$ is a dominance, and homotopy
  equivalences are closed under retracts by \refprop{htpyeq_closed_under_retracts}.
\end{proof}

\subsection{$\Pi$-types}
For the purpose of interpreting type theory in $\mathcal{E}_f$ we require $\Pi$- and $\Sigma$-types. The interpretation of $\Sigma$-types is trivial, as $\Sigma_f$ is just composition with $f$, and fibrations are stable under composition.

To interpret $\Pi$-types, we have to be a bit careful. A standard construction \cite{Seely:LCCC} allows us to leverage locally cartesian closed structure of a category to interpret $\Pi$-types.
Despite the fact that $\mathcal{E}$ is a topos, and hence is locally cartesian closed, we do not necessary know that $\mathcal{E}_f$ is locally cartesian closed.
However, for the purposes of interpreting type theory, we do not need all adjunctions $\Sigma_f \dashv f^\ast \dashv \Pi_f$ to be present in $\mathcal{E}_f$; we only require $\Pi_f(g)$ to exist in $\mathcal{E}_f$ whenever $f$ and $g$ are fibrations,
that is, we require an adjunction $f^\ast : (\mathcal{E}/A)_f \to (\mathcal{E}/B)_f : \Pi_f$ for a fibration $f : B \to A$ between fibrant objects.
This follows from the following proposition:

\begin{prop}{Pitypesexist}
For any fibration $f: B \to A$ the right adjoint $\Pi_f: \ct{E}/B \to \ct{E}/A$ to pullback preserves fibrations.
\end{prop}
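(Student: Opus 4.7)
The plan is to transpose a lifting problem against $\Pi_f g$ across the adjunction $f^* \dashv \Pi_f$, obtaining a lifting problem against $g$, which is then solved using the assumption that $g$ is a fibration. Unpacking definitions, it suffices to fill every square in $\ct{E}$ with left edge $\partial_i \hat\otimes u$ and right edge $\Pi_f g$, for $u$ an arbitrary cofibration and $i \in \{0,1\}$.

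Given such a square, its bottom morphism $\beta: \mathbb{I} \times V \to A$ equips the codomain of $\partial_i \hat\otimes u$ with a structure map to $A$, and this propagates to all objects appearing in the pushout diagram defining $\partial_i \hat\otimes u$, so the square may be reinterpreted as living in $\ct{E}/A$. Applying $f^* \dashv \Pi_f$ yields a square in $\ct{E}/B$ whose right edge is $g$ (over $B$ via $1_B$) and whose left edge is $f^*(\partial_i \hat\otimes u)$. A filler in $\ct{E}/B$ is just an underlying filler in $\ct{E}$: the lower-triangle commutativity with $g$ automatically makes the filler compatible with the structure maps. So it suffices to solve the transposed problem in $\ct{E}$.

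The substance of the proof is the identification $f^*(\partial_i \hat\otimes u) \cong \partial_i \hat\otimes f^*(u)$. In $\ct{E}/A$ the canonical interval is $\mathbb{I} \times A \to A$ with endpoints pulled back from $\partial_i$, and binary products over $A$ of objects already sliced over $A$ coincide with products in $\ct{E}$; consequently the Leibniz product computed in $\ct{E}/A$ agrees with the one computed in $\ct{E}$ on underlying maps. As a pullback functor $f^*$ preserves finite limits (with left adjoint $\Sigma_f$) and finite colimits (with right adjoint $\Pi_f$), so it commutes with the products and the pushout used to form the Leibniz product, and sends the slice interval and its endpoints over $A$ to their counterparts over $B$. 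Because $\mathcal{C}$ is a dominance its members are pullback-stable, so $f^*(u)$ is again a cofibration. Combining these, $g$ supplies the required filler and transposing back solves the original square. The main obstacle is exactly this combinatorial bookkeeping: making sure the Leibniz products on the two sides are computed in compatible slices and that $f^*$ truly sends the generating trivial cofibration over $A$ to one over $B$; once that is verified everything else reduces to a formal manipulation of the adjunction.
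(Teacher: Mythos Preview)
Your transposition across the adjunction $f^* \dashv \Pi_f$ is correct and matches the paper's first step. The gap is in the claimed identification $f^*(\partial_i \hat\otimes u) \cong \partial_i \hat\otimes f^*(u)$.

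You argue that the Leibniz product computed in $\ct{E}/A$ with the slice interval $\mathbb{I}\times A \to A$ agrees ``on underlying maps'' with the Leibniz product computed in $\ct{E}$. That much is true. But the two computations equip the codomain $\mathbb{I}\times V$ with \emph{different} structure maps to $A$: the slice Leibniz product gives it the map $\mathbb{I}\times V \to V \to A$ factoring through the projection, whereas in the actual lifting problem the structure map is the arbitrary bottom arrow $\beta:\mathbb{I}\times V\to A$. These are different objects of $\ct{E}/A$, and $f^*$ is sensitive to the structure map, not just the underlying map. Concretely, $f^*(\mathbb{I}\times V)$ is the pullback along $\beta$, which has no reason to be $\mathbb{I}\times f^*(V)$ when $\beta$ does not factor through the projection to $V$. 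A warning sign is that your argument never uses the hypothesis that $f$ is a fibration: since $f^*$ always has both adjoints in a topos, your proof would apply to arbitrary $f$, which is too strong.

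The paper avoids this by not attempting to recognise $f^*(\partial_i \hat\otimes u)$ as a generating trivial cofibration. Instead it observes that $\partial_i \hat\otimes u$ is a cofibration and a strong homotopy equivalence (\refprop{somestrhomeq}), and invokes the stability of strong homotopy equivalences under pullback along fibrations (\refprop{strhomeqstable}) together with pullback-stability of cofibrations. This is precisely where the fibration hypothesis on $f$ enters. One then concludes via \refprop{ontrivcofibr} that $f^*(\partial_i \hat\otimes u) \pitchfork g$.
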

\begin{proof}
Suppose $g: C \to B$ is a fibration and $\Pi_f(g)$ fits into a commuting square of the form
\diag{ \bullet \ar[d]_{\partial_0 \hat{\otimes} u} \ar[r] & \bullet \ar[d]^{\Pi_f(g)} \\
\bullet \ar[r] & A }
in which $u$ is a cofibration. The aim is to show that this square has a diagonal filler. By the adjunction $f^* \ladj \Pi_f$ this is the same as showing that
\diag{ \bullet \ar[d]_{f^*(\partial_0 \hat{\otimes} u)} \ar[r] & \bullet \ar[d]^{g} \\
\bullet \ar[r] & B }
has a diagonal filler. Since $\partial_0 \hat{\otimes} u$ is a cofibration and a strong homotopy equivalence by \refprop{somestrhomeq}, the same is true for $f^*(\partial_0 \hat{\otimes} u)$ by \refprop{strhomeqstable}. Therefore the second square has a diagonal filler by \refprop{ontrivcofibr}.
\end{proof}

To show that we do not just have $\Pi$-types, but that they also satisfy function extensionality we show the following proposition which implies this principle by Lemma 5.9 in \cite{shulman13a}.

\begin{prop}{functionextensionality}
For any fibration $f: B \to A$ the right adjoint $\Pi_f: \ct{E}/B \to \ct{E}/A$ to pullback preserves trivial fibrations.
\end{prop}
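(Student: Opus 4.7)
The plan is to mirror the structure of the proof of \refprop{Pitypesexist}: use the adjunction $f^\ast \dashv \Pi_f$ to transpose each lifting problem against $\Pi_f(g)$ into a lifting problem against $g$ itself, and then solve the latter directly using the assumption that $g$ is a trivial fibration.

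More concretely, I would take an arbitrary cofibration $u: D \to C$ together with a commuting square exhibiting a lifting problem of $u$ against $\Pi_f(g)$. Across the adjunction $f^\ast \dashv \Pi_f$, this transposes to a lifting problem of $f^\ast u$ against $g$ in $\ct{E}/B$. The crucial observation is that, viewed back in $\ct{E}$, the map $f^\ast u$ is (up to canonical isomorphism) the pullback of $u$ along the projection $C \times_A B \to C$. Since $\mathcal{C}$ is a dominance it is closed under pullback, so $f^\ast u$ is again a cofibration. Because $g$ is a trivial fibration, the transposed square admits a diagonal filler, and transposing back yields a filler for the original square. Since $u$ was an arbitrary cofibration, this shows that $\Pi_f(g)$ lies in $\mathcal{C}^\pitchfork$, i.e.\ is a trivial fibration.

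I do not anticipate any serious obstacle: the argument runs strictly parallel to the proof of \refprop{Pitypesexist}, with the role of ``$f^\ast$ preserves strong homotopy equivalences'' (supplied there by \refprop{strhomeqstable}) replaced by the much easier ``$f^\ast$ preserves cofibrations'', which is immediate from the dominance axiom. The only point worth minor care is the bookkeeping of the transposition --- ensuring that the commuting square, whose bottom edge lands in $A$, is correctly viewed as a morphism in $\ct{E}/A$ and that its image under $f^\ast$ agrees with the ordinary pullback of $u$ along a suitable map in $\ct{E}$ --- but this is entirely routine, and no new ingredient beyond the adjunction and closure of cofibrations under pullback is needed.
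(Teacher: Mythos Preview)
Your proposal is correct and matches the paper's own argument essentially verbatim: the paper simply remarks that cofibrations are stable under pullback (along any map, not just fibrations) and that therefore the same adjunction-transposition argument as in \refprop{Pitypesexist} goes through. Your identification of the key simplification---that pullback-stability of cofibrations replaces the appeal to \refprop{strhomeqstable}---is exactly the point the paper makes.
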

\begin{proof}
Cofibrations are stable under pullback along fibrations (in fact, along any map), so a similar argument as in the previous proposition establishes that trivial fibrations are stable under $\Pi$ along fibrations.
\end{proof}


\renewcommand{\id}{1}
\section{The effective topos}
\label{sec:eff}
For the remainder of this paper we work with the \emph{effective topos} $\Eff$.
We briefly describe the effective topos and the category of assemblies, without giving any proofs.
An interested reader is referred to a comprehensive book \cite{vanOosten:realiz}, the lecture notes \cite{Streicher:REAL}, and the original paper \cite{H:eff} on the subject.
We frequently conflate recursive functions and their G\"odel codes, and we use standard notation $a \cdot b$ for Kleene application and standard notation $\lambda \langle x, y \rangle.t$ for pattern-matching in $\lambda$-functions.

The objects of $\Eff$ are pairs $(X, \sim)$ where $X$ is a set and $\sim$ is a ${\mathcal P}(\omega)$-indexed partial equivalence relation on $X$; that is $\sim$ is a mapping $X \times X \to {\mathcal P}(\omega)$.
We denote $\sim(x,y)$ by $[x \sim y]$.
We require the existence of computable functions $\sym$ and $\tr$, such that if $n \in [x \sim y]$, then $\sym(n) \in [y \sim x]$ and if $m \in [y \sim z]$, then $\tr(n, m) \in [x \sim z]$.

A morphism $F : (X, \sim) \to (Y, \approx)$ is a ${\mathcal P}(\omega)$-indexed functional relation between $X$ and $Y$ that respects $\sim$ and $\approx$.
Specifically, $F$ is a mapping $X \times Y \to {\mathcal P}(\omega)$ and we require the existence of computable functions $\st_X$, $\st_Y$, $\rel$, $\sv$ and $\tot$ satisfying

\begin{itemize}
\item If $n \in F(x, y)$, then $\st_X(n) \in [x \sim x]$ and $\st_Y(n) \in [y \approx y]$;
\item If $n \in F(x, y)$ and $k \in [x \sim x']$ and
  $l \in [y \approx y']$, then $\rel(n,k,l) \in F(x', y')$;
\item If $n \in F(x,y)$ and $m \in F(x, y')$, then $\sv(n,m) \in [y \approx y']$;
\item If $n \in [x \sim x]$, then $\tot(n) \in \bigcup_{y \in Y} F(x,y)$.
\end{itemize}

Two functional relations $F, G: X \times Y \to \mathcal{P}(\omega)$ are said to be equal if there is a computable function $\varphi$ such that if $n \in F(x,y)$, then $\varphi(n) \in G(x,y)$.
The identity arrow on $(X, \sim)$ is represented by the relation $\sim$ itself.

Given two sets $A, B \in {\mathcal P}(\omega)$ we write $A \wedge B$ for the set $\{ \langle a, b \rangle \mid a \in A, b \in B \}$ where $\langle a, b\rangle$ is a surjective pairing of $a$ and $b$.
Then the composition $G \circ F$ of two functional relations $F : (X, \sim) \to (Y, \approx)$ and $G : (Y, \approx) \to (Z, \approxeq)$ is defined as $(G \circ F)(x, z) = \bigcup_{y \in Y} F(x, y) \wedge G(y, z)$.


\paragraph{Constant objects functor.}
The internal logic of $\Eff$, as is the case with any topos, has the so-called \emph{local operator} $\neg\neg : \Omega \to \Omega$.
Given an object $(A, \sim)$ and a subobject $(A', \sim_{A'})$, the latter is said to be $\neg\neg$-dense in $(A, \sim)$ if $\forall a:A (\neg\neg(A'(a)))$ holds; that is, if $A'(x)$ is non-empty whenever $[x \sim x]$ is non-empty.
An object $X$ is said to be a $\neg\neg$-sheaf if for any dense $A' \hookrightarrow A$ any map $A' \to X$ can be extended to a map $A \to X$.
In the effective topos the $\neg\neg$-sheaves can be described as objects in the image of a ``constant object functor'' $\nabla$.

\begin{definition}
  The functor $\nabla : \Sets \to \Eff$ is defined on objects as
  $\nabla(X) = (X, \sim)$ where
  \[
    [x \sim x'] =
    \begin{cases}
      \omega & \mbox{ if } x = x' \\
      \emptyset & \mbox{ otherwise}
    \end{cases}
  \]
  and on morphisms as
  \[
    \nabla(f : X \to Y)(x, y) = [x \sim f(y)]
  \]
\end{definition}

The functor $\nabla$, together with the global sections functor $\Gamma(X) = Hom_{\Eff}(1, X)$, forms a geometric morphism $\Gamma \dashv \nabla$ which embeds $\Sets$ into $\Eff$.
Note that in particular $\Gamma$ preserves finite limits and arbitrary colimits (including preservation of monomorphisms and epimorphisms) and $\nabla$ preserves arbitrary limits.

\subsection{Assemblies.} 
We say that an object $A$ is $\neg\neg$-separated if $\forall x:A \forall y:A (\neg\neg(x \sim y) \to (x \sim y))$; that is if we know that $[x \sim y]$ is non-empty and $n \in [x \sim x], m \in [y \sim y]$, then we can recursively find $\phi(n, m) \in [x \sim y]$.
Just like $\neg\neg$-sheaves are objects in the image of the inclusion of $\Sets$, the $\neg\neg$-separated objects can be described as objects in the image of the inclusion of the category of \emph{assemblies} into $\Eff$.

\begin{definition}
  An \emph{assembly} is a pair $(X, E_X)$ where $X$ is a set, and $E_X : X \to {\mathcal P}(\omega)$ is a function such that $E_X(x) \neq \emptyset$ for every $x \in X$.
  We will call such a function a \emph{realizability relation} on $X$.

  A morphism of assemblies $f : (X, E_X) \to (Y, E_Y)$ is a map $f : X \to Y$ such that there is a computable function $\varphi$ such that for every $x \in X$ and $n \in E_X(x)$, $\varphi(n) \downarrow$ and $\varphi(n) \in E_Y(f(x))$.
  In this case we say that $\varphi$ \emph{tracks} or \emph{realizes} $f$.
\end{definition}

We denote the category of assemblies and assembly morphisms as $\Asm$.
Sometimes we drop the realizability relation if it is obvious from the context.
We also write $n \Vdash_X x$ for $n \in E_X(x)$.

\begin{example}
  The natural numbers object $\mathbf{N}$ in $\Eff$ is an assembly $(\omega, E_{\mathbf{N}})$ with $E_{\mathbf{N}}(i) = \{ i \}$.
\end{example}

\begin{example}
  The terminal object $1$ of $\Eff$ is an assembly $(\{ \ast \}, E_1)$ with $E_1(\ast) = \{ 0 \}$.
\end{example}


The category of assemblies is a full subcategory of the effective topos via an inclusion which sends an assembly $(X, E_X)$ to an object $(X, \sim_X)$ where
\[
  [x \sim_X x'] =
  \begin{cases}
    E_X(x) & \mbox{ if } x = x' \\
    \emptyset & \mbox{ otherwise }
  \end{cases}
\]
and which sends a map $f : (X, E_X) \to (Y, E_Y)$ to an induced relation
\[
  F(x, y) = [x \sim_X x] \wedge [y \sim_Y f(x)]
\]

\begin{example}
  Note that every $\nabla(X)$ is an assembly $(X, E)$ with $E(x) = \omega$, and $\nabla$ factors through $\Asm \hookrightarrow \Eff$.
\end{example}

\subsection{Model structure on $\Eff_f$}
\label{sec:model_structure_eff}
In order to apply the result from Section \ref{sec:qms} to the effective topos $\Eff$, we must select an interval object $\mathbb{I}$ and a class of morphisms ${\mathcal C}$ satisfying certain conditions.
We take $\mathcal{C}$ to be $Mon$, the class of all monomorphisms, and we take the interval object to be $\mathbb{I} = \nabla(2)$.
Alternatively, $\mathbb{I}$ can be described as an assembly $(\{0,1\},E)$ with $E(i) = \{ 1 \}$.
The connection structure $\land, \lor : \mathbb{I} \times \mathbb{I} \to \mathbb{I}$ is defined simply as
\begin{align*}
  x \land y & = min(x, y) \qquad \mbox{tracked by } \lambda x.0 \\
  x \lor y & = max(x, y) \qquad \mbox{tracked by } \lambda x.0 
\end{align*}
It is straightforward to verify that the class of monomorphisms satisfies the conditions outlined at the beginning of Section \ref{sec:setup}.

For the rest of this paper we use the following notation.
We write $s$ and $t$ for the source and target maps $X^{\partial_0} : X^{\mathbb{I}} \to X$ and $X^{\partial_1} : X^{\mathbb{I}} \to X$, respectfully.
We write $r$ for the ``reflexivity'' map $X^{!_\mathbb{I}} : X \to X^{\mathbb{I}}$.

\section{Contractible maps in $\Eff$}
\label{sec:eff_contr}
In this section we are going to characterize contractible objects in $\Eff$ as uniform inhabited objects (\refprop{fibrant_uniform_impl_contractible}), and characterize trivial fibrations in $\Asm$ as uniform epimorphisms (\refprop{uniform_maps_asm}).
The latter characterization will allow us to give a concrete description of fibrant assemblies in terms of realizers (\refprop{char_fibrant_asm}).

\subsection{Uniform objects and contractibility}

\begin{defi}{uniform}
  An object $(X, \sim)$ is said to be \emph{uniform} if it is covered by a $\neg\neg$-sheaf, i.e.
  there is an epimorphism $\nabla Y \to (X, \sim)$.
\end{defi}

\begin{prop}{uniform_characterisation}
  An object is uniform if it is isomorphic to an object $(X, \sim)$ such that there is a number $n \in \bigcap_{x:X} [x \sim x]$.
\end{prop}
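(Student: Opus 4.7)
The statement is a characterisation, so the plan is to prove both directions, although the easier implication is the one explicitly written.

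\emph{Sufficient direction (right to left).} Suppose $(X,\sim)$ admits $n_0\in\bigcap_{x\in X}[x\sim x]$. I would exhibit an epimorphism $\nabla X\twoheadrightarrow(X,\sim)$ directly. Put $F\colon X\times X\to\mathcal{P}(\omega)$ to be $F(y,x)=[y\sim x]$, viewing the first coordinate as living in $\nabla X$. The axioms $\st_X,\st_{\nabla X},\sym,\tr$ for the PER $\sim$ immediately supply $\st_X,\st_Y,\rel,\sv$; the only non-trivial clause is totality, and for this I would set $\tot(k):=n_0$, using precisely the hypothesis that $n_0$ realises every self-identity. The epi-tracker then exists because from $k\in[x\sim x]$ we may return $k\in F(x,x)$, so $\lambda k.k$ tracks $\forall x\exists y.F(y,x)$.

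\emph{Necessary direction (left to right).} Here is the main point. Suppose $e\colon\nabla Y\twoheadrightarrow(X,\sim)$ is an epi, given by a functional relation $F\colon Y\times X\to\mathcal{P}(\omega)$ with trackers $\st_{\nabla Y},\st_X,\rel,\sv,\tot$. Since $[y\sim y]_{\nabla Y}=\omega$, the number $\tot(0)$ lies in $F(y,x_y)$ for some (chosen) $x_y\in X$, uniformly in $y$. This observation is what lets us transport uniformity from $\nabla Y$ to the quotient. I would then define an isomorphic representative $(X',\sim')$ by taking $X':=Y$ as a set and
\[
[y\sim' y'] \;:=\; \bigcup_{x\in X} F(y,x)\wedge F(y',x).
\]
The element $n:=\langle \tot(0),\tot(0)\rangle$ lies in $[y\sim' y]$ for every $y$, giving the required uniform witness. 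Checking that $\sim'$ is a PER is routine: symmetry uses swapping of the pair, and transitivity of $[y\sim' y']$ with $[y'\sim' y'']$ is obtained by applying $\sv$ of the original $F$ to the middle components to line up the witnessing $x$'s, then $\rel$ to slide one coordinate.

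\emph{The isomorphism.} I would produce $G\colon X'\times X\to\mathcal{P}(\omega)$ by $G(y,x):=F(y,x)$, show it is a functional relation $(X',\sim')\to(X,\sim)$ (the new $\st_{X'}(n)$ is taken to be $\langle n,n\rangle$), and then argue that $G$ is both mono and epi in $\Eff$, hence an iso. Mono-ness is immediate from the definition of $\sim'$: any $n\in G(y,x)$ and $m\in G(y',x)$ give $\langle n,m\rangle\in[y\sim' y']$ by construction. Epi-ness is inherited from the epi-tracker of the original $e$, since $G$ and $F$ are the same relation on the nose.

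\emph{Expected obstacle.} The only non-routine part is choosing the correct definition of $\sim'$: it must be uniform on the diagonal and at the same time compatible enough with $F$ to make the identity-on-realisers functional relation both mono and epi. The clause $\bigcup_x F(y,x)\wedge F(y',x)$ is doing exactly this double duty, and once it is written down everything else reduces to bookkeeping with the trackers of $F$. I would therefore spend most of the write-up on verifying that $\sim'$ is a PER and that $G$ inverts in $\Eff$, and keep the rest brief.
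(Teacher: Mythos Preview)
Your proof is correct in both directions. The paper itself does not give a proof: it simply cites Proposition~2.4.6 of van~Oosten's book \emph{Realizability: An Introduction to its Categorical Side}, so you have supplied a self-contained argument where the paper defers to an external reference.

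The construction you chose for the necessary direction --- taking the carrier of $\nabla Y$ and equipping it with the PER $[y\sim' y']=\bigcup_{x}F(y,x)\wedge F(y',x)$ --- is the standard way this result is proved in the realizability literature, and is essentially what one finds in van~Oosten's book. Your verification that $G$ is both mono and epi (hence an isomorphism, since $\Eff$ is balanced) is the right way to conclude. One small remark: you rely implicitly on the fact that epimorphisms in $\Eff$ are exactly the internally surjective maps (so that the ``epi-tracker'' you invoke actually exists); this is true in any topos but is worth stating explicitly in a write-up.
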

\begin{proof}
  By  \cite[Proposition 2.4.6]{vanOosten:realiz}.
\end{proof}

Recall than an object $X$ is said to be \emph{contractible} if the unique map $X \to 1$ is a trivial fibration.
In our case, since the dominance $\mathcal{C}$ is exactly the class of monomorphisms, contractible objects are exactly the \emph{injective} objects. 
In an elementary topos every injective object is a retract of some power-object.
Because every power-object is \emph{uniform} in the effective topos, it follows that every contractible object must be uniform.
It is natural to ask if the converse of this fact holds as well.
The answer to this question is ``no'', unless we restrict ourselves to fibrant objects.

\begin{prop}{contractible_impl_uniform}
  Every contractible (injective) object is uniform and has a global element. 
\end{prop}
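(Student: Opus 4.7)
The plan is to prove the two assertions separately, essentially spelling out the remark preceding the statement. Since the class of cofibrations $\mathcal{C}$ is taken to be all monomorphisms, a trivial fibration is a morphism with the right lifting property against every mono, and contractibility of $X$ amounts to ordinary injectivity of $X$ in $\Eff$.

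For the existence of a global element, I would apply injectivity to the monomorphism $0 \hookrightarrow 1$. The unique map $0 \to X$ together with the identity $1 \to 1$ forms a commuting square, and a diagonal filler is precisely a morphism $1 \to X$, i.e.\ a global element.

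For uniformity, the plan is to exhibit $X$ as a retract of a uniform object and then transfer uniformity along the retraction. First I would invoke the standard topos-theoretic fact that the singleton map $\{\cdot\}_X \colon X \hookrightarrow \Omega^X$ is a monomorphism; applying injectivity of $X$ to extend $\mathrm{id}_X$ along this mono produces a retraction $r \colon \Omega^X \twoheadrightarrow X$. Next I would note that $\Omega^X$ is uniform in $\Eff$: the subobject classifier $\Omega$ is uniform (the internal realizability structure on $\Omega \cong \mathcal{P}(\omega)^{\text{-}}$ admits a fixed code realizing $[p \sim p]$ for every $p$), and uniformity is preserved by taking exponentials with arbitrary base, so $\Omega^X$ admits a cover $\nabla Y \twoheadrightarrow \Omega^X$. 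Composing this cover with the retraction yields an epimorphism $\nabla Y \twoheadrightarrow X$, showing that $X$ is uniform in the sense of \refdefi{uniform}.

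The main point requiring care is the uniformity of the power object $\Omega^X$. If one prefers a more direct route, one can instead invoke Proposition \refprop{uniform_characterisation}: realizers in $\Eff$ for the internal equality on $\Omega$ come from a fixed pair of realizers for bi-implication, so a single natural number witnesses $[p \sim p]$ uniformly in $p \in \Omega$, and exponentiating preserves this property. Either way, the combination of "injective $\Rightarrow$ retract of $\Omega^X$" with "$\Omega^X$ uniform" and closure of uniformity under epimorphic images (epis compose) completes the argument.
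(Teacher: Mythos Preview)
Your proposal is correct and follows essentially the same route as the paper: both obtain the global element by lifting against $0 \hookrightarrow 1$, and both establish uniformity by realising $X$ as a retract of its power object $\mathcal{P}(X) = \Omega^X$ (via the singleton map and injectivity) and then using that power objects in $\Eff$ are uniform. The only difference is cosmetic: the paper simply cites the literature for uniformity of $\mathcal{P}(X)$, whereas you sketch an argument via uniformity of $\Omega$ and closure under exponentiation.
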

\begin{proof}
  Suppose $X$ is an injective object in $\Eff$.
  By a topos-theoretic argument, $X$ is a retract of ${\mathcal P}(X)$.
  It has been shown in \cite[Proposition 3.2.6]{vanOosten:realiz} that every powerset is uniform.
  Because $X$ is covered by a uniform object, we can conclude that $X$ is uniform itself.

  A global element of $X$ can be obtained by extending the unique map $0 \to X$ along the monomorphism $0 \to 1$.
\end{proof}

\begin{prop}{fibrant_uniform_impl_contractible}
  If a uniform fibrant object $(X, \sim)$ has a global element $s : 1 \to (X, \sim)$, then $(X, \sim)$ is contractible.
\end{prop}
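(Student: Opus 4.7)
The plan is to show that $!_X : (X, \sim) \to 1$ is a trivial fibration by climbing the chain of equivalences from Section \ref{sec:qms}. By \refprop{ontrivfibrations}, a trivial fibration is precisely a fibration that is a strong homotopy equivalence; the fibration hypothesis is given, and by \refprop{vogtslemmaforfibrations}, applied between the fibrant objects $X$ and $1$, it suffices to exhibit a plain homotopy equivalence between them.

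Since $!_X \circ s = 1_1$ holds automatically, the only substantive task is to construct a homotopy $H : \mathbb{I} \times X \to X$ satisfying $H \circ (\partial_0 \times X) = s \circ !_X$ and $H \circ (\partial_1 \times X) = 1_X$. By \refprop{uniform_characterisation}, I may assume $(X, \sim)$ carries a common reflexivity realizer $n_0 \in \bigcap_{x \in X} [x \sim x]$, and (after quotienting the underlying set by the identifications forced by $\sim$) that $(X, \sim)$ is realised as an assembly with common realizer $n_0$. I then define $H$ as the morphism in $\Eff$ corresponding, under the embedding of assemblies, to the underlying set function $h : \{0, 1\} \times X \to X$ given by $h(0, x) = s(\ast)$ and $h(1, x) = x$, tracked by the constant computable function $\lambda k.\, n_0$. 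This tracker is valid because its output $n_0$ lies in the realizability set of $s(\ast)$ and of every $x \in X$ simultaneously, so it realizes $h(i, x)$ in both cases $i = 0$ and $i = 1$. The endpoint identities follow from functoriality of the embedding: on underlying sets, $h \circ (\partial_0 \times X) = \pi \mapsto s(\ast)$ agrees with $s \circ !_X$, and $h \circ (\partial_1 \times X) = 1_X$.

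The key subtlety is the construction of $H$. A naive case analysis on $i \in \mathbb{I}$ cannot be tracked computationally, since the realizer of a point in $\nabla 2$ is merely a natural number carrying no information about whether the point is $0$ or $1$. Uniformity of $(X, \sim)$ is exactly what sidesteps this difficulty: it produces a single realizer that witnesses reflexivity at every point of $X$ at once, enabling the tracker to return the correct output without inspecting $i$. The existence of the global element $s$ is then what ensures that one of the two ``target points'' of the homotopy, namely $s(\ast)$, is actually available as a point of $X$.
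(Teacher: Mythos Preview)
Your overall strategy matches the paper's: reduce contractibility to showing that $!_X$ is a homotopy equivalence (invoking \refprop{ontrivfibrations} and \refprop{vogtslemmaforfibrations}), and then build the homotopy $s \circ !_X \simeq 1_X$ explicitly using the common reflexivity realizer supplied by uniformity. The paper does exactly this, defining $\theta(0,x,y) = [y \sim c]$ and $\theta(1,x,y) = [x \sim y]$ and checking totality via the uniform realizer.

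There is, however, a genuine gap in your execution. You claim that ``after quotienting the underlying set by the identifications forced by $\sim$'' the object $(X,\sim)$ may be realised as an assembly. This is not justified and is in fact false in general: being isomorphic to an assembly is precisely being $\neg\neg$-separated, and uniform fibrant objects with a global element need not be $\neg\neg$-separated. A concrete counterexample is $\Omega$ itself, which is injective (hence contractible, hence uniform, fibrant, and pointed) but not $\neg\neg$-separated. Once you pass to the quotient you describe, the resulting assembly is generally \emph{not} isomorphic to $(X,\sim)$, so you are no longer proving the statement about the original object. Consequently your appeal to ``the embedding of assemblies'' and to a morphism presented as a tracked set-function is illegitimate: for a general $\Eff$-object, a morphism must be given as a functional relation, not merely as a tracked function on underlying sets.

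The fix is minor and is what the paper does: drop the assembly reduction and define the homotopy directly as a functional relation $H((i,x),y)$ with $H((0,x),y) = [y \sim c]$ and $H((1,x),y) = [x \sim y]$. Strictness, relationality, and single-valuedness are routine; totality is exactly where your uniform realizer $n_0$ enters, since for any $x$ one may take $y_0 = c$, $y_1 = x$ and return $n_0 \in [c \sim c] \cap [x \sim x]$ independently of $i$. Your intuition about why the tracker cannot inspect $i$ and why uniformity is the crux is entirely correct; only the packaging via assemblies needs to be removed.
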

\begin{proof}
  We can assume that $s$ is of the form $s(\ast, x) = [x \sim c]$ for some $c \in X$.
  We shall prove that $s$ is a homotopy equivalence with homotopy inverse $!_X : (X, \sim) \to 1$.

  The composition $!_X \circ s$ is the identity by the universal property of the terminal object.
  The homotopy $\theta : s \circ !_X \sim 1_X$ is constructed as follows.
\[
    \begin{cases}
      \theta(0, x, y) = s(\ast, y) = [y \sim c] \\
      \theta(1, x, y) = [x \sim y]
    \end{cases}
\]
Clearly, $\theta : \mathbb{I} \times (X, \sim) \to (X, \sim)$ is strict and single-valued.
To see that $\theta$ is total, it suffices to provide an element $\psi(n) \in \theta(0, x, y_0) \cap \theta(1, x, y_1) = [y_0 \sim c] \cap [x \sim y_1]$ for some $y_0, y_1$ given that $n \in [x \sim x]$.
But if we take $y_0 = c$ and $y_1 = x$, then the required element $\psi(n) \in [c \sim c] \cap [x \sim x]$ can be obtained from the uniformity of $(X, \sim)$ by \refprop{uniform_characterisation}.
\end{proof}

\subsection{Uniform maps and fibrant assemblies}
In the previous subsection we have discussed uniform objects.
Now we move on to uniform maps.

\begin{defi}{uniform_map}
  A map $F : (Y, \approx) \to (X, \sim)$ is said to be \emph{uniform} if it is covered by a $\neg\neg$-sheaf in the slice topos $\Eff/(X, \sim)$.
  That is, there is a map $\alpha : Z \to \Gamma(X, \sim)$ such that a map $S : (X, \sim) \times_{\nabla\Gamma(X, \sim)} \nabla(Z) \to (X, \sim)$ is a pullback of $\nabla(\alpha)$ along $\eta_X : (X, \sim) \to \nabla\Gamma(X, \sim)$, and there is an epimorphism $R : (X, \sim) \times_{\nabla\Gamma(X, \sim)} \nabla(Z) \to (Y, \approx)$ over $(X, \sim)$, as depicted below.
  \[
    \xymatrix{
      (X, \sim) \times_{\nabla\Gamma(X, \sim)} \nabla(Z) \ar[r] \ar[d]_S & \nabla(Z) \ar[d]^{\nabla(\alpha)}\\
      (X, \sim) \ar[r]_{\eta_X}& \nabla\Gamma(X, \sim)
    }
    \qquad
    \xymatrix{
      (X, \sim) \times_{\nabla\Gamma(X, \sim)} \nabla(Z) \ar@{->>}[r]^-{R} \ar[dr]_S & (Y, \approx) \ar[d]^{F} \\
      & (X, \sim)
      }
  \]
\end{defi}

\begin{prop}{uniform_map_characterisation}
  A map $F : (Y, \sim) \to (X, \approx)$ is uniform iff there are recursive functions $\alpha, \beta$ such that for all $y \in Y$, $x \in X$, $n \in [x \approx x]$, $m \in F(y, x)$ there exists an $y' \in Y$ and
\[
  \begin{cases}
    \alpha(n) \in F(y', x) \\
    \beta(n, m) \in [y \sim y']
  \end{cases}
\]
In particular a map $f : (Y, E_Y) \to (X, E_X)$ between assemblies is uniform iff there is a recursive $\alpha$ such that
  \[
    \forall x \in X \forall y \in Y \forall n \in E_X(x) (f(y) = x \to \alpha(n) \in E_Y(y))
  \]
  In other words, $\alpha(n) \in \bigcap_{y \in f^{-1}(x)} (E_Y(y))$ whenever $n \in E_X(x)$.
  In such a situation we say that every fiber of $f$ is uniform and $\alpha$ witnesses the uniformity.
\end{prop}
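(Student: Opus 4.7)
The plan is to directly translate the abstract definition of uniform map into concrete realizability data, by unfolding the pullback and the epimorphism condition. We prove the iff by computation in $\Eff$, working in both directions, and then specialize to assemblies.

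For the ``only if'' direction, suppose $F$ is uniform, so we are given a set $Z$, a map $\alpha_0 : Z \to \Gamma(X,\approx)$, and an epimorphism $R : W \twoheadrightarrow (Y,\sim)$ over $(X,\approx)$, where $W := (X,\approx) \times_{\nabla\Gamma(X,\approx)} \nabla(Z)$. The first step is to compute $W$ explicitly. Since $\nabla(Z)$ has realizer set $\omega$ on every element and the pullback is taken over a $\neg\neg$-sheaf, the equivalence relation on $W$ at an element lying over $x \in X$ reduces, up to a computable equivalence, to $[x \approx x]$; no extra realizability information enters from the $\nabla(Z)$ side. Now $R$, being a functional relation in $\Eff$, is tracked by a recursive function $r$ whose input is a realizer in $[x \approx x]$ and whose output realizes the corresponding image $y' \in Y$. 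Given $y \in Y$, $x \in X$, $n \in [x \approx x]$, and $m \in F(y,x)$, I would use the epimorphism property of $R$ (i.e.\ the internal surjectivity) to obtain some $z \in Z$ with $\alpha_0(z)$ representing $x$ and some $y' \in Y$ in the image. Setting $\alpha(n) := r(n)$ gives an element of $F(y',x)$, and $\beta(n,m)$ is obtained by feeding both $m \in F(y,x)$ and $\alpha(n) \in F(y',x)$ into the single-valuedness witness $\sv$ of $F$ to produce a realizer of $[y \sim y']$.

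For the ``if'' direction, take $Z := Y$ and define $\alpha_0 : Y \to \Gamma(X,\approx)$ by sending $y$ to the global element of $(X,\approx)$ represented by any $x$ with $F(y,x) \neq \emptyset$; this is well-defined up to the equivalence on $\Gamma(X,\approx)$ by single-valuedness of $F$. Then the candidate for $R : W \to (Y,\sim)$ is the functional relation induced by the second projection $W \to \nabla(Y)$ coupled with $F$. The hypothesis $\alpha(n) \in F(y',x)$ tracks this relation (providing the witness $\tot$), while $\beta(n,m)$ supplies the needed strictness/relationality compatibility with $\sim$. Epimorphism onto $(Y,\sim)$ is witnessed directly: every $y \in Y$ is hit by $z = y \in Z$, with the realizer produced by $\alpha$.

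The main obstacle is the explicit computation of the pullback $W$ and the verification that its realizer structure reduces cleanly to $[x \approx x]$: because $\Gamma(X,\approx)$ is a quotient of the set of inhabited elements of $X$ rather than $X$ itself, identifying morphisms in and out of $\nabla\Gamma(X,\approx)$ requires some bookkeeping about which realizers survive the unit map $\eta_X$. Once this is in place, both directions are essentially unfoldings of the definitions. The assembly case then follows immediately: for an assembly $(X, E_X)$, $\Gamma(X, E_X)$ is just $X$ as a set, the equivalence relation $[x \approx x']$ is diagonal, and so the role of $\beta$ collapses trivially, leaving only the condition that a single recursive $\alpha$ uniformly realizes all $y \in f^{-1}(x)$ from any $n \in E_X(x)$.
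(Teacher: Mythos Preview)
The paper does not actually prove this proposition: its entire proof is a citation of \cite[Proposition 3.4.6]{vanOosten:realiz}. So there is no detailed argument in the paper to compare against; your attempt to unfold the definition directly is already more than the paper does.

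That said, there is a genuine error in your ``only if'' direction. You claim that $\beta(n,m)$ is obtained by feeding $m \in F(y,x)$ and $\alpha(n) \in F(y',x)$ into the single-valuedness witness $\sv$ of $F$. But $\sv$ for a morphism $F : (Y,\sim) \to (X,\approx)$ takes realizers of $F(y,x)$ and $F(y,x')$ and returns a realizer of $[x \approx x']$; it says nothing about $[y \sim y']$ given $F(y,x)$ and $F(y',x)$. Indeed, the whole content of uniformity is precisely that such a $[y \sim y']$ can be produced, so it cannot come for free from $F$ alone.

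The repair is to use the epimorphism $R$ rather than $F$. From $m \in F(y,x)$ obtain via $\st_Y$ a realizer of $[y \sim y]$, and apply the internal surjectivity of $R$ to get a realizer of $R((x'',z),y)$ for some $(x'',z) \in W$. The over-$(X,\approx)$ condition together with $\sv$ of $F$ gives a realizer of $[x \approx x'']$, so one can transport to a realizer of $R((x,z),y)$. Independently, $\tot_R(n)$ lies in $R((x,z),y')$ for some $y'$, using that the realizer structure on $W$ over $x$ is just $[x \approx x]$ as you observed. Now \emph{single-valuedness of $R$} (not of $F$) yields the desired realizer of $[y \sim y']$, and this is your $\beta(n,m)$.

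Your ``if'' direction will run into the mirror-image issue: with $R((x,z),y)$ essentially given by $F(y,x)$, checking that $R$ is single-valued requires producing $[y \sim y']$ from $F(y,x)$ and $F(y',x)$, and here it is $\beta$ (not $\alpha$ alone) that does the work. The assembly specialization is fine once the general case is fixed.
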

\begin{proof}
  By  \cite[Proposition 3.4.6]{vanOosten:realiz}.
\end{proof}

The next proposition is aimed at generalizing \refprop{contractible_impl_uniform} to uniform maps.
We have not managed to extend the correspondence to arbitrary uniform maps.
However, we can generalize the correspondence to the uniform maps in $\Asm$ (\refprop{uniform_maps_asm}).

\begin{theo}{uniform_maps_negnegsep}
  Let $F : (Y, \approx) \to (X, \sim)$ be a map and let $(Y, \approx)$ be $\neg\neg$-separated.
  If $F$ is a trivial fibration, then $F$ is a uniform map.
\end{theo}
\begin{proof}
  Consider the following pullback
  \begin{displaymath}
    \xymatrix{
      (A, \asymp) \ar[r] \ar[d]_{\pi} & \nabla \Gamma(Y, \approx) \ar[d]^{\nabla \Gamma F}\\
      (X, \sim) \ar[r]_{\eta} & \nabla \Gamma(X, \sim)
      }
  \end{displaymath}
  The object $(A, \asymp)$ can be described as
  \[
  A = \{ ([y], x) \mid \nabla\Gamma(F)([y]) = [x] \}
  \]
  where $[y]$ is the equivalence class of $y'$ such that $[y \approx y']$ is non-empty, thus $\nabla\Gamma(F)([y]) = [x]$ means that $F(y, x)$ is non-empty; the realizability relation on $A$ is
  \[
  ([y], x) \asymp ([y'], x') =
  \begin{cases}
    [x \sim x'] & \mbox{ if } [y] = [y'] \mbox{ i.e. } [y \approx y'] \neq \emptyset \\
    \emptyset & \mbox{ otherwise }
  \end{cases}
  \]
  Then consider a map $S : (Y, \sim) \to (A, \asymp)$ defined as $S = \langle F, \eta_Y \rangle$.
  Explicitly:
  \[
    S(y, [y'], x) = F(y, x) \wedge \{ 0 \mid y \in [y'] \mbox{ i.e. } [y \approx y'] \neq \emptyset \}
  \]
  If $(Y, \approx)$ is $\neg\neg$-separated, then $S$ is a mono.
  To see this, suppose $\langle m_1, m_2 \rangle \in S(y_1, [y], x) \wedge S(y_2, [y], x)$.
  We are to provide an element of $[y_1 \approx y_2]$.
  Because $m_1 \in S(y_1, [y], x)$ we know that $[y \approx y_1]$ is non-empty.
  Similarly for $m_2$ and $y_2$.
  Then, from $m_1$ and $m_2$ we can get realizers for $[y_1 \approx y_1]$ and $[y_2 \approx y_2]$.
  Then $[y_1 \approx y_2]$ follows from $\neg\neg$-separation.

  Then, because $S$ is a mono and $F$ is a trivial fibration, the square below has a filler $H : (A, \asymp) \to (Y, \sim)$.
  \begin{displaymath}
    \xymatrix{
      (Y, \approx) \ar[d]_S \ar@{=}[r] & (Y, \approx) \ar[d]^F \\
      (A, \asymp) \ar@{-->}[ur]^{H} \ar[r]_{\pi} & (X, \sim)
      }
    \end{displaymath}
    Then $H$ is an epimorphism, as it is a retract, and hence $F$ is a uniform map.
    
\end{proof}

Now we can show:
\begin{prop}{uniform_maps_asm}
  A map $f$ is a trivial fibration between assemblies iff it is a uniform epimorphism between assemblies.
\end{prop}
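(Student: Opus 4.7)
The statement asserts an equivalence, so my plan is to prove the two directions separately, with the forward direction essentially an immediate consequence of prior results and the reverse direction being the main content.

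For the forward implication, suppose $f: (Y,E_Y) \to (X,E_X)$ is a trivial fibration between assemblies. Since assemblies are $\neg\neg$-separated, \refprop{uniform_maps_negnegsep} immediately gives that $f$ is uniform. To see that $f$ is also epi in $\Asm$, note that the inclusion $0 \to X$ is a monomorphism, hence a cofibration, so the square with top edge $0 \to Y$ and bottom edge $1_X: X \to X$ has a diagonal filler $X \to Y$ by the definition of trivial fibration. This diagonal is a section of $f$, and a split epi is surjective on underlying sets, which is exactly what epi means in $\Asm$.

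For the reverse implication, suppose $f: (Y,E_Y) \to (X,E_X)$ is a uniform epi in $\Asm$, and let $\alpha$ be a recursive function witnessing the uniformity as in \refprop{uniform_map_characterisation}; so whenever $n \in E_X(x)$ and $y \in f^{-1}(x)$, we have $\alpha(n) \in E_Y(y)$. Consider a lifting problem
\diag{ A \ar[d]_m \ar[r]^h & Y \ar[d]^f \\ B \ar[r]_g & X }
where $m$ is a monomorphism (equivalently, injective on underlying sets). Using the axiom of choice on the metatheoretic level, define $k: B \to Y$ by setting $k(m(a)) = h(a)$ for $b = m(a)$ (well-defined since $m$ is injective), and choosing $k(b)$ to be any element of $f^{-1}(g(b))$ for $b \notin m(A)$ (nonempty because $f$ is surjective). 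By construction $km = h$, and in both cases $fk(b) = g(b)$, so $fk = g$.

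It remains to show $k$ is tracked. Let $\gamma$ be a tracker for $g$; I claim $\alpha \circ \gamma$ tracks $k$. Indeed, for any $b \in B$ and $n \in E_B(b)$, we have $\gamma(n) \in E_X(g(b))$ and, because $k(b) \in f^{-1}(g(b))$, the uniformity of $\alpha$ yields $\alpha(\gamma(n)) \in E_Y(k(b))$. Thus $k$ is the desired diagonal filler, proving $f$ is a trivial fibration. The main subtlety I anticipate is just making transparent that the uniformity witness $\alpha$ does not care \emph{which} element of the fiber $k$ picked — this is precisely why uniformity is the right hypothesis: it converts a classical, noncanonical choice of preimage into a uniformly tracked morphism of assemblies.
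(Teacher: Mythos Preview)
Your proof is correct and follows essentially the same approach as the paper's. You are slightly more explicit than the paper in verifying that a trivial fibration is an epimorphism (via the section obtained from lifting against $0 \to X$, which the paper leaves implicit), while for the converse both you and the paper construct a set-level filler using the axiom of choice and then track it by composing the uniformity witness $\alpha$ with a realizer for the bottom map of the square.
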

\begin{proof} 
  $(\Leftarrow)$ For the ``if'' direction, suppose $f$ is a uniform epimorphism, with uniformity witnessed by $\alpha$ (in the sense of \Cref{prop:uniform_map_characterisation}), and we have the following commutative diagram in which $i$ is a monomorphism:
  \begin{displaymath}
    \xymatrix{ A \ar@{>->}[d]_{i} \ar[r]^{g} & Y \ar[d]^{f} \\
       B \ar[r]_{h} & X  }
  \end{displaymath}
  As $\Gamma$ preserves monomorphisms and epimorphisms of $\Asm$, we can find a filler $k : B \to Y$ for the diagram above in $\Sets$, under the image of $\Gamma$ (such filler exists by axiom of choice).
  Thus, to fill in the diagram above in $\Asm$ we are to find a realizer for $k$.
  One can check that the realizer is provided by $\lambda n.
  \alpha(\underline{h} \cdot n)$, where $\underline{h}$ is a realizer for $h$.

  $(\Rightarrow)$ The ``only if'' direction follows from \reftheo{uniform_maps_negnegsep}.
\end{proof}

Using the \refprop{uniform_maps_asm} we can characterize the fibrant assemblies recursive-theoretically in $\Eff$. For this, we need to introduce a notion of path-connectedness.

\begin{definition}
  Let $X$ be an assembly, and let $x \in X$.
  A \emph{path-connected component} of $x$, denoted as $[x]$ is a set of $y \in X$ such that there is a map $p : \mathbb{I} \to X$ such that $p(0) = x$ and $p(1) = y$.
  We also say that $y$ is path-connected to $x$.
\end{definition}

\begin{prop}{char_fibrant_asm}
  An assembly $X$ is fibrant iff for every $n \in E_X(x)$ one can uniformly find $\alpha(n)$ that realizes the path-connected component of $x$, i.e.
  $\alpha(n) \in \bigcap_{y \in [x]} E_X(y)$.
\end{prop}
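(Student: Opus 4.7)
The plan is to reduce fibrancy of $X$ to the assertion that the source and target maps $s, t : X^\mathbb{I} \to X$ are trivial fibrations, and then to unpack this at the level of realizers. First, I would invoke \refprop{onusingleibnizadjunction}.2 to replace fibrancy of $X$ by trivial fibrancy of $\hat\exp(\partial_i, !_X) = X^{\partial_i}$, namely of $s$ and $t$. Since $\Asm$ is cartesian closed, $X^\mathbb{I}$ is still an assembly, and then \refprop{uniform_maps_asm} replaces trivial fibrancy of $s$ and $t$ by their being uniform epimorphisms between assemblies. A key technical point is to describe the realizability relation on $X^\mathbb{I}$ concretely: because $E_\mathbb{I}(i) = \{1\}$, a realizer of a path $p : \mathbb{I} \to X$ amounts to a single number $\underline{p} \cdot 1 \in E_X(p(0)) \cap E_X(p(1))$.

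For the ``only if'' direction, assume $X$ is fibrant and let $\alpha_s$ be a recursive witness for the uniformity of $s$; that is, for every $n \in E_X(x)$ the number $\alpha_s(n)$ realizes every path starting at $x$. Given any $y \in [x]$, by definition there is a path $p$ with $p(0) = x$ and $p(1) = y$, so $\alpha_s(n) \cdot 1 \in E_X(p(1)) = E_X(y)$. Setting $\alpha(n) := \alpha_s(n) \cdot 1$ then yields the required uniform realizer of $[x]$.

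For the ``if'' direction, given $\alpha$ with $\alpha(n) \in \bigcap_{y \in [x]} E_X(y)$ whenever $n \in E_X(x)$, the natural candidate uniformity witness for both $s$ and $t$ is $\beta(n) := \lambda m.\alpha(n)$. For $\beta$ to track every path through $x$ I need both endpoints of such a path to lie in $[x]$. For $s$ this is immediate from the definition of $[x]$; for $t$ I would use that $[x]$ is in fact symmetric, because the involution $i \mapsto 1-i$ is an automorphism of $\mathbb{I} = \nabla(2)$ and thus reverses any path. Once both endpoints are seen to lie in $[x]$, the hypothesis on $\alpha$ gives $\beta(n) \cdot 1 = \alpha(n) \in E_X(p(0)) \cap E_X(p(1))$, as required.

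The main obstacle I anticipate is not conceptual but notational: one must check that the exponential $X^\mathbb{I}$ as computed in $\Eff$ really is the assembly exponential (so that \refprop{uniform_maps_asm} applies), and one must keep the roles of $s$ and $t$ straight in the ``only if'' half so as to pair up a realizer of $x$ with paths having the right endpoint at $x$. Beyond that the argument is just careful bookkeeping about how recursive codes of paths act on realizers.
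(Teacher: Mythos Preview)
Your proposal is correct and follows essentially the same route as the paper: reduce fibrancy of $X$ to $s,t:X^\mathbb{I}\to X$ being trivial fibrations via \refprop{onusingleibnizadjunction}.2, invoke \refprop{uniform_maps_asm} to translate that into a uniformity condition, and use the involution on $\mathbb{I}=\nabla 2$ to collapse the two cases. The paper's proof is much terser---it applies the twist map $tw$ up front to reduce to checking only $s$ and then simply says ``apply \refprop{uniform_maps_asm}''---whereas you actually unpack the realizer-level content of that application, which is a welcome elaboration rather than a different argument.
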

\begin{proof}
  By \refprop{onusingleibnizadjunction}, an assembly $X$ is fibrant iff both $s = \hat\exp(\partial_0, X \to 1), t = \hat\exp(\partial_1, X \to 1): X^{\mathbb{I}} \to X$ are trivial fibrations.
  Note that the interval object $\nabla(2)$ comes with ``twist'' map $tw : \nabla(2) \to \nabla(2)$ which is a self-inverse and which satisfies $X^{tw} \circ s = t$, $X^{tw} \circ t = s$.
  Thus, for an assembly $X$ to be fibrant it is sufficient to check that the source map $s : X^\mathbb{I} \to X$ is a trivial fibration.
  Then apply \refprop{uniform_maps_asm} to the map $s : X^{\mathbb{I}} \to X$.
\end{proof}

\section{Discrete objects and discrete reflection}
\label{sec:eff_discr}
In this section we describe the reflexive subcategory of discrete objects in $\Eff$ and show that every discrete object is fibrant.
We also prove that the unit of the discrete reflection of a fibrant assembly is a homotopy equivalence, which allows us to concretely characterize the homotopy category of fibrant assemblies as the category of modest sets (\refprop{asm_ho_mod}).

\subsection{Discrete objects and discrete maps}

\begin{defi}{discrete}
  An object of $\Eff$ is said to be \emph{discrete} if it is a quotient of the subobject of the natural numbers object.
\end{defi}

The following proposition characterizes discrete objects up-to isomorphism.

\begin{prop}{discrete_characterisation}
  An object of $\Eff$ is discrete iff it is isomorphic to an object $(X, \sim)$ such that $n \in [x \sim x] \cap [y \sim y]$ implies that $x = y$.
\end{prop}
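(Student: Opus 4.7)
The plan is to prove the iff by unwinding the definitions of a subobject of $\mathbf{N}$ and of an epi in $\Eff$ on each side.

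For the ``if'' direction, suppose $(X, \sim)$ satisfies $n \in [x \sim x] \cap [y \sim y] \Rightarrow x = y$. I will exhibit it directly as a quotient of a subobject of $\mathbf{N}$. Take $S := \{n \in \omega : (\exists x \in X)\ n \in [x \sim x]\}$ and view $(S, \sim_S)$ as a subobject of $\mathbf{N}$ via the strict realizability $[n \sim_S n'] = \{n\}$ if $n = n' \in S$ and $\emptyset$ otherwise. The hypothesis ensures that for each $n \in S$ there is a unique $x_n \in X$ with $n \in [x_n \sim x_n]$, so the candidate $Q(n, x) := \{n\}$ if $n \in [x \sim x]$ and $\emptyset$ otherwise is well-defined as a single-valued functional relation; strictness and totality are then immediate. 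The resulting map $q : (S, \sim_S) \to (X, \sim)$ is epi because any realizer $n \in [x \sim x]$ itself lies in $Q(n, x)$, so the identity on $\omega$ tracks the required surjection of realizers.

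For the ``only if'' direction, start from an epi $q : (S, \sim_S) \twoheadrightarrow (X, \sim)$ with $(S, \sim_S) \hookrightarrow \mathbf{N}$; we may assume $S \subseteq \omega$ with the analogous strict realizability, and denote the functional relation of $q$ by $Q$. I will replace $(X, \sim)$ by an isomorphic object enjoying the separation property. Let $X'$ be the quotient of $\{x \in X : [x \sim x] \neq \emptyset\}$ by the equivalence relation $x \equiv y \iff [x \sim y] \neq \emptyset$ (symmetry and transitivity of $\sim$ make this well-defined on the restricted set), and set
\[ [[x] \sim' [y]] := \{s \in S : (\exists x' \in [x], y' \in [y])\ Q(s, x') \neq \emptyset \text{ and } Q(s, y') \neq \emptyset\}. \]
The separation property for $(X', \sim')$ then follows because, if $s$ self-realizes both $[x]$ and $[y]$, then picking $r_1 \in Q(s, x')$ and $r_2 \in Q(s, y')$ and applying single-valuedness of $Q$ yields $\sv(r_1, r_2) \in [x' \sim y']$, so $x' \equiv y'$ and $[x] = [y]$.

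What remains is building mutually inverse morphisms $F : (X, \sim) \to (X', \sim')$ and $G : (X', \sim') \to (X, \sim)$. The key ingredient is that the epi condition for $q$ supplies a recursive $\psi$ converting each $n \in [x \sim x]$ into an element of $Q(s, x)$ for some $s \in S$; the forward map is essentially $F(x, [y]) := \bigcup_{y' \in [y]} [x \sim y']$, with the required $S$-realizer at the $X'$-output computed from the input via $\psi$ and strictness of $Q$, and the backward map $G$ runs $Q$ in the opposite direction using its totality axiom to pick an $x \in X$ from an $s \in S$. Checking that the composites collapse to the identity functional relations amounts to a diagram chase in recursive functions using strictness, single-valuedness, totality, and relatedness of $Q$. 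The main obstacle I anticipate is precisely this last step: because $q$ is a relation rather than an underlying-set function, and because $(X, \sim)$ can have externally distinct but internally equal elements, the two functional relations constituting the iso rely on several interlocking appeals to the four axioms of $Q$ together with the tracker $\psi$, so verifying relatedness and computing the composites is where the bulk of the bookkeeping lies.
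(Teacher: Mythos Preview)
The paper itself does not prove this statement; it simply cites van Oosten's book (Proposition 3.2.20). So you are supplying an argument that the paper defers to a reference, and there is nothing to compare your route against in the paper.

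Your ``if'' direction has a genuine gap: the relation $Q(n,x) := \{n\}$ when $n \in [x \sim x]$ (and $\emptyset$ otherwise) fails the \emph{relatedness} axiom for morphisms in $\Eff$, which you never check. Concretely, take $X = \{a,b\}$ with $[a \sim a] = \{0\}$, $[b \sim b] = \{1\}$, $[a \sim b] = \{2\}$, $[b \sim a] = \{3\}$; one easily writes down recursive $\sym$ and $\tr$ by a finite lookup table, and the separation hypothesis holds since $\{0\} \cap \{1\} = \emptyset$. Then $0 \in Q(0,a)$, $0 \in [0 \sim_S 0]$, and $2 \in [a \sim b]$, so relatedness would require an element of $Q(0,b)$; but $Q(0,b) = \emptyset$ because $0 \notin [b \sim b]$. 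The fix is to carry a witness of the target equality inside the realizer: let $x_n$ denote the unique element with $n \in [x_n \sim x_n]$ and set $Q(n,x) := \{\langle n, m\rangle : m \in [x_n \sim x]\}$. Relatedness is then $\langle n, m\rangle, l \mapsto \langle n, \tr(m,l)\rangle$; strictness, single-valuedness, totality, and the epi witness all go through (for the last two use $\langle n, n\rangle$, since $n \in [x_n \sim x_n]$).

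Your ``only if'' direction is on the right track; the plan of pulling realizers back along the epi to elements of $S$ and then quotienting by internal equality is the standard one. One point to make explicit: an arbitrary subobject of $\mathbf{N}$ is given by a strict predicate $A : \omega \to \mathcal{P}(\omega)$ and need not literally have $[n \sim_S n] = \{n\}$, nor is it automatic that the two presentations are isomorphic as subobjects. This is harmless for your argument, since strictness lets you recover $n$ from any element of $A(n)$ and it is $n$ that you feed into $\sim'$; but you should say this rather than silently assume the special form.
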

\begin{proof}
  By \cite[Proposition 3.2.20]{vanOosten:realiz}.
\end{proof}

Discrete objects can be characterized as objects which have no non-constant paths.

\begin{prop}{discrete_obj_no_nontriv_paths}
  An object $X$ is discrete if and only if the canonical map $p : X \to X^{\mathbb{I}}$ is an isomorphism.
  Hence, if $H : \mathbb{I} \times A \to X$ is a morphism into a discrete object $X$, then there is a map $h : A \to X$ such that $H = h \circ \pi_2$.
\end{prop}
\begin{proof}  
  The first point holds by \cite[Propositions 3.2.21 and 3.2.22]{vanOosten:realiz}.

  For the second point, take $h$ to be the composite $A \xrightarrow{\bar{H}} X^{\mathbb{I}} \to X$.
\end{proof}

Discreteness can be generalized from objects to maps as follows:

\begin{defi}{discrete_map}
  A map $F : (Y, \approx) \to (X, \sim)$ is \emph{discrete} if it is a quotient of the subobject of the natural numbers object in $\Eff/(X, \sim)$, which is represented by a map $(X, \sim) \times \N \to (X, \sim)$.
\end{defi}

\begin{proposition}[{\cite[Proposition 3.4.3]{vanOosten:realiz}}]
  If $F$ is a discrete map, then there is a computable function $\varphi$ that given $n \in F(y, x), m \in F(y', x)$, $u \in [y \approx y] \cap [y' \approx y']$ provides an element $\varphi(n, m,u) \in [y \approx y']$.
\end{proposition}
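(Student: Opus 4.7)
The plan is to reduce this to the absolute characterization \refprop{discrete_characterisation} by unpacking \refdefi{discrete_map} inside the slice $\Eff/(X,\sim)$. By definition, $F$ factors as $F = q \circ j$ where $j : (Z,\equiv) \hookrightarrow (X,\sim) \times \N$ is a monomorphism over $(X,\sim)$ and $q : (Z,\equiv) \to (Y,\approx)$ is an epimorphism over $(X,\sim)$. The slice analogue of the realizability description of $\N$ in $\Eff$ yields a computable function $\nu$ such that any self-realizer $r \in [z \equiv z]$ uniformly computes the natural number $\nu(r)$ which is the $\N$-coordinate of $j(z)$: morally, each element of $(Z,\equiv)$ carries a natural-number code that can be read off its realizers.

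Given the data $n \in F(y,x)$, $m \in F(y',x)$, and $u \in [y \approx y] \cap [y' \approx y']$, I would first use the totality and surjectivity clauses of the functional relation $q$ to produce, uniformly in the inputs, lifts $z, z' \in Z$ with $q(z) = y$ and $q(z') = y'$, together with self-realizers $r \in [z \equiv z]$ and $r' \in [z' \equiv z']$ built computably from $(n,u)$ and $(m,u)$ respectively. Applying $\nu$ gives natural numbers $k = \nu(r)$ and $k' = \nu(r')$. Because the $\N$-component of an element of $(X,\sim) \times \N$ is determined by the realizer alone, and because both $r$ and $r'$ are built from the same $u$ on the $(X,\sim)$-part, we obtain $k = k'$. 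Hence $z \equiv z'$, since they lie over the same pair in $(X,\sim) \times \N$, and the single-valuedness clause of $q$ converts a witness of $z \equiv z'$ together with the witnesses for $q(z,y)$ and $q(z',y')$ into the required element of $[y \approx y']$, which we package as $\varphi(n,m,u)$.

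The principal obstacle is producing the lifts through the epimorphism $q$ in a genuinely uniform way and ensuring that the resulting self-realizers $r, r'$ contain $u$ unchanged as their $(X,\sim)$-components, so that $\nu$ returns a value depending only on $u$ and not on internal choices made by the surjectivity realizer. This is a bookkeeping exercise with the $\st$, $\sv$, $\tot$, and $\rel$ data of the functional relations $F$, $j$, and $q$; once carried out, the equality $k = k'$ is forced and the remaining assembly of $\varphi$ from its pieces is a direct recursive calculation. One can regard this argument as the fiberwise propagation of the absolute proof of \refprop{discrete_characterisation}, with the $\Eff/(X,\sim)$-internal reading of ``subobject of $\N$'' supplying exactly the data needed to extract natural-number codes from realizers.
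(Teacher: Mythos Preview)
The paper gives no proof of this proposition; it is stated with a bare citation to van Oosten's book, so there is no in-paper argument to compare against. Your overall strategy---unpack the slice definition to obtain a subobject $j : (Z,\equiv) \hookrightarrow (X,\sim)\times\N$ and an epi $q : (Z,\equiv)\twoheadrightarrow (Y,\approx)$ over $(X,\sim)$, lift $y,y'$ through $q$, read off the $\N$-coordinates, and push back down---is the right one. (A small slip: ``$F = q\circ j$'' does not type-check, since $q$ and $j$ share a domain; what you mean is that $F\circ q$ equals $\pi_X\circ j$ as maps into $(X,\sim)$.)

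The step you label the ``principal obstacle'' is, however, not bookkeeping but the whole point, and your framing risks a genuine gap. You propose to build \emph{separate} self-realizers $r$ from $(n,u)$ and $r'$ from $(m,u)$ and then argue $\nu(r)=\nu(r')$ because both ``contain $u$ unchanged on the $(X,\sim)$-part''. As written this is unjustified: if $n,m$ actually enter the construction of $r,r'$, nothing forces $\nu$ to ignore them. The clean route is to use $u$ \emph{alone} in the lifting step. Feed $u$ to the surjectivity realizer $\sigma$ for the epi $q$; the single number $\sigma(u)$ then lies simultaneously in $\bigcup_z q(z,y)$ and $\bigcup_z q(z,y')$, since $u$ realizes both $[y\approx y]$ and $[y'\approx y']$. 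Applying $\st_Z$ yields one realizer in $[z_0\equiv z_0]\cap[z_1\equiv z_1]$, whose $\N$-component is therefore shared by $z_0$ and $z_1$. The data $n,m$ are needed only \emph{afterwards}, to witness that the $X$-coordinates of $z_0,z_1$ are both $\sim x$, so that a realizer for $[z_0\equiv z_1]$ can be assembled and pushed through $q$ to $[y\approx y']$. Reorganized this way, the rest of your plan goes through; but as you have it, the ``ensure $r,r'$ contain $u$ unchanged'' formulation is the wrong target and would not obviously succeed.
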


\begin{prop}{discrete_map_discr_base_fibrant}
  Every discrete map $F : (Y, \approx) \to (X, \sim)$ with a discrete base $(X, \sim)$ is a fibration.
\end{prop}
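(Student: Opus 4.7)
The plan is to first prove that the total space $(Y,\approx)$ is itself a discrete object, and then exploit discreteness of both the source and target to solve any lifting problem by simply projecting out the interval coordinate.

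For the first step, I would apply \refprop{discrete_obj_no_nontriv_paths}: it suffices to show that every map $p : \mathbb{I} \to Y$ factors through $1$. Composing $p$ with $F$ yields a path in $X$, which, since $X$ is discrete, factors as $x_0 \circ {!}_{\mathbb{I}}$ for some $x_0 : 1 \to X$. By the universal property of the pullback, $p$ then factors through the fiber $x_0^*(Y,\approx)$. Now discrete maps are stable under pullback: the functor $x_0^* : \Eff/X \to \Eff/1 = \Eff$ is both a left and a right adjoint (to $\Sigma_{x_0}$ and $\Pi_{x_0}$ respectively), so it preserves both subobjects and quotients, and it sends the NNO $X \times \mathbf{N}$ of the slice to $\mathbf{N}$ itself. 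Hence $x_0^*F$ is a discrete map with terminal base, i.e.\ a discrete object, and applying \refprop{discrete_obj_no_nontriv_paths} once more, the lifted path must factor through $1$, so $p$ does as well.

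For the second step, consider any lifting problem against $\partial_i \hat\otimes u$ for a cofibration $u : A \to B$: the top arrow is determined by compatible maps $k_0 : B \to Y$ and $k_1 : \mathbb{I} \times A \to Y$, and the bottom arrow is a map $h : \mathbb{I} \times B \to X$. Since $X^{\mathbb{I}} \cong X$ (because $X$ is discrete, equivalently right-orthogonal to $\mathbb{I} \to 1$), $h$ factors uniquely as $h_0 \circ \pi_B$ for some $h_0 : B \to X$. Similarly, since $Y$ is now known to be discrete, $Y^{\mathbb{I}} \cong Y$ and $k_1$ factors through $\pi_A$; the required agreement of $k_0$ and $k_1$ on $\{i\} \times A$ then forces this factorisation to be $k_0 \circ u \circ \pi_A$. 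The diagonal filler is therefore just $k_0 \circ \pi_B : \mathbb{I} \times B \to Y$: its restriction to $\{i\} \times B$ is $k_0$, its restriction to $\mathbb{I} \times A$ recovers $k_1$, and post-composition with $F$ gives $h_0 \circ \pi_B = h$ using $F \circ k_0 = h_0$.

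The only delicate point is the stability of discrete maps under pullback used in Step~1; once $Y$ has been shown to be discrete, the fibration condition essentially collapses to the observation that both $X$ and $Y$ absorb the interval coordinate, so the projection $\pi_B$ produces the filler automatically and the argument requires no interaction with the connection structure or the specific class of cofibrations.
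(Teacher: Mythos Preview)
Your argument is correct and takes a genuinely different route from the paper's proof.

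The paper does not first reduce to the total space being discrete. Instead it works directly at the realizer level with the two pieces of data in the hypotheses: the discrete-map property of $F$ (a computable $\varphi$ producing an element of $[y \approx y']$ from realizers of $F(y,x)$, $F(y',x)$ and a common self-identity realizer for $y,y'$) and discreteness of $X$. Given a lifting problem against $\partial_0 \hat\otimes u$, the paper observes that a realizer for $(0,a)$ is automatically a realizer for $(1,a)$; feeding this into the totality witness for $\alpha_1$ produces elements $y_0,y_1$ with a common self-identity realizer, and commutativity of the square together with discreteness of $X$ forces $F(y_0,x)$ and $F(y_1,x)$ for the \emph{same} $x$. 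The discrete-map witness $\varphi$ then yields $[y_0 \approx y_1]$, from which the paper concludes $\alpha_1(0,a,-) \simeq \alpha_1(1,a,-)$ and reads off the filler.

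Your approach is more categorical: you first package the two hypotheses into the single statement ``$Y$ is discrete'' (via pullback-stability of discrete maps and \refprop{discrete_obj_no_nontriv_paths}), and then solve the lifting problem by projecting out the interval. This is conceptually cleaner and makes the filler $k_0 \circ \pi_B$ completely transparent. The one point that deserves care is the implication ``$Y$ discrete $\Rightarrow$ every $\mathbb{I} \times A \to Y$ factors through $\pi_A$'': this is the \emph{internal} orthogonality $Y^{!}\colon Y \xrightarrow{\ \cong\ } Y^{\mathbb{I}}$, which is strictly stronger than the statement of \refprop{discrete_obj_no_nontriv_paths} (that only treats maps out of $\mathbb{I}$ itself, not $\mathbb{I}\times A$). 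The internal statement is true---indeed the paper tacitly uses the consequence $s=t\colon D^{\mathbb{I}}\to D$ later, and it follows from van~Oosten's characterisation---but unwinding it amounts to exactly the same ``common realizer for $(0,a)$ and $(1,a)$ forces equal values in a discrete target'' computation that the paper performs. So what your abstraction buys is a cleaner narrative, at the cost of hiding the realizer manipulation inside the lemma $Y^{\mathbb{I}}\cong Y$ rather than eliminating it.
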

\begin{proof}
  By \refprop{discrete_characterisation} we may assume that $X$ is of the form $(X, \sim)$ such that $([x \sim x] \cap [x' \sim x']) \neq \emptyset \implies x = x'$.

  Because $F$ is a discrete map, the following proposition holds in $\Eff$:
  \[
    \forall y y' x. (F(y, x) \wedge F(y', x) \wedge ([y\approx y] \cap [y' \approx y']) \to [y\approx y'])
  \]
  Then consider the following lifting problem
  \begin{displaymath}
    \xymatrixcolsep{5pc}
    \xymatrix{ (\{0\} \times B) \cup (\mathbb{I} \times A) \ar@{>->}[d]_{\partial_0 \leibniz u} \ar[r]^-{[\alpha_0, \alpha_1]} & (Y, \approx) \ar[d] \\
       \mathbb{I} \times B \ar[r]_{\beta} & (X, \sim)  }
  \end{displaymath}
  Note that $\alpha_1(0, a, y) \simeq \alpha_1(1, a, y)$.
  For a given $n \in [a \sim a]$ we can obtain $\tot_{\alpha_1}(1, n) \in \bigcup_{y \in Y} \alpha_1(0, a, y) \cap \bigcup_{y\in Y} \alpha_1(1, a, y)$, i.e.
  $\tot_{\alpha_1}(1, n) \in \alpha_1(0, a, y_0) \cap \alpha_1(1, a, y_1)$ for some $y_0, y_1 \in Y$.
  Thus, $\st_Y(\tot_{\alpha_1}(1, n)) \in [y_0 \approx y_0] \cap [y_1 \approx y_1]$.
  If we can show that both $y_0$ and $y_1$ map to the same element under $F$, then, employing the property of the discrete maps, we can show that $[y_0 \approx y_1]$; then, if $m \in \alpha_1(0, a, y)$ we get $[y \approx y_0]$ by single-valuedness, and $[y \approx y_1]$ by transitivity, hence we get $\alpha_1(1, a, y)$ because $\alpha_1$ respects $\approx$.

  To see that $y_0$ and $y_1$ gets mapped to the same basepoint, we can use similar reasoning as in the previous point to establish that $\beta(0, b, x) \simeq \beta(1, b, x)$.
  By commutativity of the diagram, we have for any $x$,
  $F(y_0, x) \simeq (\beta \circ (I \times u))(0, a, x)$ and $F(y_1, x) \simeq (\beta \circ (I \times u))(1, a, x)$.
  However, $(\beta \circ (I \times u))(0, a, x) = \exists b.
  u(a, b) \wedge \beta(0, b, x) \simeq \exists b.
  u(a, b) \wedge \beta(1, b, x) = (\beta \circ (I \times u))(1, a, x)$.
  From that we get $F(y_0, x) \wedge F(y_1, x)$.
\end{proof}

\begin{coro}{discrete_impl_fibrant}
  Every discrete object $X$ is fibrant.
\end{coro}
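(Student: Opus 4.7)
The plan is to derive this corollary directly from \refprop{discrete_map_discr_base_fibrant} by recognizing that the unique map $X \to 1$ from a discrete object to the terminal object fits its hypotheses. So I would first check that the terminal object $1$ is itself discrete: using the characterization in \refprop{discrete_characterisation}, one verifies that $1$, with its unique realizability class, trivially satisfies the condition that realizers of self-equality force equality of elements (there is only one element). Alternatively, $1$ is a quotient of $\mathbf{N}$ via the unique map $\mathbf{N} \to 1$, so it is discrete by definition.

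Next, I would note that for the slice $\Eff/1 \simeq \Eff$, the natural numbers object in the slice is just $\mathbf{N}$ itself (i.e.\ $1 \times \mathbf{N} \to 1$), so \refdefi{discrete_map} for the map $X \to 1$ reduces exactly to \refdefi{discrete}; in particular, $X \to 1$ is a discrete map in $\Eff/1$ whenever $X$ is a discrete object in $\Eff$. Having secured that both the total space and the base are discrete in the appropriate sense, the hypotheses of \refprop{discrete_map_discr_base_fibrant} are satisfied, so $X \to 1$ is a fibration, which by \refdefi{fibrations} means precisely that $X$ is fibrant.

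There is essentially no obstacle here; the real content lives in \refprop{discrete_map_discr_base_fibrant}, and the corollary is just the instantiation of that proposition at the base object $1$. The only point that merits a brief sentence in the write-up is the identification of discrete maps into $1$ with discrete objects, so that the quotient-of-a-subobject-of-$\mathbf{N}$ definitions in the two settings agree.
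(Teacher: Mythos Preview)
Your proposal is correct and follows exactly the paper's approach: the paper's proof is the single line ``$X$ is discrete iff the map $X \to 1$ is discrete,'' which is precisely the identification you spell out, together with the (implicit) observation that $1$ is a discrete base so \refprop{discrete_map_discr_base_fibrant} applies. You have simply made explicit the two small checks the paper leaves to the reader.
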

\begin{proof}
  $X$ is discrete iff the map $X \to 1$ is discrete.
\end{proof}

\begin{example}
  Recall that a \emph{modest set} is a discrete assembly.
  Examples of modest sets are $\N$ and $1$; in fact, all finite types in $\Eff$ are modest sets.

  We denote the full subcategory of modest sets as $\Mod \hookrightarrow \Eff$.
  By \refcoro{discrete_impl_fibrant} every modest set is fibrant.
\end{example}

\refprop{discrete_map_discr_base_fibrant} cannot be extended to arbitrary discrete maps.
For this consider the following counterexample.

\begin{example}
\label{example:discrete_map_no_fib}
Note that if we restrict our attention to the category of assemblies, then a map $f : (Y, E_Y) \to (X, E_X)$ is discrete iff every fiber $f^{-1}(x)$ is discrete (\cite[Proposition 3.4.4]{vanOosten:realiz}).

Consider an inclusion of assemblies $f = [ \partial_0, \partial_1 ] : 2 \to \mathbb{I}$.
Each fiber $f^{-1}(i)$ is discrete, hence $f$ is a discrete map.
However, it is not a fibration.
Consider the following lifting problem (in the category of assemblies):
\[
  \xymatrixcolsep{5pc}
  \xymatrix{
    (\{0\} \times \mathbb{I}) \cup (\mathbb{I} \times \{ 0 \}) \ar[d]_{\partial_0 \leibniz \partial_0} \ar[r]^-{[\phi, \phi]} \ar[d]
      & 2 \ar[d]^{[\partial_0, \partial_1]}\\
    \mathbb{I} \times \mathbb{I} \ar[r]_{\lor} & \mathbb{I}
  }
\]
Here the map $(\{0\} \times \mathbb{I}) \cup (\mathbb{I} \times \{ 0 \}) \to \mathbb{I} \times \mathbb{I}$ embeds an open box without two sides $\llcorner$ into a square $\Box$, and the map $\phi$ 
is defined as $\phi(i) = 0$, and $\lor$ is a connection defined at the beginning of Section \ref{sec:model_structure_eff}.
We claim that this lifting problem has no solution; for suppose $h$ is such a filler.
Because $h : \mathbb{I} \times \mathbb{I} \to 2$ is a map from a uniform object into a discrete object, $h$ has to be constant.
Because $[\partial_0, \partial_1]$ is essentially an identity (on the level of sets), $[\partial_0, \partial_1](h(1, 1)) = [\partial_0, \partial_1](h(0, 0))$.
However, $1 \lor 1 = 1 \neq 0 = 0 \lor 0$, so the lower square cannot commute.
Hence, $f$ is not a fibration.
\end{example}

\subsection{Path contraction and discrete reflection}

The inclusion of discrete objects in the effective topos has a left adjoint called the \emph{discrete reflection}, see \cite[Proposition 3.2.19]{vanOosten:realiz}.
It was noted in \cite{htpyEff} that discrete reflection can be seen as internally as a set of path-connected components.

\begin{proposition}
The discrete reflection $X_d$ of an object $X$ is a coequalizer of the diagram
\[
  \xymatrix{
    X^{\mathbb{I}} \ar@<1ex>[r]^-{s} \ar@<-1ex>[r]_-{t} & X \ar@{->>}[r]^q & X_d
    }
\]
\end{proposition}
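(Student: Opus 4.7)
The task is to verify the universal property of the coequalizer for $q : X \to X_d$, namely (i) $q \circ s = q \circ t$, and (ii) any $f : X \to Z$ with $f \circ s = f \circ t$ factors uniquely through $q$.

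For (i), I would invoke \refprop{discrete_obj_no_nontriv_paths}. Discreteness of $X_d$ is equivalent to $\mathbb{I}$-orthogonality, i.e.\ the reflexivity map $r : X_d \to X_d^{\mathbb{I}}$ is an isomorphism, which in turn forces the two face maps $X_d^{\partial_0}, X_d^{\partial_1} : X_d^{\mathbb{I}} \to X_d$ to agree (both are its inverse, since both compose with $r$ to give the identity). Naturality of $(-)^{\mathbb{I}}$ applied to $q$ then yields $q \circ s = X_d^{\partial_0} \circ q^{\mathbb{I}} = X_d^{\partial_1} \circ q^{\mathbb{I}} = q \circ t$.

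For (ii), since the reflection unit $q$ is epi (as $X_d$ is constructed as a quotient in the sense of \refdefi{discrete}), uniqueness is automatic. For existence, my approach is to form the coequalizer $p : X \to C$ of $s, t$ in the cocomplete topos $\Eff$. By exactly the same orthogonality argument as in part (i), every morphism from $X$ into a discrete object $D$ automatically equalizes $s$ and $t$, and hence factors uniquely through $p$. So if I can show that $C$ is itself discrete, then $p : X \to C$ enjoys the defining universal property of the discrete reflection, giving $C \cong X_d$; composing the coequalizer factorization $f = \bar f \circ p$ with this isomorphism then produces the desired factorization of $f$ through $q$.

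The main obstacle is proving that the coequalizer $C$ is discrete. Naturality applied to $p$ gives $s_C \circ p^{\mathbb{I}} = p \circ s_X = p \circ t_X = t_C \circ p^{\mathbb{I}}$, so discreteness of $C$ (via $\mathbb{I}$-orthogonality) reduces to showing that $p^{\mathbb{I}}$ is an epimorphism, which is not formal because $(-)^{\mathbb{I}}$ is only a right adjoint. The cleanest route is instead to appeal to the internal characterization of the discrete reflection of $X$ as the object of path-connected components, noted just above the proposition and established in \cite{htpyEff}; this description identifies $X_d$ precisely with the quotient of $X$ by the equivalence relation generated by the image of $(s, t) : X^{\mathbb{I}} \to X \times X$, which is exactly $C$ by construction.
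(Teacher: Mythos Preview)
Your naturality argument for (i) is correct and is essentially the same computation the paper uses when verifying that maps into discrete objects coequalize $s$ and $t$. The gap is in (ii). You correctly isolate the crux of the proposition --- discreteness of the coequalizer $C$ of $s,t$ --- but then defer it to \cite{htpyEff}. The sentence preceding the proposition only credits \cite{htpyEff} with \emph{noting} the path-component description; the proposition is precisely where this observation is turned into a theorem and proved, so citing it back amounts to assuming what is to be shown. (A smaller point: your justification that the reflection unit $q$ is epi via \refdefi{discrete} is off --- that definition concerns what it means for an object to be discrete, not how the reflector is constructed.)

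The ingredient you are missing is that $\mathbb{I}=\nabla 2$ is \emph{projective} in $\Eff$ (\cite[Proposition~3.2.7]{vanOosten:realiz}). With this in hand, discreteness of the coequalizer $C$ is direct: given a path $\pi:\mathbb{I}\to C$, projectivity lifts it along the coequalizer epi $q$ to some $p:\mathbb{I}\to X$. Now use the connection and set $P:=\overline{p\circ{\wedge}}:\mathbb{I}\to X^{\mathbb{I}}$; one checks $tP=p$ and $sP=p\,\partial_0\,!_{\mathbb{I}}$, so $qsP=qtP$ forces $\pi=qp=qp\,\partial_0\,!_{\mathbb{I}}$, i.e.\ $\pi$ is constant. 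By \refprop{discrete_obj_no_nontriv_paths} this is exactly discreteness of $C$. Note that only \emph{external} projectivity of $\mathbb{I}$ is used, which is strictly weaker than the internal statement (that $q^{\mathbb{I}}$ be epi) you were contemplating.
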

\begin{proof}
  First, we check that $X_d$ is discrete.
  For this, we reason in the internal logic.
  Let $\pi : \mathbb{I} \to X_d$ be a path.
  We will show that it is trivial, i.e. $\pi = \pi \circ \partial_0 \circ !_{\mathbb{I}}$.
  Because $\mathbb{I} = \nabla(2)$ is internally projective (\cite[Proposition 3.2.7]{vanOosten:realiz}), there is a map $p : \mathbb{I} \to X$ such that $q \circ p = \pi$.
  Define $P = \overline{p \circ \wedge} : \mathbb{I} \to X^{\mathbb{I}}$. Then $qsP = qtP$, as $q$ coequalizes $s$ and $t$. But $tP = p$ and $sP = p \circ \partial_0 \circ !_{\mathbb{I}}$. Hence, $qp = \pi = q p \partial_0 !_{\mathbb{I}} = \pi \circ \partial_0 \circ !_\mathbb{I}$. 

  Therefore, $X_d$ is discrete.
  To see that it satisfies the universal property, let $f : X \to D$ be a map into a discrete object $D$.
  Then $f \circ s = s \circ f^{\mathbb{I}}$ and $f \circ t = t \circ f^{\mathbb{I}}$, by the naturality.
  By \Cref{prop:discrete_obj_no_nontriv_paths}, $s = t : D^{\mathbb{I}} \to D$, hence $f \circ s = f \circ t$.
  As $q$ is the coequalizer of $s$ and $t$, there is a unique map $\bar{f}$ such that $f = \bar{f} \circ q$.
\end{proof}

It is known that in a model category where every object is cofibrant, every fibrant object can be equipped with a weak groupoid structure.
We will need the path composition operation of the groupoid for the characterization of discrete reflection. Specifically, there is a composition operation $c : X^{\mathbb{I}} \times_X X^{\mathbb{I}} \to X^{\mathbb{I}}$ satisfying
  \begin{itemize}
  \item $s \circ c = s \circ \pi_1$ and $t \circ c = t \circ \pi_2$;    
  \item $c  \langle rs, \id \rangle \sim \id$;
  \item $c \langle \id, rt \rangle \sim \id$;
  \item $c \langle c ,  \id \rangle \sim c \langle \id, c \rangle$.
  \end{itemize}
See, e.g., \cite[Appendix A.1]{benno:pathcat_id_types} for explicit constructions.

It follows, using the composition operation, that for a fibrant object $X$, the image of $\langle s, t \rangle : X^{\mathbb{I}} \to X \times X$ is an equivalence relation. Thus, for a fibrant assembly $(X, E_X)$ the discrete reflection $X_d$ can be described as an assembly $(X / \sim_p, E)$ where
$$
x \sim_p y \iff \exists p : \mathbb{I} \to X (p(0) = x \wedge p(1) = y)
$$
and $E([x]) = \bigcup_{y \in [x]} E_X(y)$. One can check directly that $X_d$ is indeed the discrete reflection of $X$ with the unit $\eta_X : x \mapsto [x]$ tracked by $\lambda x .x$.

Using this explicit description we can prove the following statement.

\begin{prop}{discr_reflection_htpy_eq}
  For a fibrant assembly $X$, the unit of the discrete reflection unit $\eta : X \to X_d$ is a homotopy equivalence.
\end{prop}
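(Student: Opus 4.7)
The plan is to construct an explicit homotopy inverse $g: X_d \to X$ by picking representatives of path-connected components, with the tracker supplied by the uniformity witness of fibrancy from \refprop{char_fibrant_asm}. Recall from the explicit description just above the proposition that $X_d = (X/\sim_p, E)$ with $E([x]) = \bigcup_{y \in [x]} E_X(y)$ and $\eta: x \mapsto [x]$ is tracked by the identity.

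First I would use the axiom of choice to pick, for each equivalence class $[x] \in X/\sim_p$, a representative $c_{[x]} \in [x]$, and define $g: X_d \to X$ at the level of underlying sets by $g([x]) = c_{[x]}$. The crucial step is to show $g$ is a morphism in $\Asm$. For this I invoke \refprop{char_fibrant_asm}, which supplies a recursive function $\alpha$ such that whenever $n \in E_X(x)$, the value $\alpha(n)$ lies in $\bigcap_{y \in [x]} E_X(y)$. Then $\alpha$ tracks $g$: if $n \in E_{X_d}([x]) = \bigcup_{y \in [x]} E_X(y)$, so that $n \in E_X(y)$ for some $y$ with $[y]=[x]$, we get $\alpha(n) \in E_X(c_{[x]})$ since $c_{[x]} \in [x]$.

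Next I would verify the two homotopy conditions. The composite $\eta \circ g$ sends $[x]$ to $[c_{[x]}] = [x]$, so it equals $1_{X_d}$ on the nose. For $g \circ \eta \simeq 1_X$, I would define the homotopy $H: \mathbb{I} \times X \to X$ at the level of sets by $H(0, x) = c_{[x]}$ and $H(1, x) = x$; the composites with $\partial_0 \times X$ and $\partial_1 \times X$ then give $g\circ\eta$ and $1_X$ respectively. Because a realizer of $(i,x) \in \mathbb{I} \times X = \nabla(2) \times X$ carries no information about $i \in \{0,1\}$, the tracker of $H$ must produce a single value realizing both $H(0,x) = c_{[x]}$ and $H(1,x) = x$; the same $\alpha$ works, since $\alpha(n)$ realizes every element in the path-connected component $[x]$, which contains both $c_{[x]}$ and $x$.

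The main obstacle is precisely this tracking requirement: the choice function $[x] \mapsto c_{[x]}$ is defined purely set-theoretically and carries no computational content, so in order for $g$ and $H$ to lift from functions between sets to morphisms of assemblies one genuinely needs a single realizer uniform across an entire path-connected component. This is exactly what \refprop{char_fibrant_asm} delivers from fibrancy of $X$, so the argument pivots entirely on that characterisation and would break down without the fibrancy hypothesis.
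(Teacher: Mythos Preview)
Your proposal is correct and follows essentially the same approach as the paper: choose representatives via the axiom of choice, use the fibrancy characterisation from \refprop{char_fibrant_asm} to supply a uniform realizer across each path-component, and build the homotopy $g\eta \simeq 1_X$ with that same uniform realizer. The only cosmetic differences are that the paper phrases the tracker via the uniformity witness for $s:X^{\mathbb{I}}\to X$ (writing $\lambda n.\,\alpha(n)\cdot 1$ rather than your direct $\alpha$) and swaps the endpoints of the homotopy.
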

\begin{proof}
  Using axiom of choice, one can pick for each $x \in X$ a canonical representative $g([x]) \in [x]$ of each equivalence class $[x] \in X_d$.
  By \refprop{char_fibrant_asm}, there is a recursive $\alpha$ witnessing the uniformity of $s : X^{\mathbb{I}} \to X$.
  One can then verify that the following function tracks $g : X_d \to X$:
  \[
    \lambda n. \alpha (n) \cdot 1
  \]

  Clearly, $\eta \circ g = \id_{X_d}$.
  We are to show that there is a homotopy $g \circ \eta \sim \id_X$.
  Intuitively, this is the case because $g([x]) \in [x]$, and thus $g([x])$ must be connected to $x$ by some path.
  The homotopy $\Theta$ is thus given by
  \[
    \begin{cases}
      \Theta(0, x) = x \\
      \Theta(1, x) = g([x])
    \end{cases}
  \]
  and is tracked by $\lambda \langle i, n \rangle .
  \alpha(n) \cdot 1$.
\end{proof}

\refprop{discr_reflection_htpy_eq} actually gives us a concrete description of the homotopy category of fibrant assemblies.
Since every assembly is homotopy-equivalent to a modest set (the discrete reflection), fibrant assemblies and fibrant modest sets are identified in $Ho(\Asm_f)$.
This immediately gives us:

\begin{prop}{asm_ho_mod}
  The homotopy category of fibrant assemblies $Ho(\Asm_f)$ is equivalent to the category of modest sets.
\end{prop}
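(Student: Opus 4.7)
The plan is to exhibit the inclusion $I : \Mod \hookrightarrow \Asm_f$ (which factors through $\Asm_f$ by \refcoro{discrete_impl_fibrant}) as an equivalence after passing to the homotopy category. That is, I will show that the induced functor $\bar I : \Mod \to Ho(\Asm_f)$ is essentially surjective and fully faithful.

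For essential surjectivity, let $X$ be a fibrant assembly. The discrete reflection $X_d$ is a discrete assembly, i.e. a modest set, and by \refprop{discr_reflection_htpy_eq} the unit $\eta_X : X \to X_d$ is a homotopy equivalence. Hence $\bar I(X_d) \cong X$ in $Ho(\Asm_f)$.

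For fullness, note that $\Mod$ is a full subcategory of $\Eff$ and in particular of $\Asm_f$, so every morphism $M \to N$ in $\Asm_f$ between modest sets is already a morphism in $\Mod$; this immediately gives surjectivity of $\Hom_{\Mod}(M,N) \to \Hom_{Ho(\Asm_f)}(M,N)$. For faithfulness, suppose $f, g : M \to N$ are morphisms of modest sets that become identified in $Ho(\Asm_f)$, witnessed by a homotopy $H : \mathbb{I} \times M \to N$ with $H \circ (\partial_0 \times M) = f$ and $H \circ (\partial_1 \times M) = g$. By the exponential adjunction, $H$ corresponds to a map $\overline H : M \to N^{\mathbb{I}}$. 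Since $N$ is discrete, \refprop{discrete_obj_no_nontriv_paths} tells us that every path $\mathbb{I} \to N$ factors through the terminal object, so the source and target maps $s, t : N^{\mathbb{I}} \to N$ coincide (equivalently, $N^{\mathbb{I}} \cong N$ via $r$). Composing $\overline H$ with these equal maps yields $f = s \circ \overline H = t \circ \overline H = g$.

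Combining essential surjectivity with full faithfulness gives the desired equivalence $\Mod \simeq Ho(\Asm_f)$. The main thing to be careful about is the last step: one must know that a homotopy in $\Eff$ between maps of discrete objects is forced to be trivial, which is exactly the content of \refprop{discrete_obj_no_nontriv_paths} applied fibrewise. Everything else is a routine assembly of previously established facts.
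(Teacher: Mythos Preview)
Your proof is correct and uses the same two ingredients as the paper: \refprop{discr_reflection_htpy_eq} for essential surjectivity and \refprop{discrete_obj_no_nontriv_paths} for the collapse of homotopy classes between modest sets. The paper organises this slightly differently, first arguing $Ho(\Asm_f)\simeq Ho(\Mod)$ via the discrete reflection and then $Ho(\Mod)\simeq\Mod$ from the absence of non-trivial homotopies, but your direct verification that $\Mod \to Ho(\Asm_f)$ is an equivalence amounts to the same thing.
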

\begin{proof}
  By \refprop{discr_reflection_htpy_eq}, every assembly $X$ is homotopy-equivalent to $X_d$.
  Furthermore, every modest set is fibrant by \refcoro{discrete_impl_fibrant}, and $X_d \in \Asm_f$.
  It is thus the case that $Ho(\Asm_f) \simeq Ho(\Mod)$.
  By \refprop{discrete_obj_no_nontriv_paths}, the category $\Mod$ has no non-trivial homotopies, therefore $Ho(\Mod) \simeq \Mod$.
  As a result, the homotopy category of fibrant assemblies is the category of modest sets.
\end{proof}

\subsection{Assemblies and the path object construction}
As an application of \refprop{char_fibrant_asm}, we would like to present a comparison with the path object construction of Van Oosten \cite{htpyEff}.
Van Oosten presented a path object category \cite{pathObjCat} structure on the effective topos.
In his setting, the object of paths in $(X, \sim)$ is represented not by an exponent $(X, \sim)^\mathbb{I}$, but by a different object $\mathsf{P}(X, \sim)$, which is built out of paths of ``various length'': $I_n$ defined below. Whilst such an object is generally different from $(X, \sim)^{\mathbb{I}}$, we can show that both constructions are equivalent if $X$ is a fibrant assembly.
We refer the reader to the original paper for the detailed definitions.

\begin{definition}
  \label{def:interval_n}
  An assembly $I_n$ is defined to be an underlying set $\{0, \dots, n\}$ with the realizability relation
  $E(i) = \{ i, i +1 \}$. Note that $I_1$ is isomorphic to $\mathbb{I}$.
\end{definition}

\begin{definition}[{\cite[Definition 1.3]{htpyEff}}]
  A map $\sigma : I_n \to I_m$ is order and endpoint preserving iff
  \begin{enumerate}
  \item $\sigma(i) \leq \sigma(j)$ whenever $i \leq j$
  \item $\sigma(0) = 0$ and $\sigma(n) = m$
  \end{enumerate}
\end{definition}

\begin{definition}[{\cite[Definition 1.5]{htpyEff}}]
  Given an assembly $(X, E)$ the path object $\mathsf{P}(X, E)$
  (denoted as $\mathsf{P}(X)$ when unambiguous) is an assembly

  \begin{enumerate}
  \item With the underlying set being a quotient of $\{ (n, f) \mid f : I_n \to X \}$ by the relation $\sim$, defined as:
    $(n, f) \sim (m, g)$ if one of the following conditions hold
    \begin{enumerate}
    \item $n \geq m$ and there is an order and endpoint preserving map
      $\sigma : I_n \to I_m$ such that $f = g \sigma$; or
    \item $m \geq n$ and there is an order and endpoint preserving map
      $\sigma : I_m \to I_n$ such that $g = f \sigma$.
    \end{enumerate}
  \item With the realizability relation given by
    $E_{\mathsf{P}(X)}([(n, f)]) = \bigcup_{(m, g) \in [(n, f)]} \{ \langle m, b \rangle \mid b \Vdash_{X^{I_m}} g \}$
  \end{enumerate}
\end{definition}

\begin{proposition}
  Suppose $X$ is a fibrant assembly. Then $\mathsf{P}(X)$ is homotopic to $X^{\mathbb{I}}$
\end{proposition}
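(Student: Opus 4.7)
The plan is to construct mutually inverse homotopy equivalences between $X^{\mathbb{I}}$ and $\mathsf{P}(X)$, using \refprop{char_fibrant_asm} crucially: since $X$ is a fibrant assembly, there is a recursive $\alpha$ such that whenever $n \in E_X(x)$, the value $\alpha(n)$ lies in $\bigcap_{y \in [x]} E_X(y)$. Define $\iota : X^{\mathbb{I}} \to \mathsf{P}(X)$ by $\iota(p) = [(1, p)]$, under the identification $I_1 \cong \mathbb{I}$; it is tracked by $b \mapsto \langle 1, b \rangle$. Define $\kappa : \mathsf{P}(X) \to X^{\mathbb{I}}$ by sending $[(n, f)]$ to the element of $X^{\mathbb{I}}$ whose underlying function takes $0$ to $f(0)$ and $1$ to $f(n)$; this is well-defined on equivalence classes because the relation $\sim$ preserves endpoints. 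For tracking, given a realizer $\langle m, b \rangle$ witnessing $(m, g) \sim (n, f)$ with $b$ realizing $g$, the value $b(0)$ realizes $g(0) = f(0)$, and $\alpha(b(0))$ then realizes both $f(0)$ and $f(n)$, which lie in the same path-connected component of $X$; hence $\lambda k.\alpha(b(0))$ tracks $\kappa([(n, f)])$ in $X^{\mathbb{I}}$. One verifies $\kappa \iota = 1_{X^{\mathbb{I}}}$ strictly on underlying functions, since any $p \in X^{\mathbb{I}}$ is determined by $p(0)$ and $p(1)$.

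The main step is constructing a homotopy $H : \mathbb{I} \times \mathsf{P}(X) \to \mathsf{P}(X)$ with $H(0, [(n, f)]) = [(n, f)]$ and $H(1, [(n, f)]) = [(1, \tilde f)]$, where $\tilde f$ is the element of $X^{\mathbb{I}}$ with endpoints $f(0), f(n)$. Because the two points of $\mathbb{I}$ share a realizer, the tracking code of $H$ must output a single element of $\omega$ that simultaneously realizes both boundary values. The candidate is $\lambda \langle i, \langle m, b \rangle \rangle. \langle m, \lambda k.\alpha(b(0)) \rangle$. It realizes $[(n, f)]$ via the original representative, since each $g(k)$ lies in the path-connected component of $g(0) = f(0)$ and is therefore realized by $\alpha(b(0))$; and it realizes $[(1, \tilde f)]$ via a length-$m$ step-function representative $h = \tilde f \circ \sigma$, for any order and endpoint preserving $\sigma : I_m \to I_1$ when $m \geq 1$ (the degenerate $m = 0$ case forces $f$ constant and is immediate), whose values lie in $\{f(0), f(n)\}$ and hence are all realized by $\alpha(b(0))$.

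The main obstacle is precisely this simultaneous realizability check, which demands unpacking the definitions of $E_{I_m}$ and $E_{\mathsf{P}(X)}$ and carefully picking, from a single tracking code, representatives of matching length $m$ in each of the two equivalence classes $[(n, f)]$ and $[(1, \tilde f)]$. Once the uniformity granted by \refprop{char_fibrant_asm} is invoked, the boundary conditions $H(0, \cdot) = 1_{\mathsf{P}(X)}$ and $H(1, \cdot) = \iota\kappa$ follow directly from the definitions, and the remaining checks that $H$ is a genuine assembly morphism are routine verifications of totality and compatibility with $E_{\mathbb{I} \times \mathsf{P}(X)}$.
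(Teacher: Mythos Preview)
Your proposal is correct and follows essentially the same route as the paper: an inclusion $X^{\mathbb{I}} \to \mathsf{P}(X)$ sending $p$ to $[(1,p)]$, a collapse map in the other direction, a strict identity one way, and a homotopy $\mathbb{I}\times\mathsf{P}(X)\to\mathsf{P}(X)$ the other way whose tracking code relies on the fibrancy witness $\alpha$ from \refprop{char_fibrant_asm}. The only difference is cosmetic: the paper defines the collapse map $\mathfrak{p}$ by iterated path composition, whereas you define $\kappa$ directly from the endpoints $f(0),f(n)$; since for assemblies an element of $X^{\mathbb{I}}$ is determined by its two values, these are the same morphism, and your description avoids invoking the groupoid structure. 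Your treatment of the common realizer for the homotopy (producing a length-$m$ representative of $[(1,\tilde f)]$ so that the single output $\langle m,\lambda k.\alpha(b\cdot 0)\rangle$ lands in both equivalence classes) is in fact more explicit than the paper's.
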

\begin{proof}
Given an $n$-path $[(n, q)] \in \mathsf{P}(X)$ one can, by repeated application of the composition, obtain a path $\mathfrak{p}(q) : \mathbb{I} \to X$ such that $\mathfrak{p}(q)(0) = q(0)$ and $\mathfrak{p}(q)(1) = q(n)$.
Furthermore, by \refprop{char_fibrant_asm}, there is a recursive $\alpha$ witnessing the uniformity of $s : X^{\mathbb{I}} \to X$.
That means that given a realizer $m \in E_X(q(0))$, the term $\lambda x.
\alpha(m)$ tracks $\mathfrak{p}(q)$.
One can obtain such $m$ using the realizer for the original $q$.

This defines a map $\mathfrak{p} : \mathsf{P}(X) \to X^{\mathbb{I}}$, for a fibrant assembly $X$.
One can check that a map $i : X^{\mathbb{I}} \to \mathsf{P}(X)$ that embeds $X^{\mathbb{I}}$ into the path object $\mathsf{P}(X)$ by sending $p : {\mathbb{I}} \to X$ to $[(1, p)] \in \mathsf{P}(X)$ is a right inverse of $\mathfrak{p}$.
We can show that it is also a left homotopy inverse of $\mathfrak{p}$.

We do so by defining a homotopy $\theta : {\mathbb{I}} \times \mathsf{P}(X) \to X$ as $\theta(0, [(n, q)]) = [(n, q)]$ and $\theta(1, [(n, q)]) = [(1, \mathfrak{p}(q))]$.
What remains is to provide a common realizer for $[(n, q)]$ and $[(1, \mathfrak{p}(q))]$ uniformly, given a realizer for $[(n, q)]$.
From a realizer of $[(n, q)]$ one can find a realizer $k \in E_X(q(0)) = E_X(\mathfrak{p}(q)(0))$.
Using the fibrancy of $X$ one can find a realizer $\alpha(k) \in \bigcap_{x' \in [x]} E_X(x')$.
Then $\lambda x.
\alpha(k)$ realizes both $q : I_n \to X$ and $\mathfrak{p}(q) : {\mathbb{I}} \to X$.
\end{proof}

\section{Conclusions and future research directions}
\label{sec:conclusion}


\subsection{Summary}

We have presented a way of obtaining a model structure on a full subcategory of a general topos, starting from a an interval object $\mathbb{I}$ dominance $\Sigma$ which contains the endpoint inclusion map $2 \to \mathbb{I}$.
The resulting model structure is sufficient for interpreting Martin-L\"of type theory with intensional identity types--which are interpreted with the help of the interval object.
The resulting model of type theory supports $\Pi$- and $\Sigma$-types, and functional extensionality holds for $\Pi$-types.

We have worked out the construction in the case of the effective topos $\Eff$.
For this model structure we have obtained some results characterizing contractible objects and maps, as well as fibrant assemblies.

\subsection{Future research questions} There remains several directions which can be further explored.
One of the most interesting questions would be extending the model category structure on $\mathcal{E}_f$ to the whole topos $\mathcal{E}$.
The mapping cocylinder construction in Section \ref{sec:wfs_triv_cof_fib} would not carry over directly, so one would have to find another way of constructing an (acyclic cofibrations, fibrations) weak factorisation system.

In this work we have decided to politely side-step the issues of coherence (as discussed in, e.g., 
\cite{Curien:LCCC}).
The authors expect that it is possible to resolve the coherence issues by considering algebraic counterparts of the homotopy-theoretic notions considered in this paper, such as algebraic weak factorisation systems (as done in the work of Gambino and Sattler \cite{UniformFibrations}) and algebraic model structures \cite{riehl2011algebraic}, but this issue should be investigated further.

In addition, there are several open questions regarding the concrete model $\Eff_f$ presented in this paper.
As already mentioned, it is unknown to the authors whether there is an object in $\Eff_f$ that has non-trivial higher homotopies.
It is clear, however, that such an object has to live outside the category of assemblies, as all assemblies are h-sets.
In general, is there a nice way of constructing higher inductive types in the model?
And if so, could the discrete reflection play the role of 0-truncation?
Extending the model to the whole of $\Eff$ might solve this problem, as we would be able to
consider fibrant replacements of objects with non-trivial homotopies.

Another interesting aspect of the effective topos is the existence of an internal small complete category of modest sets \cite{Hyland:smallCategory}, which is represented by a \emph{universal family of modest sets}.
Such a internal category which can be used as a type universe for interpreting second-order $\lambda$-calculus \cite{StreicherT:semttc}.
Unfortunately, by Example \ref{example:discrete_map_no_fib} this universal family cannot be a fibration.
The natural question to ask is then the following: does there exist a map $u$ which is a fibration, discrete, and has a fibrant codomain, such that every discrete map that is a fibration is a pullback of $u$?
And if so, is this universal fibration univalent?

Finally, it remains to be seen how much of the theory carries over to other realizability toposes.


\bibliographystyle{plain} \bibliography{modstr}

\begin{thebibliography}{10}

\bibitem{cubicalrealizability}
Carlo {Angiuli}, Robert {Harper}, and Todd {Wilson}.
\newblock {Computational Higher Type Theory I: Abstract Cubical Realizability}.
\newblock {\em ArXiv e-prints}, April 2016.

\bibitem{HtpyModelsId}
Steve Awodey and Michael~A. Warren.
\newblock Homotopy theoretic models of identity types.
\newblock {\em Math. Proc. Cambridge Philos. Soc.}, 146(1):45--55, 2009.

\bibitem{benno:pathcat_id_types}
Benno~{van den} Berg.
\newblock {Path categories and propositional identity types}.
\newblock {\em ArXiv e-prints}, April 2016.

\bibitem{typesWeakOmGpd}
Benno~{van den} Berg and Richard Garner.
\newblock Types are weak $\omega$-groupoids.
\newblock {\em Proceedings of the London Mathematical Society},
  102(2):370--394, 2011.

\bibitem{pathObjCat}
Benno~{van den} Berg and Richard Garner.
\newblock Topological and simplicial models of identity types.
\newblock {\em ACM Trans. Comput. Logic}, 13(1):3:1--3:44, January 2012.

\bibitem{bezem2014model}
Marc Bezem, Thierry Coquand, and Simon Huber.
\newblock A model of type theory in cubical sets.
\newblock In {\em 19th International Conference on Types for Proofs and
  Programs (TYPES 2013)}, volume~26, pages 107--128, 2014.

\bibitem{AWFS1}
John Bourke and Richard Garner.
\newblock Algebraic weak factorisation systems {I}: Accessible {AWFS}.
\newblock {\em Journal of Pure and Applied Algebra}, 220(1):108 -- 147, 2016.

\bibitem{Cisinski200243}
Denis-Charles Cisinski.
\newblock Th\'eories homotopiques dans les topos.
\newblock {\em Journal of Pure and Applied Algebra}, 174(1):43 -- 82, 2002.

\bibitem{CCHM}
Cyril Cohen, Thierry Coquand, Simon Huber, and Anders M{\"o}rtberg.
\newblock Cubical type theory: a constructive interpretation of the univalence
  axiom.
\newblock In {\em 21st International Conference on Types for Proofs and
  Programs (TYPES 2015)}, to appear.

\bibitem{Curien:LCCC}
Pierre-Louis Curien.
\newblock Substitution up to isomorphism.
\newblock {\em Fundam. Inf.}, 19(1-2):51--85, September 1993.

\bibitem{MscThesis}
Daniil Frumin.
\newblock {Weak Factorisation Systems in the Effective Topos}.
\newblock Master's thesis, University of Amsterdam, The Netherlands, 2016.
\newblock
  \url{https://www.illc.uva.nl/Research/Publications/Reports/MoL-2016-13.text.pdf}.

\bibitem{UniformFibrations}
Nicola {Gambino} and Christian {Sattler}.
\newblock {The Frobenius Condition, Right Properness, and Uniform Fibrations}.
\newblock {\em ArXiv e-prints}, September 2016.

\bibitem{hofmann1995extensional}
Martin Hofmann.
\newblock {\em Extensional concepts in intensional type theory}.
\newblock University of Edinburgh. College of Science and Engineering. School
  of Informatics., 1995.

\bibitem{hofmannstreicher98}
Martin Hofmann and Thomas Streicher.
\newblock The groupoid interpretation of type theory.
\newblock In {\em Twenty-five years of constructive type theory ({V}enice,
  1995)}, volume~36 of {\em Oxford Logic Guides}, pages 83--111. Oxford Univ.
  Press, New York, 1998.

\bibitem{H:eff}
Martin Hyland.
\newblock The effective topos.
\newblock In {\em The L.E.J. Brouwer Centenary Symposium}, pages 165--216.
  North-Holland, 1982.

\bibitem{Hyland:smallCategory}
Martin Hyland.
\newblock A small complete category.
\newblock {\em Annals of Pure and Applied Logic}, 40(2):135 -- 165, 1988.

\bibitem{kapulkinetal12}
Chris Kapulkin, Peter~LeFanu Lumsdaine, and Vladimir Voevodsky.
\newblock The simplicial model of univalent foundations.
\newblock arXiv:1211.2851, 2012.

\bibitem{KleeneSC:intint}
Stephen~Cole Kleene.
\newblock On the interpretation of intuitionistic number theory.
\newblock {\em Journal of Symbolic Logic}, 10:109--124, 1945.

\bibitem{lumsdaine10}
Peter~LeFanu Lumsdaine.
\newblock Weak {$\omega$}-categories from intensional type theory.
\newblock {\em Log. Methods Comput. Sci.}, 6(3):3:24, 19, 2010.

\bibitem{lumsdainewarren15}
Peter~LeFanu Lumsdaine and Michael~A. Warren.
\newblock The local universes model: an overlooked coherence construction for
  dependent type theories.
\newblock {\em ACM Trans. Comput. Log.}, 16(3):Art. 23, 31, 2015.

\bibitem{vanOosten:realiz}
Jaap~{van} Oosten.
\newblock {\em Realizability: An Introduction to Its Categorical Side}, volume
  152 of {\em Studies in Logic}.
\newblock Elsevier Science, San Diego, USA, 2008.

\bibitem{htpyEff}
Jaap~{van} Oosten.
\newblock A notion of homotopy for the effective topos.
\newblock {\em Mathematical Structures in Computer Science}, 25(05):1132--1146,
  2015.

\bibitem{pittsAxioms}
Ian Orton and Andrew~M. Pitts.
\newblock Axioms for modelling cubical type theory in a topos.
\newblock In J.-M. Talbot and L.~Regnier, editors, {\em 25th EACSL Annual
  Conference on Computer Science Logic ({CSL} 2016)}, volume~62 of {\em Leibniz
  International Proceedings in Informatics (LIPIcs)}, pages 24:1--24:19,
  Dagstuhl, Germany, 2016. Schloss Dagstuhl--Leibniz-Zentrum f\"ur Informatik.

\bibitem{riehl2011algebraic}
Emily Riehl.
\newblock Algebraic model structures.
\newblock {\em New York J. Math}, 17:173--231, 2011.

\bibitem{Riehl:cathtpy}
Emily Riehl.
\newblock {\em Categorical Homotopy Theory}.
\newblock Cambridge University Press, 2014.

\bibitem{rosolini}
Giuseppe Rosolini.
\newblock {\em Continuity and effectiveness in topoi}.
\newblock PhD thesis, Carnegie Mellon University, 1986.
\newblock
  \url{ftp://ftp.disi.unige.it/pub/person/RosoliniG/papers/coneit.ps.gz}.

\bibitem{Seely:LCCC}
Robert A.~G. Seely.
\newblock Locally cartesian closed categories and type theory.
\newblock In {\em Mathematical proceedings of the Cambridge philosophical
  society}, volume~95, pages 33--48. Cambridge Univ Press, 1984.

\bibitem{shulman13a}
Michael Shulman.
\newblock Univalence for inverse diagrams and homotopy canonicity.
\newblock {\em Mathematical Structures in Computer Science}, 25(05):1203--1277,
  2015.

\bibitem{StreicherT:semttc}
Thomas Streicher.
\newblock {\em Semantics of Type Theory, Correctness, Completeness and
  Independence Results}.
\newblock Progress in Theoretical Computer Science. Birkh{\"a}user, 1991.

\bibitem{Streicher:REAL}
Thomas Streicher.
\newblock Realizability.
\newblock Lecture notes. Retrieved from
  \url{http://www.mathematik.tu-darmstadt.de/~streicher/REAL/REAL.pdf},
  2007/2008.

\bibitem{hottbook}
{The Univalent Foundations Program}.
\newblock {\em Homotopy Type Theory: Univalent Foundations of Mathematics}.
\newblock \url{https://homotopytypetheory.org/book}, Institute for Advanced
  Study, 2013.

\end{thebibliography}

\end{document}